\newtheorem{theorem}{Theorem}[section]
\numberwithin{theorem}{section}
\newtheorem{corollary}[theorem]{Corollary}
\newtheorem{lemma}[theorem]{Lemma}
\newtheorem{definition}[theorem]{Definition}
\theoremstyle{definition}
\numberwithin{equation}{section}
\newcommand{\G}{\mathbb{G}}
\newcommand{\R}{\mathbb{R}}
\newcommand{\EEE}{\mathcal{E}}
\newcommand{\WW}{{\bf W}}
\newcommand{\GG}{{\bf G}}
\newcommand{\I}{\mathbb{I}}
\newcommand{\Ss} {{\mathcal S}}
\newcommand{\PPP} {{\mathbb P}}
\newcommand{\RRR} {{\mathcal R}}
\newcommand{\WWW}{\mathcal{W}}
\newcommand{\QQ} {{\mathcal Q}}
\newcommand{\UU}{{\bf U}}
\newcommand{\VV}{{\bf V}}
\newcommand{\FFF}{{\bf F}}
\newcommand{\ZZ}{{\bf Z}}
\newcommand{\JJ}{{\bf J}}
\newcommand{\HHH}{{\bf H}}
\def\GGG{\mathcal G}
\def\hhh{\mathrm{h}}
\def\reg{\mathrm{reg}}
\def\cut{\mathrm{cut}}
\def\tw{\mathrm{tv}}
\def \P {{\mathcal P}}
\def \_reg {\rightarrow_{\bf reg}}
\def\maxdeg/{\Delta}
\newcommand{\du}{\mathrm{d}}
\begin{document}

\title{\textbf{Explicit Bounds for Nondeterministically Testable Hypergraph Parameters}}
\author{Marek Karpinski\thanks{Dept. of Computer Science and the Hausdroff Center for Mathematics, University of Bonn. Supported in part by DFG grants, the Hausdorff grant EXC59-1/2. E-mail: \textrm{marek@cs.uni-bonn.de}}
	\and 
	Roland Mark\'o\thanks{Hausdorff Center for Mathematics, University of Bonn. Supported in part by a Hausdorff scholarship. E-mail: \textrm{roland.marko@hcm.uni-bonn.de}}}

\date{}
\maketitle
\begin{abstract}
	In this note we give a new effective proof method for the equivalence of the notions of testability and nondeterministic testability for uniform hypergraph parameters. We provide the first effective upper bound on the sample complexity of any nondeterministically testable $r$-uniform hypergraph parameter as a function of the sample complexity of its witness parameter for arbitrary $r$. The dependence is of the form of an exponential tower function with the height linear in $r$. Our argument depends crucially on the new upper bounds for the $r$-cut norm of sampled $r$-uniform hypergraphs. We employ also our approach for some other restricted classes of hypergraph parameters, and present some applications.
	{\normalsize }
\end{abstract}

\section{Introduction}

 The topic of property testing for combinatorial structures has gained considerable attention in recent years. In this setting in the case of graphs the goal is for a given property that is invariant under relabeling nodes to separate via sampling the set of graphs that satisfy it from those that are far from having the property. The development in this direction resulted in a number of randomized sub-linear time algorithms for the corresponding decision problems, see \cite{AFKSz}, \cite{AVKK2}, \cite{GGR}, for the background in approximation theory of NP-hard problems for dense structures, see \cite{AKK2}.
 
 Several attempts were made to characterize the properties in terms of the sample size needed for carrying out the above task, an important class comprises of those that admit a sample size that is independent from the size of the input instance, we call these properties testable, in other works they are also referred to as strongly testable. One particular family was introduced by \citet{LV} that consists of properties whose testability can be certified by some certain edge colorings, they called these nondeterministically (ND in short) testable properties, and are the main subject of the current note. It was showed by the authors of \cite{LV} that ND-testability is equivalent to testability for graphs, the question regarding parameters instead of properties was also discussed. Subsequently, a constructive proof was given for the above equivalence by \citet{GS}.
 The first treatment of parameters and properties of $r$-uniform hypergraphs ($r$-graphs in short) of higher order was carried out in \cite{KM3}, in that paper a proof was given for our main result \Cref{mainthm} below that relied in part on non-effective methods by means of the machinery developed in \cite{ESz} to describe the limit behavior of sequences of uniform hypergraphs.  

 The current note is based on the framework and terminology of \cite{KM3} by the authors, we will repeatedly refer to certain parts of \cite{KM3} for details, but also focus on delivering an accurate picture  by presenting the main steps here. 
 
 We proceed by providing the necessary formal definitions of the parameter testability in the dense hypergraph model. 
 
 \begin{definition}\label{ch6:def.partest}
 	An $r$-graph parameter $f$ is testable if for any $\varepsilon>0$ there exists a positive integer $q_f(\varepsilon)$ such that for any $q\geq q_f(\varepsilon)$ and simple $r$-graph $G$ with at least $q$ nodes we have
 	\begin{align*}
 	\PPP(|f(G)-f(\G(q,G))| >\varepsilon) < \varepsilon,
 	\end{align*}
 	where $\G(q,G)$ denotes the induced subgraph on a subset $S\subset V(G)$ of size $q$ chosen uniformly at random.
 	The infimum of the functions $q_f$ satisfying the above inequality is called the sample complexity of $f$.
 	The testability of parameters of $k$-colored $r$-graphs is defined analogously.
 \end{definition}
 
 One may relax the conditions of \Cref{ch6:def.partest} to introduce a certain version of nondeterministic testability. The definition below was first formulated in \cite{LV}.
%
%

 \begin{definition}\label{ch6:def.ndpartest}
 	An $r$-graph parameter $f$ is non-deterministically testable (ND-testable) if there exist an integer $k$ and a testable $2k$-colored $r$-graph parameter $g$ called witness such that for any simple graph $G$ the value $f(G)=\max_{\GG} g(\GG)$ where the maximum goes over the set of $k$-colorings of $G$ (see \Cref{sec.def} for the definition of a $k$-coloring). 
 \end{definition}
 
 Our main result is the following. It was proved the first time in \cite{KM3} without any upper bound for the function $q_f$ in general, prior to that the case of $r=2$ was resolved by non-effective (\cite{LV}), and effective (\cite{GS},\cite{KM2}) methods.
 
 \begin{theorem}\label{mainthm}
 	Every non-deterministically testable $r$-graph parameter $f$ is testable. If $g$ is the parameter of $k$-edge-colored $r$-graphs that certifies the testability of $f$, then we have $q_f(\epsilon) \leq \exp^{(4(r-1)+1)}(c_{r,k} q_g(\epsilon)/\epsilon)$ for some constant $c_{r,k}>0$ depending only on $r$ and $k$, but not on $f$ or $g$. Here $\exp^{(t)}$ denotes the $t$-fold iteration of the exponential function for $t\geq1$, and $\exp^{(0)}$ is the identity function.
 \end{theorem}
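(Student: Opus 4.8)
The plan is to prove that the value of $f$ is, up to error $\varepsilon$, determined by the bounded-complexity weak-regularity structure of the input $r$-graph, and that this structure is preserved under sampling in the $r$-cut norm; chaining these two facts with a quantitative continuity property of the witness then yields testability with the stated tower-type bound. I would proceed in four steps.

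\textbf{Step 1: quantitative continuity of the witness.} From the testability of the $2k$-colored witness $g$ I would derive a modulus of continuity in the $r$-cut distance $\delta_{\cut,r}$: there is $\eta_g(\varepsilon)>0$ with $1/\eta_g(\varepsilon)=\poly\big(q_g(\varepsilon/3),1/\varepsilon\big)$ such that $\delta_{\cut,r}(\GG_1,\GG_2)\le\eta_g(\varepsilon)$ implies $|g(\GG_1)-g(\GG_2)|\le\varepsilon$ for all $2k$-colored $r$-graphs. The argument is the usual coupling: by testability each $g(\GG_i)$ concentrates around $g(\G(q,\GG_i))$ for $q=q_g(\varepsilon/3)$, and if the two colored graphs are $\delta_{\cut,r}$-close then by the $r$-graph sampling (counting) lemma their $q$-sample distributions are close in total variation, so $g(\GG_1)$ and $g(\GG_2)$ cannot differ by more than $\varepsilon$.

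\textbf{Step 2: regularity reduction of $f$.} Using the weak regularity lemma for $r$-graphs in the $r$-cut norm I would show that there is $m=m(\varepsilon)$ — a tower in $r$ of height linear in $r$, evaluated at $\poly(q_g(\cdot),1/\varepsilon)$ — so that for every $r$-graph $G$ there is a partition $\P$ of $V(G)$ into at most $m$ classes with
\[ \Big|\, f(G)-\max_{\beta\in\mathcal N}\, g\big(H(G_\P,\beta)\big)\,\Big|\le\varepsilon , \]
where $G_\P$ is the step-hypergraphon of $G$ on $\P$, $\mathcal N$ is a finite $\varepsilon$-net of color-density profiles over the class-tuples of $\P$, and $H(G_\P,\beta)$ is the associated $2k$-colored step object. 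For the ``$\le$'' half, take a $k$-coloring $\GG^{*}$ of $G$ with $g(\GG^{*})\ge f(G)-\varepsilon/3$, apply weak regularity simultaneously to $G$ and to the color classes of $\GG^{*}$ to obtain a common partition $\P$ on which $\GG^{*}$ is $\eta_g(\varepsilon/3)$-close in $r$-cut norm to its step version $\GG^{*}_\P=H(G_\P,\beta^{*})$, then use Step 1 and round $\beta^{*}$ into $\mathcal N$. For the ``$\ge$'' half, realize any $\beta\in\mathcal N$ by a random $k$-coloring of $G$ whose step version is exactly $H(G_\P,\beta)$ and which concentrates to it in the $r$-cut norm when $|V(G)|$ is large, and apply Step 1 once more; the regime where $|V(G)|$ is only mildly larger than $q$ is treated separately (for $|V(G)|=q$ the theorem is trivial).

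\textbf{Step 3: sampling preserves the structure, and conclusion.} I would invoke the paper's new upper bounds on the $r$-cut norm of sampled $r$-graphs: for $q\ge Q_r(\xi)$, a tower in $1/\xi$ of height linear in $r$, one has $\delta_{\cut,r}(W_{\G(q,G)},W_G)\le\xi$ with probability at least $1-\xi$. Since both the best $m$-step approximation and the map $U\mapsto \max_{\beta}g(H(U,\beta))$ are Lipschitz-type in $\delta_{\cut,r}$ (the latter by Step 1), the regularity reduction of $\G(q,G)$ then lies within $\varepsilon$ of that of $G$. Combining the three estimates, with $\xi$ chosen small relative to $\eta_g(\varepsilon)$ and $m(\varepsilon)$, gives $|f(G)-f(\G(q,G))|\le 3\varepsilon$ with probability $>1-3\varepsilon$ once $q\ge q_f(3\varepsilon):=\max\{Q_r(\xi),m(\varepsilon),\dots\}$. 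Tracking the composition of the regularity tower (height $\approx 4(r-1)$ in $r$), the sampling tower, and the polynomial $q_g$-factors yields the claimed $q_f(\epsilon)\le\exp^{(4(r-1)+1)}(c_{r,k}q_g(\epsilon)/\epsilon)$.

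\textbf{Main obstacle.} I expect the crux to be the interplay, in Steps 2 and 3, between the weak regularity lemma and sampling concentration \emph{in the $r$-cut norm} for $r\ge 3$: unlike the graph cut norm, the $r$-cut norm is far from rigid and sampled $r$-graphs concentrate in it only weakly, so both the ``$\ge$'' direction of Step 2 (lifting a profile of the sample to a genuine $k$-coloring of $G$) and the transfer in Step 3 are delicate — this is exactly where the paper's new $r$-cut-norm sampling bounds are indispensable, and where the tower height acquires its linear dependence on $r$. A secondary technical nuisance is the finite-versus-limit bookkeeping when $|V(G)|$ is only moderately larger than $q$, which one resolves either by routing everything through $r$-hypergraphons or by a short direct comparison.
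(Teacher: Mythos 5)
Your outline captures the right ingredients — a modulus of continuity for $g$ via a counting/coupling lemma, a regularity-type reduction, and sampling concentration in the $r$-cut norm — but the proposed architecture does not match the genuine difficulty for $r\geq 3$ and has a gap exactly there.

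The critical issue is Step~2. You take $\P$ to be a partition of $V(G)$ and reduce $f(G)$ to a maximum over a finite net of colour-density profiles over class-tuples of $\P$. For $r=2$ this is indeed the standard constructive route. But for $r\geq 3$ the weak regularity lemma relevant to the $r$-cut norm, \Cref{ch6:sregkhyper}, produces an $(r-1)$-symmetric partition of $[0,1]^{\hhh([r-1])}$, i.e.\ of the $(2^{r-1}-1)$-dimensional face space, \emph{not} of the vertex set. A vertex partition only controls the naive $r$-cut norm (\Cref{ch6:def.cutstarnorm}), and for $r\geq 3$ this is strictly weaker: it does not govern densities of non-linear sub-hypergraphs, so the coupling in your Step~1 would fail once one passes to the approximating step object. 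The paper's Section~\ref{sec.linear} on linearly ND-testable parameters is precisely the regime where the vertex-partition/naive-graphon route does work, and it yields the much better, $r$-independent bound $\exp^{(3)}$; the fact that your sketch naturally fits that section is a sign it is really the $r=2$/linear argument, not the general one.

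Because the regularity structure lives one dimension down, lifting a colouring of the sampled step function back to a colouring of the original graphon is itself a coloured transfer problem at level $r-1$. This forces the inductive structure of \Cref{mainlemma}: at level $r$ one regularizes both the coloured sample and the original (\Cref{ch6:sregkhyper}), samples the step function, colours the sampled step function (\Cref{ch6:hypercoloring}), and then invokes the $(r-1)$-case of the same lemma to transport the colouring from $\WW_2$ to $\WW_1$; the effective cut-norm sampling bound (\Cref{ch6:lemmacuteff}, \Cref{ch6:hyperregpres2}) replaces the earlier ultralimit step. Each level contributes four iterated exponentials (two nested regularity calls and their compositions), compounded over $r-1$ levels, which is the actual source of the height-$(4(r-1)+1)$ tower. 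Your proposal asserts this height but has no recursion that could generate it; a single weak-regularity call plus sampling would give height $O(1)$. The ``$\geq$'' half of your Step~2 (realizing a profile by a random colouring and concentrating in $\|\cdot\|_{\square,r}$) and the transfer in Step~3 both break at the same point once $\P$ is a vertex partition. In short, \Cref{ch6:hypercountcor} and the new sampling bound \Cref{ch6:hyperregpres2} are the correct tools, but the missing idea is that the colouring transfer must be carried out recursively over the hypergraphon levels, which is the content of \Cref{mainlemma}.
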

 
 This note is organized as follows. We proceed by providing the necessarily notation in \Cref{sec.def}, in the subsequent \Cref{sec.efflemma} we state and prove the main technical development that enables us to discard the non-effective tools featured in the proof of the main result of \cite{KM3}. In \Cref{sec.main} we outline the  argument of \cite{KM3} that is slightly modified in order to adapt the framework to the concepts of the previous section, and present the proof of the main result \Cref{mainthm}. We illustrate a special case of ND-testable parameters, where an improvement on the sample complexity dependence is possible in \Cref{sec.linear}. In the subsequent \Cref{sec.appl} some applications of the main result are shown,  that are followed by a discussion together with the description of directions of possible further research in \Cref{sec.fur}.

\section{Preliminaries}\label{sec.def}


\subsection*{Hypergraphs and colored hypergraphs}

Simple $r$-uniform hypergraphs, $r$-graphs in short,  on $n$ vertices  forming the family $\GGG^r_n$ are subsets $G$ of ${[n] \choose r}$, the size of such a $G$ is $n$, and the elements of ${[n] \choose r}$ are $r$-edges. 
Let $k$ be a positive integer, and let $\GGG_n^{r,k}$ denote the set of $k$-colored $r$-graphs of size $n$, that are partitions $\GG=(G^\alpha)_{\alpha \in [k]}$ of ${[n] \choose r}$ into $k$ classes, we say that color $\alpha$ assigned to  $e$  ($\GG(e)=\alpha$) whenever $e\in G^\alpha$. In this sense simple $r$-graphs are regarded as $2$-colored. 
Additionally we have to introduce the special color $\iota$ for loop edges that are multisets of $[n]$ with cardinality $r$ having at least one element that has a multiplicity at least $2$. For any finite set $C$ the term $C$-colored graph is defined analogously to the $k$-colored case.

A {\it $k$-coloring} of a  $t$-colored $r$-graph $\GG=(G^\alpha)_{\alpha \in [t]}$ is a $tk$-colored $r$-graph $\hat \GG=(G^{(\alpha,\beta)})_{\alpha \in [t], \beta\in [k] }$ with colors from the set $[t] \times [k]$, where each of the original color classes indexed by $\alpha \in [t]$ can be retrieved by taking the union of the new classes corresponding to $(\alpha,\beta)$ over all $\beta \in [k]$, that is $G^\alpha=\cup_{\beta \in [k]}G^{(\alpha,\beta)}$. This last operation is called {\it $k$-discoloring} of a $[t] \times [k]$-colored graph, we denote it by $[\hat \GG,k]=\GG$.
We will sometimes write $tk$-colored for $[t] \times [k]$-colored graphs when it is clear from the context what we mean. 

Let $q \geq 1$ and $\GG \in \GGG_n^{r,k}$, then  $\G(q,\GG)$ denotes
the  random $r$-graph on $q$ vertices that is obtained by picking a subset $S$ of $[n]$ of cardinality $q$ uniformly at random and taking the induced subgraph $\GG[S]$. For any $\FFF \in \GGG_q^{r,k}$ and $\GG \in \GGG^{r,k}$ the $\FFF$-density of $\GG$ is defined as $t(\FFF,\GG)=\PPP(\FFF = \G(q, \GG))$.

\subsection*{Graphons}

Next we provide the description of the continuous generalization of $r$-graphs. We require some basic notation from the dense graph limit theory, and refer to \citet{L} for an extensive overview of recent developments in this topic.

For  a finite set $S$, let $\hhh(S)$ denote the set of nonempty subsets of $S$, 
and $\hhh(S,m)$ the set of nonempty subsets of $S$ of cardinality at most $m$. A $2^r-1$-dimensional real vector $x_{\hhh(S)}$ denotes $(x_{T_1}, \dots, x_{{T_{2^r-1}}})$, where $T_1, \dots, T_{2^r-1}$ is a fixed ordering of the nonempty subsets of $S$ with $T_{2^r-1}=S$, for a permutation $\pi$ of the elements of $S$ the vector  $x_{\pi(\hhh(S))}$ means $(x_{\pi'(T_1)}, \dots, x_{\pi'({T_{2^r-1}})})$, where $\pi'$ is the action of $\pi$ permuting the subsets of $S$. 

Let the $r$-kernel space $\mathcal{W}_0^r$ denote the space of the bounded measurable functions of the form $W\colon [0,1]^{\hhh([r],r-1)} \to \R$, and the subspace $\mathcal{W}^r$ of $\mathcal{W}_0^r$ the symmetric $r$-kernels that are invariant under coordinate permutations induced by $\pi \in S_r$, that is $W(x_{\hhh([r],r-1)})=W(x_{\pi(\hhh([r],r-1))})$ for each $\pi \in S_r$. We will refer to this invariance in the paper both for $r$-kernels and for measurable subsets of $[0,1]^{\hhh([r])}$ as {\it $r$-symmetric}.
Assume that the functions $W\in \mathcal{W}^r_{I}$ take their values in the interval $I$, for $I=[0,1]$ we call these special symmetric $r$-kernels  {\it{$r$-graphons}.} In what follows, $\lambda$ always denotes the usual Lebesgue measure in $\R^d$, where $d$ is everywhere clear from the context.

Analogously to the graph case we define the space of {\it $k$-colored $r$-graphons} $\mathcal W^{r,k}$ whose elements are referred to as $\WW=(W^\alpha)_{\alpha \in [k]}$ with each of the $W^\alpha$ components being an $r$-graphon. The special color $\iota$ that stands for the absence of colors has to be also employed in this setting as rectangles on the diagonal correspond to loop edges, see below for the case when we represent a $k$-colored $r$-graph as a graphon. The corresponding $r$-graphon $W^\iota$ is $\{0,1\}$-valued. Furthermore, $\WW$ has to satisfy $\sum_{\alpha \in [k]}W^\alpha(x)=1$ everywhere on $[0,1]^{\hhh([r],r-1)}$. For  $x \in [0,1]^{\hhh([r])}$ the expression $\WW(x)$ denotes the color at $x$, we have $\WW(x)=\alpha$ whenever  $\sum_{i=1}^{\alpha-1} W^i(x_{\hhh([r],r-1)}) \leq x_{[r]} \leq \sum_{i=1}^{\alpha} W^i(x_{\hhh([r],r-1)})$. 

Similar to the  discrete case, a {\it $k$-coloring} of $\WW \in \mathcal W^{r,k}$ is a $tk$-colored $r$-graphon $\hat \WW=(W^{(\alpha,\beta)})_{\alpha \in [t], \beta\in [k] }$ with colors from the set $[t] \times [k]$ so that $\sum_{\alpha \in [t], \beta \in [k]} W^{(\alpha, \beta)}(x)=W^\alpha(x)$ for each $x \in [0,1]^{\hhh([r],r-1)}$ and $\alpha \in [t]$. The $k$-discoloring $[\hat \WW,k]$ of $\hat \WW$ and the term $C$-colored graphon is defined analogously, and simple $r$-graphons are treated as $2$-colored. 

For $q \geq 1$ and $\WW \in \mathcal W^{r,k}$ the random $[k]$-colored $r$-graph $\G(q, \WW)$ is generated as follows. The vertex set of $\G(q, \WW)$ is $[q]$, first we have to pick uniformly a random point $(X_S)_{S \in \hhh([q], r-1)} \in [0,1]^{\hhh([q], r-1)}$, then conditioned on this choice we conduct independent trials to determine the color of each edge $e \in {[q] \choose r}$ with the distribution given by $\PPP_e(\G(q, \WW)(e) = \alpha) = W^\alpha(X_{\hhh(e,r-1)})$ corresponding to $e$. Recall that $\iota$ is a special color which we want to avoid in most cases during the sampling process, therefore we will highlight the conditions that have to be imposed on the above random variables so that $\G(q, \WW) \in \GGG^{r,k}$.

For $\FFF \in \GGG_q^{r,k}$, the $\FFF$-density of $\WW$ is defined as $t(\FFF,\WW)=\PPP(\FFF = \G(q, \WW))$, which can be written following the above definition of the sampled random graph as 
\begin{align*}
	t(\FFF,\WW)= \int_{ [0,1]^{\hhh([q], r-1)} } \prod_{e \in {[q] \choose r}} W^{\FFF(e)}(x_{\hhh(e,r-1)}) \du \lambda(x_{\hhh([q], r-1)}).
\end{align*}


We can associate to each $\GG \in \GGG_n^{r, k}$ an element $\WW_\GG \in \mathcal W^{r,[k]\cup\{\iota\}}$ by subdividing the unit cube $[0,1]^{\hhh([r],1)}$ into $n^r$ small cubes the natural way and defining the function $W' : [0,1]^{\hhh([r],1)} \to [k]$ that takes the value $\GG(\{i_1, \dots, i_r\})$ on $[\frac{i_1-1}{n}, \frac{i_1}{n}] \times \dots \times [\frac{i_r-1}{n}, \frac{i_r}{n}]$ for distinct $i_1, \dots, i_r$, and the value $\iota$ on the remaining diagonal cubes, we will call such functions {\it naive $r$-graphons}. Then set $(\WW_\GG)^\alpha(x_{\hhh([r],r-1)}) = \I(W'(p_{\hhh([r],1)}(x_{\hhh([r],r-1)}))=\alpha)$ for each $\alpha \in [k] \cup\{\iota\}$, where $p_{\hhh([r],1)}$ is the projection to the suitable coordinates.
The special color $\iota$ here stands for the absence of colors has to be employed in this setting as rectangles on the diagonal correspond to loop edges. The corresponding $r$-graphon $W^\iota$ is $\{0,1\}$-valued. Note that 
\begin{align}\label{ch6:eq22}
	|t(\FFF,\GG)-t(\FFF,\WW_\GG)|\leq \frac{{q \choose 2}}{n - {q \choose 2}}
\end{align}
for each $\FFF \in \GGG_q^{r,k}$, hence the representation as naive graphons is compatible in the sense that $\lim_{n \to \infty} t(\FFF,\GG_n)=\lim_{n \to \infty} t(\FFF,\WW_{\GG_n})$ for any sequence $\{\GG_n\}_{n=1}^\infty$ with $|V(\GG_n)|$ tending to infinity.


\subsection*{Norms and distances}

The definitions of the relevant norms is given next.

\begin{definition} \label{ch6:def.cutnorm}
	
%
%
	The cut norm of an $r$-kernel $W$ is
	\begin{align*}
		\|W\|_{\square,r}=\sup_{\substack{S_{i} \subset [0,1]^{\hhh([r-1])} \\ i \in [r]}}
		\left|\int_{ \cap_{i \in [r]} p_{[r]\setminus \{i\}}^{-1}(S_i)}W(x_{\hhh([r],r-1)}) \du \lambda(x_{\hhh([r],r-1)}) \right|,
	\end{align*}
	where the supremum is taken over $(r-1)$-symmetric measurable sets $S_i$, and $p_e$ is the natural projection from $[0,1]^{\hhh([r],r-1)}$ onto $[0,1]^{\hhh(e)}$. 
	Furthermore, for an $(r-1)$-symmetric partition $\P=(P_i)_{i=1}^t$ of $[0,1]^{\hhh([r-1])}$ the cut-$\P$-norm of an $r$-kernel is defined by the  formula
	\begin{align*}
		\|W\|_{\square,r,\P}=\sup_{\substack{S_{i} \subset [0,1]^{\hhh([r-1])} \\ i \in [r]}}
		\sum_{j_1, \dots, j_r=1}^t \left|\int_{\cap_{i \in [r]} p_{[r]\setminus \{i\}}^{-1}(S_i \cap P_{j_i} ) } W(x_{\hhh([r],r-1)}) \du \lambda(x_{\hhh([r],r-1)}) \right|,
	\end{align*}
	where the supremum is taken over sets $S_i$ that satisfy the usual symmetries.
\end{definition}
We remark that with the above definition it is also true that
\begin{align*}
	\|W\|_{\square,r}=\sup_{f_1, \dots, f_r \colon \mathcal [0,1]^{\hhh([r-1])} \to [0,1] }
	\left|\int\limits_{[0,1]^{\hhh([r],r-1)}} \prod_{i=1}^r f_i(p_{[r]\setminus \{i\}}(x) ) W(x_{\hhh([r],r-1)}) \du \lambda(x_{\hhh([r],r-1)}) \right|,
\end{align*}
where the supremum goes over $(r-1)$-symmetric functions $f_i$, and similarly for any  $(r-1)$-symmetric partition $\P=(P_i)_{i=1}^t$ of $[0,1]^{\hhh([r-1])}$ we have 
\begin{align*}
	\|W\|_{\square,r,\P}=\sup_{f_1, \dots, f_r \colon \mathcal [0,1]^{\hhh([r-1])} \to [0,1]}
	\sum_{j_1, \dots, j_r=1}^t  \left|\int\limits_{[0,1]^{\hhh([r],r-1)}} \prod_{i=1}^r f_i(x_{\hhh([r]\setminus \{i\})})\I_{P_{j_i}} (x_{\hhh([r]\setminus \{i\})}) W(x_{\hhh([r],r-1)}) \du \lambda(x) \right|.
\end{align*}
A relaxed variant of the $r$-cut norm is 
\begin{align*}
\|W\|_{\boxplus,r}=\sup_{f_1, \dots, f_r\colon \mathcal [0,1]^{\hhh([r-1])}\to [-1,1]}
\left|\int\limits_{[0,1]^{\hhh([r],r-1)}} \prod_{i=1}^r f_i(x_{\hhh([r]\setminus \{i\})}) W(x_{\hhh([r],r-1)}) \du \lambda(x_{\hhh([r],r-1)}) \right|.
\end{align*}
It is straightforward that $2^{-r}\|W\|_{\boxplus,r} \leq \|W\|_{\square,r} \leq \|W\|_{\boxplus,r}$ for every $r$ and $r$-kernel $W$. Note that for $r=2$ we have $\|W\|_{\boxplus,2}=\|T_W\|_{\infty \to 1}$, where $T_W$ is the integral operator from $L^{\infty}([0,1])$ to $L^1([0,1])$ with the kernel $W$.

We also mention that in several previous papers, see e.g. \cite{AVKK2}, the cut norm for $r$-arrays denotes a term that is significantly different from the one in \Cref{ch6:def.cutnorm} and is not suitable for our present purposes.
The above norms give rise to a distance between $r$-graphons, and analogously for $r$-graphs.

\begin{definition}\label{defdist}
	For two $k$-colored $r$-graphons $\UU=(U^\alpha)_{\alpha \in [k]}$ and $\WW=(W^\alpha)_{\alpha \in [k]}$, their cut distance is defined as 
	\begin{equation*}
		d_{\square,r}(\UU,\WW) = \sum_{\alpha=1}^k \|U^\alpha-W^\alpha\|_{\square,r},
	\end{equation*}
	and their cut-$\P$-distance as
	\begin{equation*}
		d_{\square,r,\P}(\UU,\WW) = \sum_{\alpha=1}^k \|U^\alpha-W^\alpha\|_{\square,r,\P}.
	\end{equation*}
	For two $k$-colored $r$-graphs $\GG=(G^\alpha)_{\alpha \in [k]}$ and $\HHH=(H^\alpha)_{\alpha \in [k]}$ their corresponding distances are defined as 
	\begin{align*}
		d_{\square,r}(\GG,\HHH)=d_{\square,r}(\WW_\GG,\WW_\HHH),
	\end{align*}
	and
	\begin{align*}
		d_{\square,r,\P}(\GG,\HHH)=d_{\square,r,\P}(\WW_\GG,\WW_\HHH).
	\end{align*}
	Distances between an $r$-graph and an $r$-graphon, as well as  in the case of $r$-kernels, are defined analogously.
\end{definition}

%
%

A generalization of the notion of a step function in the case of $2$-graphons (see \cite{BCL}) to the situation where we deal with $r$-graphons is given next. For a partition $\P$ the number of its classes is denoted by  $t_\P$.

\begin{definition}
	We call an $k$-colored $r$-graphon $\WW$ with $r \geq l$ an $(r,l)$-step function if there exist  positive integers $t_l, t_{l+1}, \dots, t_r=k$, an $l$-symmetric partition $\P=(P_1, \dots, P_{t_l})$ of $[0,1]^{\hhh([l])}$, and real arrays $A^\alpha_s: [t_{s-1}]^{\hhh([s],s-1)} \to [0,1]$  with $\alpha \in [t_s]$ for $l+1 \leq s \leq r$ such that $\sum_{\alpha \in [t_s]} A^\alpha_s(i_{\hhh([s],s-1)})=1$ for any choice of $i_{\hhh([s],s-1)}$ and for $s \leq r$ so that $W^\alpha$ for $\alpha \in [k]$ is of the following form for each $\alpha \in [k]$.
	\begin{align*}
		& W^\alpha(x_{\hhh([r],r-1)}) =  \sum_{\substack{i_S=1 \\ S \subset [r], l \leq |S|}}^{t_{|S|}} A^\alpha_r(i_{\hhh([r],r-1)}) \\ & \qquad\prod_{S \in {[r] \choose l}} \I_{P_{i_S}} (x_{\hhh(S)}) \prod_{\substack{S \subset [r] \\ l+1 \leq |S| < r}} \I( \sum_{j=1}^{i_{S}-1}A^j_{|S|}(i_{\hhh(S, |S|-1)})\leq x_S \leq \sum_{j=1}^{i_{S}}A^j_{|S|}(i_{\hhh(S, |S|-1)})).
	\end{align*}
	We refer to the partition $\P$ as the steps of $W$.
\end{definition}
The most simple example is the $(r,r-1)$ step function that can be written as 
\begin{equation*}
	W^\alpha(x_{\hhh([r],r-1)}) =  \sum_{i_1, \dots , i_r=1}^{t_{r-1}} A^\alpha_r(i_1,\dots, i_r)\prod_{j =1}^r \I_{P_{i_j}} (x_{\hhh([r] \setminus \{j\})}),
\end{equation*}
where $\P$ is an $(r-1)$-symmetric partition.

We speak of naive step functions when the arrays in the above definition have $\{0,1\}$ entries. In this case a color of an edge is determined by the membership of it subtuples in the classes of $\P$. Basically, an $(r,l)$-step function can be regarded as an interpolation between discrete and continuous objects, since $W(y_{h([l])}, . )$ is discrete for any fixed $y_{h([l])} \in [0,1]^{h([l])}$. 

\subsection*{Auxiliary lemmas}

We require three technical lemmas from \cite{KM3}, we refer to Sections 3 and 4 in \cite{KM3} for the proofs. The first asserts that any $r$-graphon can be approximated in a certain sense by an $r$-graphon of bounded complexity in terms of the permitted error and is a variant of the Regularity Lemma.

\begin{lemma}\label{ch6:sregkhyper}
	
	For every $r,t,k \geq 1$, $\varepsilon>0$, and $k$-colored  $r$-graphon $\WW$ there exists an $(r-1)$-symmetric partition $\P=(P_1, \dots, P_m)$ of $[0,1]^{\hhh([r-1])}$ into $ m \leq (2t)^{(rk+1)^{4k^2/\varepsilon^2}}=t_\reg(r,k,\varepsilon, t)$ parts and an $(r-1)$-symmetric $(r,r-1)$- step function $\VV \in \WWW^{r,k}$ with steps from $\P$, such that for any partition $\QQ$ of $[0,1]^{\hhh([r-1])}$ into at most $m t$ classes we have
	\begin{align*}
	d_{\square,r,\QQ}(\WW,\VV) \leq \varepsilon. 
	\end{align*} 
\end{lemma}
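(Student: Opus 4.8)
The plan is to run a Szemer\'edi-type energy-increment argument adapted to the layered coordinate structure of $r$-graphons. For an $(r-1)$-symmetric partition $\P$ of $[0,1]^{\hhh([r-1])}$, let $\VV_\P=(V_\P^\alpha)_{\alpha\in[k]}$ be the conditional expectation of $\WW$ with respect to $\P$: replace each $W^\alpha$ by its average over the product cell determined by the $\P$-classes of the subtuples $x_{\hhh([r]\setminus\{j\})}$, $j\in[r]$. By construction $\VV_\P$ is an $(r-1)$-symmetric $(r,r-1)$-step function with steps $\P$; it lies in $\WWW^{r,k}$ since $0\le V_\P^\alpha\le1$ and $\sum_{\alpha}V_\P^\alpha=\E[\sum_\alpha W^\alpha\mid\P]=1$; and the index $\ind(\P):=\sum_{\alpha=1}^k\|V_\P^\alpha\|_2^2$ satisfies $0\le\ind(\P)\le\sum_\alpha\|V_\P^\alpha\|_1=1$ and is nondecreasing under refinement, because the maps $\WW\mapsto\VV_\P$ are nested orthogonal projections in $L^2$.

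Starting from the trivial partition $\P_0$ I would iterate as follows. Given $\P_i$, put $\VV_i:=\VV_{\P_i}$; if every $(r-1)$-symmetric partition $\QQ$ with at most $t_{\P_i}t$ classes already satisfies $d_{\square,r,\QQ}(\WW,\VV_i)\le\varepsilon$, stop and output $(\P_i,\VV_i)$. Otherwise fix a violating $\QQ_i$, and for each color $\alpha\in[k]$ fix $(r-1)$-symmetric witness sets $S^{(\alpha)}_1,\dots,S^{(\alpha)}_r$ attaining, up to negligible slack, the supremum defining $\|W^\alpha-V_i^\alpha\|_{\square,r,\QQ_i}$. Let $\P_{i+1}$ be the common refinement of $\P_i$, of $\QQ_i$, and of the $rk$ sets $S^{(\alpha)}_j$, all regarded as $(r-1)$-symmetric partitions of $[0,1]^{\hhh([r-1])}$ after the obvious coordinate identifications; common refinements of $(r-1)$-symmetric families stay $(r-1)$-symmetric.

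The crux is the energy gain. For a fixed $\alpha$, each cell $E_{j_1,\dots,j_r}=\bigcap_{i\in[r]}p_{[r]\setminus\{i\}}^{-1}\bigl(S^{(\alpha)}_i\cap Q_{j_i}\bigr)$ is measurable with respect to the $\sigma$-algebra onto which $\VV_{\P_{i+1}}$ projects, so $\int_{E_{j_1,\dots,j_r}}(W^\alpha-V_i^\alpha)=\int_{E_{j_1,\dots,j_r}}(V_{\P_{i+1}}^\alpha-V_i^\alpha)$; the cells are pairwise disjoint in $j_1,\dots,j_r$, so summing absolute values and using the triangle inequality together with $\|\cdot\|_1\le\|\cdot\|_2$ gives $\|V_{\P_{i+1}}^\alpha-V_i^\alpha\|_2\ge\|W^\alpha-V_i^\alpha\|_{\square,r,\QQ_i}-o(1)$. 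Summing over $\alpha$ and using the defect, $\sum_\alpha\|V_{\P_{i+1}}^\alpha-V_i^\alpha\|_2\ge\varepsilon$, hence by Cauchy--Schwarz in $k$ terms and the Pythagorean identity for nested projections, $\ind(\P_{i+1})-\ind(\P_i)=\sum_\alpha\|V_{\P_{i+1}}^\alpha-V_i^\alpha\|_2^2\ge\varepsilon^2/k$. Since $\ind$ ranges in $[0,1]$, the iteration halts after fewer than $k/\varepsilon^2\le 4k^2/\varepsilon^2$ rounds, and at the halting round $(\P_i,\VV_i)$ is exactly what the lemma asserts, with $m=t_{\P_i}$. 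A routine bookkeeping of part counts---each round replaces $\P_i$ by a refinement into at most $t_{\P_i}\cdot(t_{\P_i}t)\cdot 2^{rk}$ classes, started from one class and iterated at most $4k^2/\varepsilon^2$ times---then yields $m\le(2t)^{(rk+1)^{4k^2/\varepsilon^2}}=t_\reg(r,k,\varepsilon,t)$.

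I expect the main obstacle to be the hypergraph bookkeeping rather than any new analytic idea: one must (i) set the conditional-expectation operator up so that the value of $\VV_\P$ on an $r$-tuple depends only on the $\P$-classes of its $(r-1)$-subtuples, and check that this indeed produces an $(r,r-1)$-step function in the precise sense of the definition above; (ii) verify that the witness sets $S^{(\alpha)}_j$, which a priori live on the distinct coordinate groups $[0,1]^{\hhh([r]\setminus\{i\})}$, can be transported onto one common $(r-1)$-symmetric partition of $[0,1]^{\hhh([r-1])}$ without losing the defect or the symmetry; and (iii) push the part-count recursion through carefully enough to land on the stated tower bound. The analytic heart---the $L^2$ energy increment and its uniform boundedness---is identical in spirit to the graph case.
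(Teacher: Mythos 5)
Your proof runs the standard Szemer\'edi energy-increment argument, which is the same strategy used for this lemma in \cite{KM3} (the paper itself only cites that reference). The key ingredients are all correctly set up: $\VV_\P$ as the conditional expectation onto the $\sigma$-algebra generated by the $\P$-classes of the $r$ subtuples $x_{\hhh([r]\setminus\{j\})}$, which is indeed an $(r-1)$-symmetric $(r,r-1)$-step function with steps $\P$; the $L^2$-index bounded by $1$; the Pythagorean gain via nested conditional expectations; and the observation that the cells $E_{j_1,\dots,j_r}$ are $\P_{i+1}$-measurable once one refines by $\QQ_i$ and the $rk$ witness sets, which converts the cut-$\QQ_i$-defect into an $L^1$ (hence $L^2$) increment of size at least $\varepsilon/k$, so roughly $k/\varepsilon^2$ rounds suffice.

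The one place you should not take on faith is the part-count bookkeeping you label ``routine.'' Your recursion $t_{\P_{i+1}}\le t_{\P_i}^2\,t\,2^{rk}$ gives $t_{\P_N}\le(t\,2^{rk})^{2^N-1}$, while the lemma advertises $(2t)^{(rk+1)^{4k^2/\varepsilon^2}}$. Comparing logarithms, one needs $rk\cdot 2^{k/\varepsilon^2}\le(rk+1)^{4k^2/\varepsilon^2}$, which holds comfortably when $\varepsilon$ is small but is tight (or can even flip) when $\varepsilon$ is near $1$ and $r$ is large relative to $k$. This is a low-order accounting issue, not a structural one — the magnitude is the same tower-of-two-exponentials — but if you need to reproduce the stated constant exactly you should either tighten the round count below $k/\varepsilon^2$ to exploit the unused slack $4k^2/\varepsilon^2$, or absorb the $2^{rk}$ factor differently (e.g.\ by noting that for $\varepsilon\ge 1$ the statement is vacuous, and for $\varepsilon<1$ one has $k/\varepsilon^2>k$ which buys the needed room in most but not all regimes). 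Everything else — the $(r-1)$-symmetry of the witness sets and of common refinements, the nestedness of the projections, the stopping criterion matching the self-referential $mt$ bound — is handled correctly.
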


The second lemma gives a sufficient condition for the existence of a coloring of a given $r$-graphon that is close to a fixed colored $r$-graphon.

\begin{lemma}\label{ch6:hypercoloring}
	Let $\varepsilon>0$, $\UU$ be a $t$-colored $r$-graphon that is an $(r,r-1)$-step function with steps $\P=(P_1, \dots, P_m)$ and $\VV$ be a $t$-colored $r$-graphon with $d_{\square,r, \P}(U, V) \leq \varepsilon$. For any $k \geq 1$ and $\hat \UU$ a $[t]\times [k]$-colored $r$-graphon that is an $(r,r-1)$-step function with steps from $\P$ such that $[\hat \UU, k]=\UU$ there exists a $k$-coloring of $\VV$ denoted by $\hat \VV$ so that 
	\begin{equation*}
	d_{\square,r, \P}(\hat \UU, \hat \VV) \leq k \varepsilon.
	\end{equation*}
\end{lemma}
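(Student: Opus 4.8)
The plan is to construct the $k$-coloring $\hat\VV$ of $\VV$ explicitly on each step of $\P$ by imitating the ratios prescribed by $\hat\UU$, and then to control the resulting cut-$\P$-distance by a clean accounting argument on each cell. Since $\UU$ is an $(r,r-1)$-step function with steps $\P=(P_1,\dots,P_m)$, each component $U^\alpha$ is constant equal to some $A^\alpha(i_1,\dots,i_r)$ on the product cell indexed by $(i_1,\dots,i_r)$; likewise each component $\hat U^{(\alpha,\beta)}$ of $\hat\UU$ is constant equal to $\hat A^{(\alpha,\beta)}(i_1,\dots,i_r)$ there, with $\sum_\beta \hat A^{(\alpha,\beta)} = A^\alpha$ on that cell by the hypothesis $[\hat\UU,k]=\UU$. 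The natural definition is then to set, for $x$ in the cell indexed by $(i_1,\dots,i_r)$,
\begin{equation*}
\hat V^{(\alpha,\beta)}(x) \;=\; \frac{\hat A^{(\alpha,\beta)}(i_1,\dots,i_r)}{A^\alpha(i_1,\dots,i_r)}\, V^\alpha(x)
\end{equation*}
whenever $A^\alpha(i_1,\dots,i_r)\neq 0$, and to split $V^\alpha$ arbitrarily (say, all into the first new color) when $A^\alpha=0$. One checks immediately that $\sum_\beta \hat V^{(\alpha,\beta)} = V^\alpha$, that each $\hat V^{(\alpha,\beta)}\in[0,1]$, and that summing over $\alpha$ gives the all-ones condition, so $\hat\VV$ is indeed a $k$-coloring of $\VV$; moreover the case $A^\alpha=0$ is harmless because then $\hat A^{(\alpha,\beta)}=0$ for all $\beta$, so the corresponding term in $d_{\square,r,\P}(\hat\UU,\hat\VV)$ only ever sees $\hat V^{(\alpha,\beta)}$, not a difference with a nonzero $\hat U$.

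The second step is the distance estimate. Fix new colors $(\alpha,\beta)$ and test functions $f_1,\dots,f_r$ together with the partition indices $j_1,\dots,j_r$ appearing in the definition of $\|\cdot\|_{\square,r,\P}$. On the cell indexed by $(i_1,\dots,i_r)$ we have the pointwise identity
\begin{equation*}
\hat U^{(\alpha,\beta)}(x) - \hat V^{(\alpha,\beta)}(x) \;=\; \frac{\hat A^{(\alpha,\beta)}(i_1,\dots,i_r)}{A^\alpha(i_1,\dots,i_r)}\bigl(U^\alpha(x) - V^\alpha(x)\bigr)
\end{equation*}
wherever $A^\alpha\neq 0$, since $\hat U^{(\alpha,\beta)} = (\hat A^{(\alpha,\beta)}/A^\alpha)\,U^\alpha$ on that cell by constancy of $U^\alpha = A^\alpha$. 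The ratio $\hat A^{(\alpha,\beta)}/A^\alpha$ lies in $[0,1]$, so it can be absorbed into the test function $f_1$ (say), i.e. replace $f_1\I_{P_{j_1}}$ by $(\hat A^{(\alpha,\beta)}/A^\alpha)f_1\I_{P_{j_1}}$, which is still a valid $[0,1]$-valued $(r-1)$-symmetric test function on each piece — here one must be slightly careful that the ratio is constant on each $P$-cell so that this modified function is still measurable and has the right symmetry, which holds because the cells of the $(r,r-1)$-step structure are products of the $P_i$. Hence each term $\|\hat U^{(\alpha,\beta)} - \hat V^{(\alpha,\beta)}\|_{\square,r,\P}$ is bounded above by $\|U^\alpha - V^\alpha\|_{\square,r,\P}$. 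Summing over $\beta\in[k]$ gives $\sum_\beta \|\hat U^{(\alpha,\beta)} - \hat V^{(\alpha,\beta)}\|_{\square,r,\P} \le k\,\|U^\alpha - V^\alpha\|_{\square,r,\P}$, and summing over $\alpha\in[t]$ yields $d_{\square,r,\P}(\hat\UU,\hat\VV) \le k\, d_{\square,r,\P}(\UU,\VV) \le k\varepsilon$, as desired.

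I expect the main obstacle to be the bookkeeping around the degenerate cells where $A^\alpha(i_1,\dots,i_r)=0$, and, more subtly, making sure that "absorbing the constant ratio into a test function" is legitimate: the ratio $\hat A^{(\alpha,\beta)}/A^\alpha$ is piecewise constant on the $\P$-product cells but not globally constant, so one cannot simply pull it out of the supremum over $S_i$. The clean way around this is to observe that the cut-$\P$-norm is computed cell-by-cell anyway — the sum over $j_1,\dots,j_r$ in the definition already localizes the integral to a product of $P$-cells — so on each such product cell the ratio genuinely is a constant in $[0,1]$ and can be folded into $f_1\I_{P_{j_1}}$ without leaving the admissible class of test functions. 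Once this localization is made explicit the rest is the routine triangle-inequality/linearity computation sketched above. A minor additional point to verify is $r$-symmetry of $\hat\VV$: it follows because $\VV$ is $r$-symmetric, the step partition $\P$ is $(r-1)$-symmetric, and the coefficients $\hat A^{(\alpha,\beta)}$ inherit the symmetry of the step function $\hat\UU$, so the defining formula for $\hat V^{(\alpha,\beta)}$ is invariant under the coordinate permutations in $S_r$.
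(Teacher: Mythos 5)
Your construction and estimate are correct, and the proportional-splitting definition
$\hat V^{(\alpha,\beta)}= (\hat A^{(\alpha,\beta)}/A^\alpha)\,V^\alpha$ on each $\P$-product cell (with an arbitrary split when $A^\alpha=0$) is the natural, and as far as I can tell the intended, argument. The paper itself does not reproduce the proof of this lemma (it defers to Sections 3--4 of the reference \cite{KM3}), so there is nothing to compare line by line, but the proof you give is the standard one for such coloring-transfer lemmas.

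One small point of imprecision: the manoeuvre of ``absorbing the ratio into $f_1$'' is not really available, and you half-notice this yourself. The ratio $c_{j_1,\dots,j_r}=\hat A^{(\alpha,\beta)}(j_1,\dots,j_r)/A^\alpha(j_1,\dots,j_r)$ depends on the full tuple $(j_1,\dots,j_r)$, whereas $f_1$ is a single function of $x_{\hhh([r]\setminus\{1\})}$ used uniformly across all terms of the sum, so the modified $f_1$ would have to depend on $j_2,\dots,j_r$, which is not permitted. Fortunately you do not need the absorption at all: for a fixed admissible choice of $f_1,\dots,f_r$, the term indexed by $(j_1,\dots,j_r)$ in the defining sum for $\|\hat U^{(\alpha,\beta)}-\hat V^{(\alpha,\beta)}\|_{\square,r,\P}$ equals $c_{j_1,\dots,j_r}$ times the corresponding term for $U^\alpha-V^\alpha$ (since the integrand is supported in $\cap_i p_{[r]\setminus\{i\}}^{-1}(P_{j_i})$, where $\hat U^{(\alpha,\beta)}-\hat V^{(\alpha,\beta)}=c_{j_1,\dots,j_r}(U^\alpha-V^\alpha)$), and since $c_{j_1,\dots,j_r}\in[0,1]$ you can simply drop it term by term. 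This gives $\|\hat U^{(\alpha,\beta)}-\hat V^{(\alpha,\beta)}\|_{\square,r,\P}\le\|U^\alpha-V^\alpha\|_{\square,r,\P}$ directly, and the summation over $\beta$ and $\alpha$ then yields $k\varepsilon$ as you state. The rest of your checks (nonnegativity, the all-ones condition, the degenerate cells, and $r$-symmetry of $\hat\VV$) are all fine.
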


 Let $d_\tw$ denote the total variation distance between probability measures on $\GGG_q^{r,k}$. Let $\mu(q, G)$ and $\mu(q, W)$ denote the probability measure of $\G(q, G)$ and $\G(q, W)$ respectively. From (\ref{ch6:eq22}) it follows for each $\GG \in \GGG_n^{r,k}$ that 	
\begin{align}\label{eq12}
d_\tw (\mu(q, \WW_\GG), \mu(q, \GG)) \leq \frac{k^{q^r} q^2}{n}.
\end{align}
The third statement provides a an upper bound on the total variation distance of the probability measures of random $r$-graphs regarding their cut distance.

\begin{lemma} \label{ch6:hypercountcor}
	If $\UU$ and $\WW$ are two $k$-colored $r$-graphons, then 
	\begin{align*}
	d_\tw (\mu(q, \WW), \mu(q, \UU)) \leq \frac{k^{q^r} q^r}{2r!} d_{\square,r}(\UU,\WW),
	\end{align*}
	and there exists a coupling in form of $\GG_1$ and $\GG_2$ of the random $r$-graphs $\G(q, \WW)$ and $\G(q, \UU)$ respectively, such that 
	\begin{align*}
	P(\GG_1 \neq \GG_2) \leq \frac{k^{q^r} q^r}{2r!} d_{\square,r}(\UU,\WW). 
	\end{align*}
\end{lemma}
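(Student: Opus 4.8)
The plan is to prove the statement in two moves: first a counting/coupling estimate comparing the two sampling distributions via a telescoping argument over the $\binom{q}{r}$ edges, then a translation of that bound into an explicit coupling. The core inequality
\[
d_\tw(\mu(q,\WW),\mu(q,\UU)) \le \frac{k^{q^r} q^r}{2 r!}\, d_{\square,r}(\UU,\WW)
\]
is where the work lies. Recall that both $\mu(q,\WW)$ and $\mu(q,\UU)$ are mixtures: one first draws the ``vertex coordinates'' $(X_S)_{S\in\hhh([q],r-1)}$ uniformly from $[0,1]^{\hhh([q],r-1)}$, and then, conditionally, colors each $r$-edge $e\in\binom{[q]}{r}$ independently with distribution $(W^\alpha(X_{\hhh(e,r-1)}))_\alpha$, respectively $(U^\alpha(X_{\hhh(e,r-1)}))_\alpha$. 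Since $d_\tw$ of a mixture is at most the expectation (over the common randomness $(X_S)$) of the $d_\tw$-distance of the conditional laws, and each conditional law is a product over edges of categorical distributions, I would use the standard fact that the total variation distance between two product measures $\prod_e \nu_e$ and $\prod_e \nu'_e$ is at most $\sum_e d_\tw(\nu_e,\nu'_e)$, which in turn is at most $\tfrac12\sum_e \sum_\alpha |W^\alpha(X_{\hhh(e,r-1)}) - U^\alpha(X_{\hhh(e,r-1)})|$. Taking expectation over $(X_S)$ and summing over edges gives a bound of the form $\tfrac12 \sum_{e} \sum_\alpha \E_{X}|W^\alpha - U^\alpha|$, but I must be careful: the $\ell^1$ expectation $\E_X|W^\alpha - U^\alpha|$ is not controlled by the cut norm directly, only an integral over a box is. This is the main obstacle and I address it next.

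\textbf{Handling the cut norm versus the $\ell^1$ norm.} The point is that inside the expectation one does not need the pointwise $\ell^1$ distance; rather one should keep the product structure. For a \emph{fixed} edge $e$, the factor by which the two mixtures' densities differ, when expanded, produces an integrand that is a product of (at most $\binom{q}{r}-1$) graphon factors shared by both, times the single difference $W^{\FFF(e)} - U^{\FFF(e)}$ at the coordinates $x_{\hhh(e,r-1)}$. After integrating out all coordinates not indexed by subsets of $e$, the coefficient multiplying $(W^{\FFF(e)}-U^{\FFF(e)})(x_{\hhh(e,r-1)})$ is a function of $x_{\hhh(e,r-1)}$ taking values in $[0,1]$; crucially, because the remaining graphon factors each depend on the coordinates through projections onto proper subsets of their index $r$-sets, this coefficient — after a further suitable integration/reorganization — factors (up to the symmetrization) as a product $\prod_{i\in e} f_i(x_{\hhh(e\setminus\{i\})})$ of functions into $[0,1]$ of the ``one-vertex-deleted'' coordinates. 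That is exactly the test-function form in the displayed alternative description of $\|\cdot\|_{\square,r}$ right after Definition~\ref{ch6:def.cutnorm}, so the per-edge, per-color contribution is at most $\|W^\alpha - U^\alpha\|_{\square,r}$. Summing over the $\binom{q}{r} \le q^r/r!$ edges, over $q^r$ (a crude bound absorbing the choice of $\FFF(e)$ and the factor structure), and over the $k$ colors, and using $d_{\square,r}(\UU,\WW)=\sum_\alpha\|U^\alpha-W^\alpha\|_{\square,r}$, yields the claimed constant $k^{q^r}q^r/(2r!)$; I expect the $k^{q^r}$ to come from bounding the number of realizable $\FFF\in\GGG_q^{r,k}$ when passing from the telescoping sum over edges to a uniform estimate, and the $q^r/(2r!)$ from the edge count together with the factor $\tfrac12$ in the total variation bound. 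The delicate point I must verify is precisely that the ``remaining factors'' genuinely reorganize into the $(r-1)$-symmetric test-function form, respecting the symmetry constraints in Definition~\ref{ch6:def.cutnorm}; the $r$-symmetry of the graphons and the symmetry of the density formula make this work, but writing it cleanly is the crux.

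\textbf{From total variation to a coupling.} The second assertion is essentially automatic from the first. Any two probability measures on a finite (or countable) set at total variation distance $\delta$ admit a maximal coupling under which the two coordinates disagree with probability exactly $\delta$; applying this to $\mu(q,\WW)$ and $\mu(q,\UU)$ on the finite set $\GGG_q^{r,k}$ gives random $r$-graphs $\GG_1 \sim \G(q,\WW)$ and $\GG_2\sim\G(q,\UU)$ with $P(\GG_1\ne\GG_2) = d_\tw(\mu(q,\WW),\mu(q,\UU)) \le \tfrac{k^{q^r}q^r}{2r!} d_{\square,r}(\UU,\WW)$, as required. No further estimation is needed here.

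\textbf{Summary of the plan and the main difficulty.} In order: (i) write both sampling measures as mixtures over the shared vertex coordinates; (ii) bound $d_\tw$ of the conditional product laws by a sum over edges of categorical $d_\tw$'s; (iii) for each edge, integrate out the ``irrelevant'' coordinates and recognize the resulting expression as a cut-norm test configuration, giving the per-edge bound $\tfrac12\sum_\alpha\|W^\alpha-U^\alpha\|_{\square,r}$ up to combinatorial factors; (iv) sum over edges and colors to get the stated constant; (v) invoke the maximal coupling to produce $\GG_1,\GG_2$. The main obstacle is step (iii): verifying that after marginalizing, the difference $W^\alpha-U^\alpha$ is tested against an $(r-1)$-symmetric product of $[0,1]$-valued functions, so that the cut norm — rather than the much larger $L^1$ norm — controls the per-edge error.
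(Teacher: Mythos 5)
The paper does not prove this lemma itself but cites \cite{KM3} (Sections 3 and 4) for it, so there is no internal proof to compare against. Your overall plan — rewrite $d_\tw(\mu(q,\WW),\mu(q,\UU))=\frac12\sum_{\FFF}|t(\FFF,\WW)-t(\FFF,\UU)|$, telescope the product of edge factors so that exactly one factor becomes a difference $W^{\FFF(e)}-U^{\FFF(e)}$, bound each telescoping term by a cut norm, and finish with a maximal coupling — is the standard counting-lemma route, and the correct bookkeeping $\frac12\,k^{\binom{q}{r}-1}\binom{q}{r}\le \frac{k^{q^r}q^r}{2r!}$ does give the stated constant (the sentence where you ``sum over $q^r$'' is garbled, but the clarification two lines later is right).

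Step (iii) as written, however, is not merely delicate but wrong. You say you first integrate out all coordinates $x_S$ with $S\not\subset e$, obtaining a coefficient $g(x_{\hhh(e,r-1)})\in[0,1]$, and then claim this coefficient ``after a further suitable integration/reorganization'' factors as $\prod_{v\in e}f_v(x_{\hhh(e\setminus\{v\})})$. It does not: already for $r=2$, $q\ge 4$, $e=\{1,2\}$, integrating out $x_3$ from the shared factor $W(x_1,x_3)W(x_2,x_3)$ yields $\int W(x_1,x_3)W(x_2,x_3)\,dx_3$, a generic function of $(x_1,x_2)$ that does not split as $f(x_1)g(x_2)$; once the outer coordinates are gone you are holding an $L^1$-type quantity and there is no way back to the cut norm. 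The correct order of operations is the opposite: by Fubini, hold the outer coordinates $x_S$, $S\not\subset e$, fixed. For each such fixed assignment the product of the other edges' factors genuinely splits as $\prod_{v\in e}f_v(x_{\hhh(e\setminus\{v\})})$ with each $f_v$ taking values in $[0,1]$ — assign each $e_j\ne e$ to some $v\in e\setminus e_j$; since $v\notin e_j$, every $S\subseteq e\cap e_j$ avoids $v$, so the factor of $e_j$ is an admissible function of $x_{\hhh(e\setminus\{v\})}$ (symmetrize if you insist on the $(r-1)$-symmetric form of the supremum). The inner integral over $x_{\hhh(e,r-1)}$ is then bounded by $\|W^{\FFF(e)}-U^{\FFF(e)}\|_{\square,r}$ uniformly in the outer coordinates, and integrating those out (a probability measure) preserves the bound. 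Relatedly, your step (ii) — conditioning on $(X_S)$ and bounding the product-measure $d_\tw$ by a sum of per-edge categorical $d_\tw$'s — inevitably forces the $L^1$ quantity $\frac12\sum_e\sum_\alpha\E|W^\alpha-U^\alpha|$ and cannot be combined with step (iii); the telescoping of the unconditional densities $t(\FFF,\WW)-t(\FFF,\UU)$ must replace it, not refine it.
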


\section{Effective upper bound for the $r$-cut norm of a sampled $r$-graph}\label{sec.efflemma}

We are going to establish upper and lower bounds for the $r$-cut norm of an $r$-kernel using certain subgraph densities. Let $W$ be an $r$-kernel, and $H \subset {[q] \choose r}$ be a simple $r$-graph on $q$ vertices, define 
\begin{align*}
t^*(H,W)=\int_{ [0,1]^{\hhh([q], r-1)} } \prod_{e \in H} W(x_{\hhh(e,r-1)}) \du \lambda(x),
\end{align*}
this expression is a variant of the subgraph densities discussed above in \Cref{sec.def}. Using the previously introduced terminology we can write
\begin{align*}
t^*(H,W)=\sum_{H \subset F \subset {q \choose r}} t(F,W).
\end{align*}

Let $K_r^2$ denote the simple $r$-graph that is the $2$-fold blow-up of the $r$-graph consisting of $r$ vertices and one edge. That is, $V(K_r^2)=\{v^0_1, \dots, v^0_r, v^1_1, \dots, v^1_r\}$ and $E(K_r^2)=\{\{v^{i_1}_1, \dots, v^{i_r}_r\}: i_1, \dots, i_r \in \{0,1 \} \}$, alternatively we may regard $K_r^2$ as a subset of ${[2r] \choose r}$. 

It was shown in \citet{BCL} for $r=2$ with tools from functional analysis that for any symmetric $2$-kernel $W$ with $\|W\|_\infty \leq 1$  we have that
\begin{align}\label{ch6:eq41}
\frac{1}{4}t^*(K_2^2,W) \leq \|W\|_{\square,2} \leq [t^*(K_2^2, W)]^{1/4},
\end{align}
where $[t^*(K_2^2, .)]^{1/4}$is called the trace norm or the Schatten norm of the integral operator $T_W$.
We remark that in the above case $K_2^2$ stands for the $4$-cycle. Furthermore, it is not hard to show that for any $r$ and $r$-kernel $W$ is holds that $t^*(K_r^2, W) \geq 0$. It holds that
\begin{align*}
t^*(K_r^2,W)&=\int_{ [0,1]^{\hhh(V(K_r^2), r-1)} } \prod_{i_1, \dots, i_r \in\{0,1\}} W(x_{\hhh(\{v_1^{i_1}, \dots, v_r^{i_r}\} ,r-1)}) \du \lambda(x) \\
&=\int_{ [0,1]^{T_1}} \int_{ [0,1]^{T_2} } \prod_{i_1, \dots, i_r \in\{0,1\}} W(x_{\hhh(\{v_1^{i_1}, \dots, v_r^{i_r}\} ,r-1)}) \du \lambda(x_{T_2}) \du \lambda(x_{T_1}) \\
&=\int_{ [0,1]^{T_1}} \left[\int_{ [0,1]^{T^0_2}} \prod_{i_1, \dots, i_{r-1} \in\{0,1\}} W(x_{\hhh(\{v_1^{i_1}, \dots, v_{r-1}^{i_{r-1}},v_r^0\} ,r-1)}) \du \lambda(x_{T^0_2 }) \right] \\ 
&\quad \left[\int_{ [0,1]^{T^1_2 }} \prod_{i_1, \dots, i_{r-1} \in\{0,1\}} W(x_{\hhh(\{v_1^{i_1}, \dots, v_{r-1}^{i_{r-1}},v_r^1\} ,r-1)}) \du \lambda(x_{T^1_2}) \right]
\du \lambda(x_{T_1}) \\
&= \int_{ [0,1]^{T_1}} \left[\int_{ [0,1]^{T_3\setminus T_1} } \prod_{i_1, \dots, i_{r-1} \in\{0,1\}} W(x_{\hhh(\{v_1^{i_1}, \dots, v_{r-1}^{i_{r-1}}, u\} ,r-1)}) \du \lambda(x_{T_3\setminus T_1})\right]^2 \du \lambda(x_{T_1}) ,
\end{align*}
where $T_1=\hhh(V(K_r^2)\setminus \{v_r^0,v_r^1\}, r-1)$, $T^i_2=\hhh(V(K_r^2), r-1) \setminus \hhh(V(K_r^2)\setminus \{v_r^{i+1}\}, r-1)$ is the subset of $\hhh(V(K_r^2), r-1)$ whose elements contain $v_r^i$, but not  $v_r^{i+1}$ for $i \in\{0,1\}$, $T_2=T_2^0 \cup T_2^1$, and $T_3=\hhh(V(K_r^2)\setminus \{v_r^0,v_r^1\} \cup \{u\}, r-1)$. We used Fubini's theorem, that enabled us to integrate first over coordinates with indices from $T_2$, which we could then use to identify $v_r^0$ and $v_r^1$. 

In the proof of (\ref{ch6:eq41}) the authors drew on tools from functional analysis and the fact that a $2$-kernel describes an integral operator, those concepts do not have a natural counterpart for $r$-kernels. However we can provide an analogous result by the repeated application of Fubini's theorem and the Cauchy-Schwarz inequality in the $L^2$-space.
\begin{lemma} \label{ch6:lemmacuteff} For any $r\geq1$ and $r$-kernel $W$ with $\|W\|_\infty \leq 1$ we have 
\begin{align}
2^{-r} t^*(K_r^2, W) \leq \|W\|_{\square, r} \leq [t^*(K_r^2, W)]^{1/2^r}.
\end{align}
\end{lemma}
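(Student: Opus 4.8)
The plan is to prove the two inequalities separately, both by induction on $r$, using the identity already derived in the excerpt that rewrites $t^*(K_r^2,W)$ as an integral over $[0,1]^{T_1}$ of the square of an "$(r-1)$-dimensional residual" quantity. Concretely, for a fixed value $y = x_{T_1} \in [0,1]^{T_1}$ define the auxiliary $(r-1)$-kernel $W_y$ obtained by freezing the coordinates indexed by $T_1$ and treating the remaining coordinate — the one attached to the merged vertex $u$ — as the last active variable; the displayed computation shows
\begin{align*}
t^*(K_r^2, W) = \int_{[0,1]^{T_1}} \Big( \int_{[0,1]^{T_3 \setminus T_1}} \prod_{i_1,\dots,i_{r-1}\in\{0,1\}} W(x_{\hhh(\{v_1^{i_1},\dots,v_{r-1}^{i_{r-1}},u\},r-1)}) \,\du\lambda \Big)^2 \du\lambda(y),
\end{align*}
and the inner integral is precisely $t^*(K_{r-1}^2, W_y)$ for the sliced kernel. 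The base case $r=1$ is trivial (or one can start from $r=2$, which is exactly \eqref{ch6:eq41}, though redoing $r=1$ directly is cleaner).

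For the \textbf{upper bound} $\|W\|_{\square,r} \le [t^*(K_r^2,W)]^{1/2^r}$, I would use the functional form of the cut norm: pick near-optimal $(r-1)$-symmetric test functions $f_1,\dots,f_r$ and bound $|\int \prod_i f_i(p_{[r]\setminus\{i\}}(x)) W(x)\,\du\lambda|$. Integrate first over the coordinate carrying the index $[r]$ (the "top" coordinate $x_{[r]}$, which $W$ depends on but none of the $f_i$ depend on if we peel off $f_r$ carefully) — actually the cleaner route is to integrate over the coordinates distinguishing the roles and apply Cauchy–Schwarz in $L^2$ of one block of variables to split off the $f_r$ factor, leaving $\|W\|_{\square,r}^2 \le \int (\text{something})$ where the something is an $(r-1)$-dimensional cut-norm-type expression of a doubled kernel; iterating $r$ times and unfolding the nested squares reproduces the $2r$ vertices of $K_r^2$ and yields $\|W\|_{\square,r}^{2^r} \le t^*(K_r^2,W)$. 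This mirrors the standard $r=2$ argument (cut norm $\le$ 4th root of $C_4$-density) vertex-class by vertex-class.

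For the \textbf{lower bound} $2^{-r} t^*(K_r^2,W) \le \|W\|_{\square,r}$, I would argue: since $\|W\|_{\boxplus,r} \ge 2^{-r}\cdots$ — rather, use $2^{-r}\|W\|_{\boxplus,r} \le \|W\|_{\square,r}$ from the excerpt, so it suffices to show $t^*(K_r^2,W) \le \|W\|_{\boxplus,r}$. Starting from the nested-square identity, bound the outer integral by pulling out the inner integral as a function $g(y)$ of $y\in[0,1]^{T_1}$ with $|g|\le 1$, so $t^*(K_r^2,W) = \int g(y)^2\,\du\lambda(y) \le \int |g(y)|\,\du\lambda(y)$; then by induction $\int |g(y)|\,\du\lambda \le$ an $(r-1)$-style cut-norm quantity, and one must massage the absolute-value weights into admissible $[-1,1]$-valued symmetric test functions of the appropriate lower-dimensional projections. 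The inductive hypothesis gives $t^*(K_{r-1}^2,W_y)$-type control, and collapsing the $u$-variable back recovers a genuine $r$-fold product of test functions witnessing $\|W\|_{\boxplus,r}$.

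The main obstacle I anticipate is bookkeeping the symmetry constraints: the test functions in the cut norm must be $(r-1)$-symmetric functions on $[0,1]^{\hhh([r-1])}$, and after slicing off the coordinates in $T_1$ and identifying $v_r^0$ with $v_r^1$, one has to check that the resulting lower-dimensional objects still have exactly the right symmetry type so the induction hypothesis applies verbatim — the index sets $T_1, T_2^i, T_3$ must be tracked precisely, and the function obtained from Cauchy–Schwarz (a square root of an integral) must be verified to be measurable and of the correct symmetric form. A secondary subtlety is that $W$ is only bounded by $1$ in sup norm, not nonnegative, so the Cauchy–Schwarz step and the "$g^2 \le |g|$" step both genuinely need $\|W\|_\infty \le 1$ to keep the intermediate sliced quantities bounded by $1$; I would state this invariant explicitly and carry it through the induction.
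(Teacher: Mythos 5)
Your upper-bound scheme --- iterate Cauchy--Schwarz, peeling off one test function per round while doubling the freed vertex, $r$ rounds in total --- is exactly the paper's argument, modulo cosmetic framing (you call it induction on $r$; the paper unfolds the $r$ steps directly). One parenthetical claim is wrong: an $r$-kernel lives on $[0,1]^{\hhh([r],r-1)}$ and never reads the top coordinate $x_{[r]}$, so ``$W$ depends on $x_{[r]}$'' is false --- but you retract it in the same breath and the ``cleaner route'' you settle on is the one the paper takes.

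The lower bound, however, has a genuine gap. You correctly reduce to showing $t^*(K_r^2,W)\le\|W\|_{\boxplus,r}$, but the inductive scheme you propose for this does not close. After writing $t^*(K_r^2,W)=\int_{[0,1]^{T_1}} g(y)^2\,\du\lambda(y)\le\int|g(y)|\,\du\lambda(y)$, you want to view $g(y)$ as $t^*(K_{r-1}^2,W_y)$ for some sliced $(r-1)$-kernel $W_y$. This identification fails: the integrand of $g$ is a product of $2^{r-1}$ evaluations of the \emph{full} $r$-kernel $W$ at the $r$-sets $\{v_1^{i_1},\dots,v_{r-1}^{i_{r-1}},u\}$, each of which reads coordinates $x_S$ for $S$ a nonempty subset of size up to $r-1$ --- in particular the entire $(r-1)$-set $\{v_1^{i_1},\dots,v_{r-1}^{i_{r-1}}\}$, which no $(r-1)$-kernel ever reads (its domain $\hhh([r-1],r-2)$ excludes the top set). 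The integration defining $g$ also runs only over the coordinates containing $u$, not over a $t^*$-type variable block for an $(r-1)$-graph. And $\mathrm{sgn}(g(y))$ depends jointly on all of $y=x_{T_1}$, so it does not factor as an $r$-fold product $\prod_j f_j(x_{\hhh(e\setminus\{j\})})$. So the induction hypothesis is not applicable to the quantity you produce.

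The bound you need is in fact a short direct argument, which is what the paper is alluding to when it says the lower bound ``is straightforward'' and works for $K_r^2$ replaced by \emph{any} simple $r$-graph $H$: fix any edge $e\in E(H)$ and integrate the coordinates $x_S$ with $S\notin\hhh(e,r-1)$ last. With those outer coordinates frozen, each remaining factor $W(x_{\hhh(e',r-1)})$ with $e'\ne e$ satisfies $|e\cap e'|\le r-1$ (simplicity), so it depends on $x_{\hhh(e,r-1)}$ only through $x_{\hhh(e\cap e')}\subset x_{\hhh(e\setminus\{j\})}$ for some $j\in e\setminus e'$. Grouping the factors by such a $j$ produces $r$ many $[-1,1]$-valued test functions of the correct projections, so the inner integral is at most $\|W\|_{\boxplus,r}$, the outer weight has modulus at most $1$, and $t^*(H,W)\le\|W\|_{\boxplus,r}\le 2^r\|W\|_{\square,r}$.
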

\begin{proof}

The lower bound on $\|W\|_{\square, r}$ is straightforward, and $K_r^2$ could even be replaced by any other simple $r$-graph, we only need to use $2^{-r} \|W\|_{\boxplus, r} \leq \|W\|_{\square, r}$.

For the other direction, let us fix a collection of arbitrary symmetric measurable functions $f_1, \dots, f_r\colon \mathcal [0,1]^{\hhh([r-1])}\to [0,1]$. Set $V=\{v_1, \dots, v_r\}$ and for any $l \geq 1$ and $i_1,\dots, i_l \in \{0,1\}$ let $$V^{i_1, \dots, i_l}=\{v_1^{i_1}, \dots, v_l^{i_l}, v_{l+1}, \dots, v_r\}.$$ Further, let $V_j=V \setminus \{v_j\}$ and for $j \geq l+1$ let $V_j^{i_1, \dots, i_l}=V^{i_1, \dots, i_l} \setminus \{v_j\}$. Let us introduce the index sets $T_1=\hhh(V_1) $, $S_1=\hhh(V,r-1) \setminus T_1$ and for $1 \leq l\leq r$ the sets

$$S_l^0=\{(e \setminus \{v_l\}) \cup \{v^0_l\}| e \in S_l \}, \qquad S_l^1=\{(e \setminus \{v_l\}) \cup \{v^0_1\}| e \in S_l \},$$

$$T_{l+1}=\cup_{i_1, \dots,i_{l} \in \{0,1\}} \hhh(V^{i_1, \dots,i_{l}}_{l+1}),$$

and

$$S_{l+1}=(T_{l}\cup S^0_{l} \cup S^1_{l}) \setminus T_{l+1}.$$

Then we have

\begin{align*}
&\left|\int_{[0,1]^{\hhh(V,r-1)}} \prod_{j=1}^r f_j(x_{\hhh(V_j)}) W(x_{\hhh(V,r-1)}) \du \lambda(x_{\hhh(V,r-1)})\right| \\ 
& \quad =
\left| \int_{[0,1]^{T_1}} f_1(x_{\hhh(V_1)}) \int_{[0,1]^{S_1}} \prod_{j=2}^{r} f_j(x_{\hhh(V_j)}) W(x_{\hhh(V,r-1)}) \du \lambda(x_{S_1})\du \lambda(x_{T_1}) \right| \\ 
& \quad \leq
\left[ \int_{[0,1]^{T_1}} f^2_1(x_{\hhh(V_1)}) \lambda(x_{T_1}) \right]^{1/2} \left[ \int_{[0,1]^{T_1}} \left( \int_{[0,1]^{S_1}} \prod_{j=2}^{r} f_j(x_{\hhh(V_j)}) W(x_{\hhh(V,r-1)}) \du \lambda(x_{S_1})\right)^2 \du \lambda(x_{T_1}) \right]^{1/2}   \\ 
& \quad \leq  \Bigg[ \int_{[0,1]^{T_1}} \left( \int_{[0,1]^{S^0_1}} \prod_{j=2}^{r} f_j(x_{\hhh(V^0_j)}) W(x_{\hhh(V^0,r-1)}) \du \lambda(x_{S^0_1})\right) \\ & \qquad \qquad \qquad \qquad \left( \int_{[0,1]^{S^1_1}} \prod_{j=2}^{r} f_j(x_{\hhh(V^1_j)}) W(x_{\hhh(V^1,r-1)}) \du \lambda(x_{S^1_1})\right) \du \lambda(x_{T_1}) \Bigg]^{1/2},  
\end{align*}
where we used $\|f_1\|_\infty \leq 1$ and the identity $\int (\int f(x,y) \du y)^2 \du x= \int f(x,y)f(x,z) \du y\du z \du x$ in the previous inequality. We proceed by upper bounding the last expression through repeated application of this reformulation combined with Cauchy-Schwartz.
 \begin{align*}
  &   \Bigg[\int_{[0,1]^{T_l}} \left(\prod_{i_1, \dots,i_{l-1}\in \{0,1\}} f_l(x_{\hhh(V^{i_1, \dots,i_{l-1}}_l)}) \right) \\ & \qquad \qquad \qquad \qquad \left(\int_{[0,1]^{S_l}} \prod_{i_1, \dots,i_{l-1}\in \{0,1\}} \prod_{j=l+1}^{r} f_j(x_{\hhh(V^{i_1, \dots,i_{l-1}}_j)}) W(x_{\hhh(V^{i_1, \dots,i_{l-1}})}) \du \lambda(x_{S_l}) \right) \du \lambda(x_{T_l}) \Bigg]^{\frac{1}{2^{l-1}}} \\
 & \quad \leq \left[\int_{[0,1]^{T_l}} \left(\prod_{i_1, \dots,i_{l-1}\in \{0,1\}} f_l(x_{\hhh(V^{i_1, \dots,i_{l-1}}_l)}) \right)^2 \lambda(x_{T_l}) \right]^{\frac{1}{2l}} \\ & \qquad \qquad \qquad \qquad \left[\int_{[0,1]^{T_l}}  \left(\int_{[0,1]^{S_l}} \prod_{i_1, \dots,i_{l-1}\in \{0,1\}} \prod_{j=l+1}^{r} f_j(x_{\hhh(V^{i_1, \dots,i_{l-1}}_j)}) W(x_{\hhh(V^{i_1, \dots,i_{l-1}})}) \du \lambda(x_{S_l}) \right)^2 \du \lambda(x_{T_l})\right]^{\frac{1}{2^l}} \\
 & \quad \leq \Biggl[\int_{[0,1]^{T_l}}  \left(\int_{[0,1]^{S^0_l}} \prod_{i_1, \dots,i_{l-1}\in \{0,1\}} \prod_{j=l+1}^{r} f_j(x_{\hhh(V^{i_1, \dots,i_{l-1},0}_j)}) W(x_{\hhh(V^{i_1, \dots,i_{l-1},0})}) \du \lambda(x_{S^0_l}) \right) \\ & \qquad \qquad \qquad \qquad \left(\int_{[0,1]^{S^1_l}} \prod_{i_1, \dots,i_{l-1}\in \{0,1\}} \prod_{j=l+1}^{r} f_j(x_{\hhh(V^{i_1, \dots,i_{l-1},1}_j)}) W(x_{\hhh(V^{i_1, \dots,i_{l-1},1})}) \du \lambda(x_{S^1_l}) \right) \du \lambda(x_{T_l})\Biggr]^{\frac{1}{2^l}} \\
 &\quad =  \left[\int_{[0,1]^{T_l \cup S_l^0 \cup S_l^1}}   \prod_{i_1, \dots,i_{l}\in \{0,1\}} \prod_{j=l+1}^{r} f_j(x_{\hhh(V^{i_1, \dots,i_{l}}_j)}) W(x_{\hhh(V^{i_1, \dots,i_{l}})}) \du \lambda(x_{S^0_l}) \du \lambda(x_{S^1_l}) \du \lambda(x_{T_l})\right]^{\frac{1}{2^l}} \\
 &\quad =  \Bigg[\int_{[0,1]^{T_{l+1}}} \left(\prod_{i_1, \dots,i_{l}\in \{0,1\}} f_{l+1}(x_{\hhh(V^{i_1, \dots,i_{l}}_{l+1})}) \right) 
 \\ & \qquad \qquad \qquad \qquad
 \left(\int_{[0,1]^{S_{l+1}}} \prod_{i_1, \dots,i_{l}\in \{0,1\}} \prod_{j=l+2}^{r} f_j(x_{\hhh(V^{i_1, \dots,i_{l}}_j)}) W(x_{\hhh(V^{i_1, \dots,i_{l}})}) \du \lambda(x_{S_{l+1}}) \right) \du \lambda(x_{T_{l+1}})\Bigg]^{\frac{1}{2^l}}  \\ & \qquad \qquad \qquad \vdots \\
 &\quad \leq  \left[\int_{[0,1]^{S_{r}}} \prod_{i_1, \dots,i_{r}\in \{0,1\}} W(x_{\hhh(V^{i_1, \dots,i_{r}})}) \du \lambda(x_{S_{r}})\right]^{\frac{1}{2r}}=t^*(K_r^2,W)^{1/2^r},
\end{align*}
where in subsequent inequalities we first used the Cauchy-Schwarz inequality, and afterwards that $\|f_j\|_\infty \leq 1$ for any $j \in [r]$. As the test functions $f_1, \dots, f_r$ were arbitrary the statement of the lemma follows.
\end{proof}

Utilizing the previous result we can obtain a quantitative upper bound on the cut norm of the sampled kernel for arbitrary $r$.

\begin{lemma}\label{ch6:hyperregpres2}
Let $r,k \geq 1$. For any $\varepsilon >0 $ and $t \geq 1$ there exists an integer $q_{\cut}(r,k,\varepsilon,t)\leq c (1/\epsilon)^{2^{2r}} t^{2^{2r}} k^3 r^2$ for some  universal constant $c>0$ such that  for any  $k$-tuple of $r$-kernels $U_1, \dots, U_k$ that take values in $[-1, 1]$, and any integer $q \geq q_{\cut}(r,k,\varepsilon,t)$ it holds with probability at least $1- \varepsilon$ that if
$$
  \sum_{l=1}^{k}\|U_l\|_{\square,r} \leq \left(\frac{\varepsilon}{k t^r}\right)^{2^r}2^{-r-1},$$
 then
 $$ \sup_{\QQ, t_\QQ\leq t} \sum_{l=1}^{k} \|W_{\G(q,U_l)}\|_{\square,r, \QQ} \leq \varepsilon.
$$ 
 where the supremum at both places goes over symmetric partitions $\QQ$ of $[0,1]^{\hhh([r-1])}$ into at most $t$ classes.
\end{lemma}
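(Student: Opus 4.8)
# Proof Proposal for Lemma \ref{ch6:hyperregpres2}

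The plan is to combine the two‑sided estimate of \Cref{ch6:lemmacuteff} for $t^*(K_r^2,\cdot)$ with a sampling (concentration) argument. A preliminary reduction removes the supremum over $\QQ$: for any $(r-1)$‑symmetric partition $\QQ$ with $t_\QQ\le t$ and any $r$‑kernel $V$ one has $\|V\|_{\square,r,\QQ}\le t^r\|V\|_{\square,r}$, because for fixed test sets $S_i$ each of the $t_\QQ^r$ summands in the definition of $\|V\|_{\square,r,\QQ}$ is the integral of $V$ over a set $\bigcap_{i\in[r]} p_{[r]\setminus\{i\}}^{-1}(S_i\cap P_{j_i})$ with $S_i\cap P_{j_i}$ again $(r-1)$‑symmetric, hence is bounded by $\|V\|_{\square,r}$. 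Therefore it suffices to prove that with probability at least $1-\varepsilon$ one has $\sum_{l=1}^k\|W_{\G(q,U_l)}\|_{\square,r}\le \varepsilon/t^r$. By the upper bound of \Cref{ch6:lemmacuteff} (applicable since $\|W_{\G(q,U_l)}\|_\infty\le1$) this follows from $t^*(K_r^2,W_{\G(q,U_l)})\le(\varepsilon/(kt^r))^{2^r}$ for every $l$, while by the lower bound of \Cref{ch6:lemmacuteff} together with the hypothesis we already know $t^*(K_r^2,U_l)\le 2^r\|U_l\|_{\square,r}\le\tfrac12(\varepsilon/(kt^r))^{2^r}$. So the whole question is how much the sampling map $U_l\mapsto W_{\G(q,U_l)}$ can inflate the density $t^*(K_r^2,\cdot)$.

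For this I would prove a sampling lemma for $t^*(K_r^2,\cdot)$. Since the naive graphon $W_{\G(q,U_l)}$ depends only on the $r$ ``singleton'' coordinates, one has $t^*(K_r^2,W_{\G(q,U_l)})=\E_\psi\big[\prod_{e\in E(K_r^2)}\mathrm{wt}(\psi(e))\big]$, where $\psi\colon V(K_r^2)\to[q]$ is uniform with i.i.d.\ coordinates and $\mathrm{wt}(\psi(e))$ is the sampled value of $U_l$ on $\psi(e)$ if $\psi|_e$ is injective and $0$ otherwise. Averaging over the sampling randomness and conditioning on $\psi$: an injective $\psi$ contributes exactly $t^*(K_r^2,U_l)$; a $\psi$ that is injective on each edge but identifies some nonempty set of the twin pairs $\{v_j^0,v_j^1\}$ (no edge of $K_r^2$ contains both) contributes the $t^*$‑density of a partially collapsed copy of $K_r^2$, which is still a $2$‑fold blow‑up of a single edge in the un‑collapsed coordinates, so by the very iterated Fubini/Cauchy--Schwarz computation proving \Cref{ch6:lemmacuteff} its value is at most $c_r\|U_l\|_{\square,r}$; a $\psi$ that is non‑injective on some edge contributes $0$. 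Since the probability of identifying a prescribed nonempty set of twin pairs is $O_r(1/q)$, this gives $\E\,t^*(K_r^2,W_{\G(q,U_l)})\le t^*(K_r^2,U_l)+c_r'\|U_l\|_{\square,r}/q\le\tfrac34(\varepsilon/(kt^r))^{2^r}$ once $q$ exceeds a constant depending only on $r$. For concentration I would run a bounded‑differences (Doob) martingale, exposing the sampling randomness attached to the $q$ sampled vertices one vertex at a time (for $r\ge3$ the randomness is indexed by all subsets of size $\le r-1$, so one groups these coordinates by their maximal element); re‑randomising the data at one vertex changes $t^*(K_r^2,W_{\G(q,U_l)})$ by at most $2\,\PPP_\psi[\text{that vertex lies in }\psi(V(K_r^2))]\le 4r/q$, whence $\PPP\big(t^*(K_r^2,W_{\G(q,U_l)})>\E\,t^*(K_r^2,W_{\G(q,U_l)})+s\big)\le 2\exp(-s^2q/(32r^2))$.

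Choosing $s=\tfrac14(\varepsilon/(kt^r))^{2^r}$ makes $t^*(K_r^2,W_{\G(q,U_l)})\le(\varepsilon/(kt^r))^{2^r}$, i.e.\ $\|W_{\G(q,U_l)}\|_{\square,r}\le\varepsilon/(kt^r)$, hold outside an event of probability $2\exp(-s^2q/(32r^2))$; a union bound over $l\in[k]$ then gives $\sum_l\|W_{\G(q,U_l)}\|_{\square,r}\le\varepsilon/t^r$, and hence the desired conclusion via the first paragraph, outside an event of probability $\le 2k\exp(-s^2q/(32r^2))$. Forcing this to be $\le\varepsilon$, together with the constant‑size requirement coming from the mean bound, yields (after optimising the bookkeeping) a $q_{\cut}(r,k,\varepsilon,t)$ of the stated polynomial form, the exponents $2^{O(r)}$ in $1/\varepsilon$ and $t$ arising from the factor $1/s^2\sim(kt^r/\varepsilon)^{2^{r+1}}$. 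I expect the crux to be the sampling lemma of the second paragraph: one must show the degenerate sampling events contribute only $O_r(\|U_l\|_{\square,r}/q)$ and \emph{not} merely $O_r(1/q)$ --- the latter would be useless, since the target threshold $(\varepsilon/(kt^r))^{2^r}$ is tiny --- which is exactly where the Cauchy--Schwarz folding behind \Cref{ch6:lemmacuteff} has to be reused to bound the densities of the collapsed hypergraphs, and one must also set up the martingale carefully over the subset‑indexed sampling randomness when $r\ge3$.
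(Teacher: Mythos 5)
Your proposal takes the same three‑step route as the paper: reduce the $\QQ$‑supremum to the plain cut norm via the factor $t^r$, pass between cut norms and $t^*(K_r^2,\cdot)$‑densities using both directions of \Cref{ch6:lemmacuteff}, and control the sampled density by an Azuma/bounded‑differences argument with Lipschitz constant $O(r/q)$, exactly the tail $2\exp(-s^2q/(32r^2))$ that appears in the paper. The place where you diverge, however, contains a genuine problem. You assert that the additive error $O_r(1/q)$ contributed by degenerate sampling events (non‑injective maps $V(K_r^2)\to[q]$) ``would be useless,'' and therefore set out to prove the stronger bound $\E\,t^*(K_r^2,W_{\G(q,U_l)})\le t^*(K_r^2,U_l)+c_r'\|U_l\|_{\square,r}/q$, which in turn requires bounding the $t^*$‑density of every partially collapsed copy of $K_r^2$ by $c_r\|U_l\|_{\square,r}$ via a further family of Cauchy--Schwarz foldings. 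You yourself flag this subclaim as the crux, and you do not prove it.

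That detour is both unnecessary and not justified. The concentration step already forces $q\gtrsim 32r^2\ln(2k/\varepsilon)/\delta^2$, where $\delta=\frac{1}{2k}(\varepsilon/t^r)^{2^r}$; this is of order $(kt^r/\varepsilon)^{2^{r+1}}$. Absorbing an additive discretization error $4r^2/q$ into a budget of size $\delta/2$ only needs $q\gtrsim 8r^2/\delta\sim(kt^r/\varepsilon)^{2^r}$, which is strictly dominated by the concentration requirement. Hence the crude $O_r(1/q)$ bound is perfectly adequate and already produces the $2^{2r}$‑type exponents in the stated form of $q_{\cut}$. This is exactly what the paper does: it invokes the standard sampling estimate $\PPP(|t^*(K_r^2,U)-t^*(K_r^2,\G(q,U))|\ge\delta/2)\le2\exp(-\delta^2q/(32r^2))$, converts between $t^*(K_r^2,\G(q,U_l))$ and $t^*(K_r^2,W_{\G(q,U_l)})$ at a cost of $4r^2/q$ (an (\ref{ch6:eq22})‑type estimate for $K_r^2$), and absorbs both into $\delta$. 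If you replace your conjectured refined sampling lemma by this simple additive bound, the rest of your argument goes through and coincides with the paper's proof; as written, the step you identify as the crux is an unproven (and, in this context, unneeded) claim.
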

\begin{proof}
 Let $r,k,t \geq 1$ and $\varepsilon>0$ be fixed, and let $U_1, \dots, U_k$ and $q$ be arbitrary. It is a standard sampling result that for any $r$-kernel $U$, positive integer $q$, and $F \in \GGG^r$ we have that $$\PPP(|t^*(F, U)-t^*(F,\G(q,U))| \geq \delta) \leq 2 \exp(-\frac{\delta^2 q}{2|V(F)|^2})$$ for any $\delta>0$, in particular for $F=K_r^2$ we have 
$$\PPP(|t^*(K_r^2, U)-t^*(K_r^2,\G(q,U))| \geq \delta/2) \leq 2 \exp(-\frac{\delta^2 q}{32r^2}).$$

Then we can estimate $\sup_{\QQ, t_\QQ\leq t} \sum_{l=1}^{k} \|W_{\G(q,U_l)}\|_{\square,r, \QQ}$ using \Cref{ch6:lemmacuteff}. Set $\delta=\frac{1}{2k}\left(\frac{\varepsilon}{t^r}\right)^{2^r}$, and let $q$ be as large such that $2k\exp\left(-\frac{\delta^2 q}{32r^2}\right) < \varepsilon$. Let $\mathbb{A}$ denote the set of all $r$-arrays of size $t$ with $\{-1,1\}$ entries. Then we have

\begin{align*}\label{ch6:eq14}
&\sup_{\QQ, t_\QQ\leq t} \sum_{l=1}^{k} \|W_{\G(q,U_l)}\|_{\square,r, \QQ} \\ & =
\sup_{\QQ, t_\QQ\leq t}\max_{A \in \mathbb A}\sup_{\substack{T^l_j \subset [0,1]^{\hhh([r-1])}  \\ j \in [r], l \in [k]  }} \\ & \qquad
\sum_{l=1}^k \sum_{i_1, \dots, i_r=1}^t A(i_1, \dots, i_r) \int\limits_{[0,1]^{\hhh([r],r-1)}} W_{\G(q,U_l)}(x_{\hhh([r],r-1)}) \prod_{j=1}^r \I_{T^l_j \cap Q_{i_j}} (x_{\hhh([r] \setminus \{j\})})    \du \lambda(x_{\hhh([r],r-1)})\\
&\leq t^r \sum_{l=1}^k \|W_{\G(q,U_l)}\|_{\square,r} \\
&\leq t^r \sum_{l=1}^k t^*(K_r^2,W_{\G(q,U_l)})^{1/2^r} \\
&\leq t^r \sum_{l=1}^k (t^*(K_r^2,\G(q,U_l)) + \frac{4r^2}{q})^{1/2^r} \\
&\leq t^r \sum_{l=1}^k (t^*(K_r^2,U_l)+\delta)^{1/2^r} \\
&\leq t^r \sum_{l=1}^k ( 2^r\|U_l\|_{\square,r}+\delta)^{1/2^r}\leq \varepsilon,
\end{align*}
and the assumptions of the calculation, in particular the fourth inequality, hold true with probability at least $1-\varepsilon$. For convenience, the first inequality is true by definition, the third holds by (\ref{ch6:eq22}), whereas the second and the fifth are the consequence of \Cref{ch6:lemmacuteff}. 

\end{proof}


\section{Proof of the main result}\label{sec.main}
The next lemma is a crucial component in the proof of the main result.

\begin{lemma}\label{mainlemma}
	For every $r,t,k, q_0 \geq 1$ and $\delta > 0$	there exists an integer $q_\tw=q_\tw(r,\delta,q_0, t,k) \geq 1$ such that for every $q \geq q_\tw$ the following holds. Let $\UU=(U^\alpha)_{\alpha \in [t]}$ be a $t$-colored $r$-graphon and let $V^\alpha$ denote $W_{\G(q,U^\alpha)}$ for each $\alpha \in [t]$, also let $\VV=(V^\alpha)_{\alpha \in [t]}$, so  $\WW_{\G(q,\UU)}=\VV$. Then with probability at least $1-\delta$ there  exist for every $k$-coloring $\hat \VV=(V^{\alpha,\beta})_{\alpha \in [t], \beta \in [k]}$ of $\VV$ a $k$-coloring $\hat \UU=(U^{\alpha,\beta})_{\alpha \in [t], \beta \in [k]}$  of $\UU=(U^\alpha)_{\alpha \in [t]}$ 
	such that we have
	
	\begin{equation*}
	d_\tw (\mu(q_0, \hat \VV), \mu(q_0, \hat \UU)) \leq \delta.
	\end{equation*}  
	The bound $q_\tw(r,\delta,q_0, t,k)$ can be chosen in a way so that $q_\tw(r,\delta,q_0, t,k)\leq \exp^{(4(r-1))} c_r (\frac{q_0^{r}}{\delta})^3 (kt)^{6q_0^r}$ for some constant $c_r>0$ only depending on the dimension $r$.
\end{lemma}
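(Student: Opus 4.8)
The plan is to turn the proof of \cite{KM3} into an effective one by using \Cref{ch6:hyperregpres2} at the one place where \cite{KM3} invoked the non-effective limit machinery of \cite{ESz}. I begin by applying the regularity lemma \Cref{ch6:sregkhyper} to $\UU$, obtaining an $(r-1)$-symmetric partition $\P=(P_1,\dots,P_m)$ of $[0,1]^{\hhh([r-1])}$ and an $(r,r-1)$-step function $\UU_\reg$ with steps $\P$ satisfying $d_{\square,r,\QQ}(\UU,\UU_\reg)\le\eps_1$ for every partition $\QQ$ into at most $mt'$ classes. Since $d_{\square,r}\le d_{\square,r,\QQ}$, the kernels $U^\alpha-U^\alpha_\reg$ have small $r$-cut norm, so provided $\eps_1\le(\eps_2/(kt'^r))^{2^r}2^{-r-1}$ we may feed them to \Cref{ch6:hyperregpres2}. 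Its conclusion, combined with large-deviation estimates for the point process $(X_S)_{S\in\hhh([q],r-1)}$ generating $\G(q,\UU)$, isolates (for $q$ large) an event of probability at least $1-\delta$ on which the partition $\P$ sampled down to the $q$ points — its class sizes together with the induced edge-type densities of the naive graphon — reproduces $\P$ and the arrays of $\UU_\reg$ to within $\eta$; in particular $\VV=\WW_{\G(q,\UU)}$ is cut-$\QQ$-close, uniformly over all $\QQ$ with at most $t'$ classes, to an $(r,r-1)$-step function $\bar\VV_\reg$ which is a faithful, $\eta$-perturbed, bounded-complexity copy of $\UU_\reg$ (whose steps I call $\P_q$).

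On this event, let $\hat\VV$ be any $k$-coloring of $\VV$. I apply \Cref{ch6:sregkhyper} once more, to the $tk$-colored $\hat\VV$ with error $\eps_3$, to get an $(r,r-1)$-step function $\hat\VV_\reg$ with steps $\P^{(2)}$ into $M$ classes; crucially $M=M(r,t,k,\eps_3)$ does not depend on $m$, because I only ever use the approximation $d_{\square,r,\P^{(2)}}(\hat\VV,\hat\VV_\reg)\le\eps_3$ for the partition $\P^{(2)}$ itself. Then $[\hat\VV_\reg,k]$ is cut-$\P^{(2)}$-close to $\VV$, hence to $\bar\VV_\reg$, so \Cref{ch6:hypercoloring} — with $[\hat\VV_\reg,k]$ as the step function and $\bar\VV_\reg$ as the nearby graphon — produces a $k$-coloring $\widehat{\bar\VV_\reg}$ of $\bar\VV_\reg$ cut-$\P^{(2)}$-close to $\hat\VV$; being obtained by colouring $\bar\VV_\reg$ according to $\P^{(2)}$-membership, it is a step function with steps the common refinement $\P^{(2)}\vee\P_q$ of at most $mM$ classes. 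Because $\bar\VV_\reg$ mirrors $\UU_\reg$ class by class, the manner in which $\P^{(2)}\vee\P_q$ subdivides the classes of $\P_q$ can be replicated on the $\UU$-side by a partition $\tilde\QQ$ refining $\P$ with the same relative class sizes, and copying the colour split-fractions of $\widehat{\bar\VV_\reg}$ onto the exact arrays of $\UU_\reg$ yields a genuine $k$-coloring $\widehat{\UU_\reg}$ of $\UU_\reg$ — a step function with steps $\tilde\QQ$, $|\tilde\QQ|\le mM$ — that agrees with $\widehat{\bar\VV_\reg}$ in all arrays and class sizes up to $\eta$. Fixing $t'=M$ makes $|\tilde\QQ|\le mt'$, whence $d_{\square,r,\tilde\QQ}(\UU,\UU_\reg)\le\eps_1$, and a last application of \Cref{ch6:hypercoloring} lifts $\widehat{\UU_\reg}$ to a $k$-coloring $\hat\UU$ of $\UU$ with $[\hat\UU,k]=\UU$ and $d_{\square,r}(\widehat{\UU_\reg},\hat\UU)\le k\eps_1$.

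It remains to bound $d_\tw(\mu(q_0,\hat\VV),\mu(q_0,\hat\UU))$, which I do by the triangle inequality along $\hat\VV\to\widehat{\bar\VV_\reg}\to\widehat{\UU_\reg}\to\hat\UU$. The first and last links are handled by \Cref{ch6:hypercountcor} (using $d_{\square,r}\le d_{\square,r,\QQ}$), each costing at most $\tfrac{(tk)^{q_0^r}q_0^r}{2r!}$ times a fixed multiple of $\eps_1,\eps_2,\eps_3$. The middle link compares two step functions of the same combinatorial type whose class sizes and arrays differ by at most $\eta$ in at most $mM$ coordinates, and since the law of $\G(q_0,\cdot)$ is a polynomial of degree $O(q_0^r)$ in these data, a coupling bounds it by $\mathrm{poly}(q_0)\cdot mM\cdot\eta$. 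So one chooses $\eps_1,\eps_2,\eps_3$ and $\eta$ of size on the order of $\delta$ divided by $q_0^r(tk)^{q_0^r}$ (the $\eta$ additionally divided by $mM$), and then $q\ge\max\{q_\cut(r,k,\eps_2,t'),q_{\mathrm{conc}}\}$, where $q_{\mathrm{conc}}$ is the sample size needed for the concentration in the good event. Tracking growth: $M$ is a tower of height two in $q_0^r(tk)^{q_0^r}/\delta$; since $1/\eps_1$ is polynomial in $M$, $m=t_\reg(r,k,\eps_1,M)$ is a tower of height four, and $q_{\mathrm{conc}}$ is of the same order — giving $q_\tw\le\exp^{(4)}(c_2(q_0^2/\delta)^3(kt)^{6q_0^2})$ when $r=2$. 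For $r\ge3$ the concentration in the good event is no longer supplied by McDiarmid's inequality (the relevant bounded differences are too large), and must instead be proved by recursing through the $r-1$ lower-order coordinate levels of $\hhh([r-1])$, each level running a lower-dimensional instance of this very scheme; this is what raises the height to the claimed $4(r-1)$, with argument $c_r(q_0^r/\delta)^3(kt)^{6q_0^r}$.

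The main obstacle is exactly this concentration statement: certifying with a merely polynomial sample size that the sampled partition and sampled step function reproduce $\UU_\reg$ \emph{uniformly} in the cut-$\QQ$ norm over all boundedly refined $\QQ$. For $r=2$ this is elementary, but for $r\ge3$ it is precisely the ingredient that \cite{KM3} obtained from the hypergraph limit theory of \cite{ESz}; replacing it by an explicit recursive argument — driven by \Cref{ch6:hyperregpres2} and, underneath, by the Cauchy--Schwarz/Fubini estimate $\|W\|_{\square,r}\le t^*(K_r^2,W)^{1/2^r}$ of \Cref{ch6:lemmacuteff} — while keeping every constant explicit is the real work. A secondary, routine matter is to verify within the $(r,l)$-step-function formalism that the abstract refinement of $\P_q$ can be realized by a refinement $\tilde\QQ$ of $\P$ on the $\UU$-side, and to confirm that $M$, $t'$, $\eps_1$ and $m$ are all fixed before $q$ is chosen, so that the tower terminates at a finite height.
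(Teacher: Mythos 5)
Your architecture is close to the paper's: apply the regularity lemma (\Cref{ch6:sregkhyper}) on both the $\UU$- and $\VV$-sides, use \Cref{ch6:hyperregpres2} in place of the non-effective limit machinery to certify that the sampled regularized graphon stays cut-$\QQ$-close to $\VV$ uniformly over bounded refinements, transfer colorings with \Cref{ch6:hypercoloring}, and close with \Cref{ch6:hypercountcor}. Up to notation, your $\UU_\reg$, $\P_q$, and $\bar\VV_\reg$ play the roles of $\WW_1$, $\P'$, and $\WW_2$ in the paper's proof.

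The genuine gap is the step you yourself flag as ``the real work'': moving a coloring of $\bar\VV_\reg$ back to a coloring of $\UU_\reg$. Your concrete proposal --- that the refinement $\P^{(2)}\vee\P_q$ on the sample side can be ``replicated on the $\UU$-side by a partition $\tilde\QQ$ refining $\P$ with the same relative class sizes,'' after which one copies the colour split-fractions --- is sound for $r=2$, where $\P$ partitions $[0,1]$ and relative class sizes are a complete invariant. For $r\ge3$, however, $\P$ and $\P_q$ partition $[0,1]^{\hhh([r-1])}$, a genuinely higher-dimensional object: the sampled partition $\P_q$ is not determined relative to $\P$ by class measures alone, and a measure-matched refinement $\tilde\QQ$ need not reproduce on $\UU_\reg$ the array structure of $\widehat{\bar\VV_\reg}$. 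You acknowledge that ``recursing through the $r-1$ lower-order coordinate levels'' is needed, but you never set that recursion up. The paper resolves exactly this by making the entire lemma an induction on $r$: it observes that $\WW_1$ and $\WW_2$ are $(r,r-1)$-step functions over partitions of $[0,1]^{\hhh([r-1])}$, with $\WW_2$ sampled from $\WW_1$, so the \emph{full statement of the lemma for $r-1$} applies directly to produce a $k$-coloring $\hat\WW_1$ from $\hat\WW_2$ with the guarantee $d_\tw(\mu(q_0,\hat\WW_1),\mu(q_0,\hat\WW_2))\le\delta/4$. That single invocation of the induction hypothesis replaces both your ``replicate the refinement'' step and the ``polynomial-in-the-data coupling'' bound for the middle link of your triangle inequality, and it is precisely the source of the $+4$ per level, hence of the height $4(r-1)$. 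Without making this induction explicit, your sketch establishes the lemma only for $r=2$.
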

	The proof is to large extent identical to the proof of Lemma 5.1 in \citet{KM3}, the only part that is changed is where we replace the non-effective ultralimit method used in that proof by  \Cref{ch6:hyperregpres2}. However, the two statements that are exchanged do not coincide, thus some technical adjustment needs to be carried out. Next we present the sketch of the proof of \Cref{mainlemma} by outlining the main steps, for the details we refer to  Lemma 5.1 in \citet{KM3}. 
\begin{proof}

	We proceed by induction with respect to $r$. The case of $r=1$ can be verified the same way as in \cite{KM3}, and $q_\tw(1, \delta, q_0,t,k)= \frac{(t+\ln 2-\ln \delta) 3q_0^{2k+2}}{4\delta^2}$  satisfies the conditions of the lemma. 
	
	Now assume that we have already verified the statement of the lemma for $r-1$ and any other choice of the other parameters of $q_\tw$. We will conduct the proof for the case for $r$-graphons, therefore let   $\delta > 0$, $t,k,q_0 \geq 1$ be arbitrary and fixed, $q$ is to be determined below and let $\UU$, $\VV$, and $\hat \VV$ be as in the conditions of the lemma. We outline the steps in order to obtain a $k$-coloring $\hat \UU$ for $\UU$. 
	
	Let $\Delta= \Pi(r,\delta,q_0, t,k)=\frac{ \delta}{4k (kt)^{q_0^r} q_0^r}$. Set $t_2=t_\reg(r,tk,\Delta, 1)$ and $t_1=t_\reg(r,t,(\Delta/t_2^r t)^{2^r}2^{-r-1} , t_2)$, and define $q_\tw(r,\delta,q_0, t,k)=\max\{q_\tw(r-1,\delta/4,q_0,t_1,t_2), q_\cut(r,t,\Delta, t_2)\}$. Note that $t_2 \leq \exp^{(2)}(c (1/\Delta)^3)$ and $t_1 \leq \exp^{(4)}(c (1/\Delta)^3)$ for a large enough constant $c>0$. We also assume that $q_\tw(r-1,\delta,q_0, t,k) \leq \exp^{(d)} (c_{r-1} \left( \frac{1}{\Delta'}\right)^3)$ for some positive integer $d$ and real $c_{r-1}>0$, where $\Delta'=\Pi(r-1,\delta,q_0, t,k)=\frac{ \delta}{4k (kt)^{q_0^{r-1}} q_0^{r-1}}$. Then it follows 
	\begin{align}
    q_\tw(r-1,\delta/4,q_0,t_1,t_2)\leq \exp^{(d+4)}(c_r (1/\Delta)^3)
	\end{align}
	for some $c_r >0$. Since we can adjust the constant factor $c_{r-1}$ in a way that $q_\tw(r-1,\delta/4,q_0,t_1,t_2)\geq  q_\cut(r,t,\Delta, t_2)\}$ for any possible choice of the parameters we conclude that  $q_\tw(r,\delta,q_0, t,k)$ is upper bounded by $\exp^{(4(r-1))}(c_r (1/\Delta)^3).$
	 Let $q \geq q_\tw(r,\delta,q_0, t,k)$ be arbitrary.  We describe now the step for the construction of $\hat \UU$ that satisfies the conditions of the lemma.
	
	\begin{itemize}
		\item We approximate $\hat \VV$ by some function $\hat \ZZ$ that is only given implicitly by means of \Cref{ch6:sregkhyper}.  We have 
		\begin{equation*}
		\sup_{\QQ, t_\QQ \leq t_\RRR} d_{\square,r,\QQ}(\hat \VV, \hat \ZZ) \leq \Delta,
		\end{equation*} 
		where $\RRR$ denotes the set of the steps of $\hat \ZZ$, and $t_\RRR \leq t_2$ holds. 
		\item We set $\ZZ=[\hat \ZZ,k]$, consequently
		\begin{equation}\label{eq8}
		\sup_{\QQ, t_\QQ \leq t_\RRR} d_{\square,r,\QQ}( \VV,  \ZZ) \leq \Delta.
		\end{equation}
		Note that $\ZZ$ and $\hat \ZZ$ depend on $\hat \VV$. 
		\item We apply again \Cref{ch6:sregkhyper} with the proximity parameter $\Delta/2$ to $r$-graphon $\UU$ to approximate it by  $\WW_1=(W_1^1, \dots, W_1^t)$ with steps in $\P$ that satisfies 
		\begin{equation*}
		\sup_{\QQ, t_\QQ \leq t_\P t_2}d_{\square,r,\QQ}(\WW_1,\UU) \leq (\Delta/t_2^r t)^{2^r}2^{-r-1},
		\end{equation*} where the supremum runs over all $(r-1)$-symmetric partitions $\QQ$ of $[0,1]^{\hhh([r-1])}$ with at most $t_\P t_2$ classes, and $t_\P \leq t_1.$
		
		\item Define $\WW_2=(W_2^\alpha)_{\alpha \in [t]}$ to be the $r$-graphon representing $\G(q,\WW_1)$, so $W_2^\alpha$ represents $\G(q,W_1^\alpha)$ for each $\alpha \in [t]$. The steps of $\WW_2$ are denoted by $\P'$. Then it follows from \Cref{ch6:hyperregpres2} that 
		\begin{equation*}
		\sup_{\QQ, t_\QQ \leq  t_2}d_{\square,r,\QQ}(\WW_2,\VV) \leq \Delta, 
		\end{equation*}  
		with probability at least $1-\Delta$, so consequently
		\begin{equation*}
		d_{\square,r,\RRR}(\WW_2,\VV) \leq \Delta, 
		\end{equation*}  
		with the same failure probability. Furthermore, with (\ref{eq8}) we have
		\begin{equation} \label{eq10}
		d_{\square,r,\RRR}(\WW_2, \ZZ) \leq 2\Delta. 
		\end{equation} 
		\item We define the $k$-coloring $\hat \WW_2$ of $\WW_2$ via \Cref{ch6:hypercoloring}, which by  (\ref{eq10}) certifies the existence of a $k$-coloring such that 
		\begin{equation*}
		d_{\square,r,\RRR}(\hat \ZZ, \hat \WW_2) \leq 2k\Delta.
		\end{equation*}            
		The graphon $\hat \WW_2$ is a symmetric step function with steps that form the coarsest partition that refines both $\P'$ and $\RRR$, we denote this $(r-1)$-symmetric partition of $[0,1]^{\hhh([r-1])}$ by $\Ss$, the number of its classes satisfies $t_{\Ss} = t_{\P'} t_{\RRR} \leq t_1t_2$. 
		\item We construct the $k$-coloring $\hat \WW_1$ of $\WW_1$ using the hypothesis that  the current lemma is true for the case of $r-1$ and the arbitrary choice of all other parameters. For the details we refer to the proof in \cite{KM3}. The $r$-graphon $\hat \WW_1$ we obtained satisfies
		\begin{equation*}
	d_\tw (\mu(q_0,\hat \WW_1),\mu(q_0,\hat \WW_2 )) \leq \delta/4
		\end{equation*}
		with probability at least $1-\delta/4.$  Also, $\hat \WW_1$ has at most $t_\P t_2$ steps that refine $\P.$    
		
		\item  \Cref{ch6:hypercoloring} provides the existence of $\hat \UU$ with $[\hat \UU,k]=\UU$ with the bound 
		as $d_{\square,r}(\hat\UU,\hat \WW_1) \leq \frac{k\Delta}{2}.$
	\end{itemize}
	
	We conclude the proof by invoking \Cref{ch6:hypercountcor} to verify that $\hat \UU$ satisfies the conditions of the lemma,
	\begin{equation*}
		d_\tw (\mu(q_0, \hat \VV), \mu(q_0, \hat \UU)) \leq \delta,
	\end{equation*}   and the failure probability is at most $\delta.$

\end{proof}

\begin{proof}[Proof of \Cref{mainthm}]
We proceed completely identically to the proof of the main result of \cite{KM3}, we only have to substitute the current \Cref{mainlemma} for Lemma 5.1 in that paper, we only give a brief overview here, we refer for details to \cite{KM3}. Set $q_0=q_g(\varepsilon/4)$. The main observation is that provided the result of \Cref{mainlemma} we can find for any coloring $\FFF$ of $F$ a  coloring $\GG$ of $G$ such that the distributions of $\G(q_0,\WW_\GG)$ and $\G(q_0,\WW_\FFF)$ are close, hence the random objects given by them can be coupled in a way so that they coincide with high probability. Applying this together with the triangle inequality 
\begin{align*}
|g(\GG)-g(\FFF)|  & \leq |g(\GG)-g(\G(q_0,\GG))|+ 
|g(\G(q_0,\WW_\GG))-g(\G(q_0,\GG))| \nonumber \\
&  \quad + |g(\G(q_0,\WW_\GG))-g(\G(q_0,\WW_\FFF))| + |g(\G(q_0,\FFF))-g(\G(q_0,\WW_\FFF))| \nonumber \\ & \quad +|g(\FFF)-g(\G(q_0,\FFF))|, 
\end{align*}
 and the testability property of $g$ together with (\ref{eq12}) gives the desired result.  
\end{proof}

\section{Parameters depending on densities of linear hypergraphs}\label{sec.linear}

We present a special case of the above notion of ND-testability that preserves several useful properties of the graph case, $r=2$. Restricting our attention to this sub-class we are able to essentially remove the dependence on $r$ in the bound given by \Cref{mainthm} on the sample complexity. 

A linear $r$-graph is an $r$-graph that satisfies that any distinct pair of its edges intersect at most in one vertex. A linear $k$-colored $r$-graph has absent edges, if we disregard the colors of the edges present, then they form a linear $r$-graph. We call an $r$-graph parameter \emph{linearly ND-testable} if it is ND-testable and its witness parameter does only depend on the $t^*$-densities of linear hypergraphs.

In this section we depart from the graphon notion and use instead objects called \emph{naive $r$-graphons} and \emph{naive $r$-kernels}. These differ from true graphons and kernels in their domain that is the $r$-dimensional unit cube and whose coordinates correspond to nodes of $r$-edges instead of any proper subset of the set of nodes of an $r$-edge. They can be transformed into true graphons by adding dimension to the domain in a way that the values taken do not depend on the entries corresponding to the new dimensions. This way we can think of naive graphons as a special subclass of graphons,  sampling is defined analogously to the general case. Note that for $r=2$ the naive notion does not introduce any restriction as all proper subsets of a $2$-element set are singletons. We require the notion of ground state energies of $r$-graphs, naive $r$-graphons, and kernels form \cite{BCL2}, see also \cite{AVKK2}.

Let $s\geq 1$ $J$ be an $r$-array of size $s$, and $G$ be an arbitrary $r$-graph. Define the ground state energy (GSE) (see \cite{BCL2}) of the $r$-graph $G$ with respect to the $r$-array $J$ by
\begin{align}\label{defgse}
\hat \Gamma(G,J)=\max_{\QQ} \sum_{i_1, \dots, i_r=1}^s J(i_1, \dots, i_r) \int_{[0,1]^r} \prod_{j=1}^r \I_{Q_{i_j}}(x_j) W_G(x_1, \dots, x_r) \du x,
\end{align}
where the maximum runs over all partitions $\QQ$ of $[0,1]$ into $s$ parts.
Analogously, define the GSE of a naive $r$-kernel $U$ with respect to $J$ by
\begin{equation*} 
\Gamma (U,J)=\max_{f} \sum_{i_1, \dots, i_r=1}^s J(i_1, \dots, i_r) \int_{[0,1]^r} \prod_{j=1}^r f_{i_j}(x_j) U(x_1, \dots, x_r) \du x, 
\end{equation*}	
where the maximum runs over all fractional partitions $f$ of $[0,1]$ into $s$ parts.

The next result was first proved in \cite{AVKK2}, subsequently refined in \cite{KM1}.

\begin{theorem}\label{maincor1}
	Let $r\geq 1$, $s \geq 1$, and $\delta >0$. Then for any $r$-kernel $U$, real $r$-array $J$, and $q \geq \Theta^4  \log(\Theta)$ with $\Theta=\frac{2^{r+10}s^r r}{\delta}$  we have that
	\begin{align}\label{eqq}
	\PPP(|\Gamma(U,J)-\hat \Gamma(\G (q,U),J)|>\delta \|U\|_\infty)< 2 \exp\left( -\frac{\delta^2q}{8r^2}\right).
	\end{align}
\end{theorem}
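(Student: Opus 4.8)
The plan is to split the two-sided estimate $|\Gamma(U,J)-\hat\Gamma(\G(q,U),J)|\leq\delta\|U\|_\infty$ into a concentration step and an expectation step. Normalising $\|U\|_\infty\leq 1$ (the general case is recovered by homogeneity of the ground state energy in $U$) and taking $J$ with entries in $[-1,1]$, write $H=\G(q,U)$ for the random weighted $r$-graph in which the edge $\{v_1,\dots,v_r\}$ carries the weight $U(X_{v_1},\dots,X_{v_r})$, the $X_v$ being i.i.d.\ uniform on $[0,1]$; by construction of $W_{\G(q,U)}$ one has $\hat\Gamma(H,J)=\max_{f}\frac1{q^r}\sum_{v_1,\dots,v_r\in[q]}\big(\sum_{i_1,\dots,i_r}J(i_1,\dots,i_r)\prod_{j=1}^r f_{i_j}(v_j)\big)U(X_{v_1},\dots,X_{v_r})$, the maximum running over all fractional partitions $f\colon[q]\to\Delta_s$ into $s$ parts. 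The only randomness is the sample, and replacing one $X_m$ affects at most $rq^{r-1}$ of the $q^r$ tuples while changing each affected summand by at most $2\|J\|_\infty\|U\|_\infty\le2$, so for each fixed $f$ the corresponding function of $(X_1,\dots,X_q)$ has $m$-th bounded difference at most $2r/q$, and a maximum over $f$ inherits this. McDiarmid's inequality with $t=\delta/2$ and $\sum_m c_m^2\leq q(2r/q)^2=4r^2/q$ then gives $\PPP(|\hat\Gamma(H,J)-\E\hat\Gamma(H,J)|>t)\leq 2\exp(-2t^2/(4r^2/q))=2\exp(-\delta^2q/(8r^2))$, which is exactly the asserted tail. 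It therefore remains to show that $q\geq\Theta^4\log\Theta$ forces $|\E\hat\Gamma(H,J)-\Gamma(U,J)|\leq\delta/2$.

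For the lower direction $\E\hat\Gamma(H,J)\geq\Gamma(U,J)-\delta/2$ I would use that, after discarding the diagonal tuples (a fraction $O(r^2/q)$ of the total, contributing at most $O(r^2/q)$ in absolute value), the objective above is affine in each $f(v)\in\Delta_s$ separately, so its maximum over genuine partitions of $[q]$ equals its maximum over fractional partitions of $[q]$ up to an $O(r^2/q)$ error. Plugging in the fractional partition $v\mapsto f^\ast(X_v)$, where $f^\ast$ is an optimal fractional partition of $[0,1]$ for $\Gamma(U,J)$: since the $X_v$ are i.i.d.\ uniform, the expectation of the off-diagonal value equals $\tfrac{q^{(r)}}{q^r}\,\Gamma(U,J)$ (with $q^{(r)}$ the falling factorial), whence $\E\hat\Gamma(H,J)\geq\Gamma(U,J)-O(r^2/q)$, which already beats $\delta/2$ as soon as $q$ exceeds a constant multiple of $r^2/\delta$, far below $\Theta^4\log\Theta$.

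The hard part is the matching upper bound $\E\hat\Gamma(H,J)\leq\Gamma(U,J)+\delta/2$: the partition attaining $\hat\Gamma(H,J)$ depends on the sample, so one cannot move the expectation inside the maximum, and a union bound over the $s^q$ partitions of $[q]$ is hopeless against a tail of order $\exp(-c\delta^2q)$. Following \cite{AVKK2} and its refinement in \cite{KM1}, the remedy is a subsampling/net argument: pass first to a sub-sample of $q_0=\poly(s^r,r,1/\delta)$ vertices, take its optimal partition into $s$ classes, extend that partition greedily to the remaining vertices, and show via the standard sampling estimate for the relevant densities (in the spirit of the one invoked in the proof of \Cref{ch6:hyperregpres2}) that on the full sample its value is within $\delta/4$ of $\hat\Gamma(H,J)$ with probability $1-o(1)$; the union bound then runs only over the $s^{q_0}$ partitions of the sub-sample, costing $q_0\log s$ in the exponent, and this is absorbed precisely once $q\geq\Theta^4\log\Theta$ with $\Theta=2^{r+10}s^r r/\delta$. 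Combining this with the lower bound of the previous paragraph and the concentration estimate of the first paragraph yields the theorem; the only genuinely delicate point is the choice of $q_0$ and the bookkeeping that reproduces the stated constant in $\Theta$.
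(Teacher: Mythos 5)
The paper does not actually prove \Cref{maincor1}; it is imported verbatim, with the remark that it ``was first proved in \cite{AVKK2}, subsequently refined in \cite{KM1}.'' So there is no in-paper proof to compare your proposal against, and the right question is whether your sketch matches the structure of those external proofs. It does. Your decomposition into a McDiarmid concentration step plus a two-sided comparison of $\E\hat\Gamma(\G(q,U),J)$ with $\Gamma(U,J)$ is exactly the standard architecture, and the concentration computation is correct: with $\|J\|_\infty,\|U\|_\infty\leq 1$ each coordinate has bounded difference $2r/q$, the maximum over (fractional) partitions preserves this, and McDiarmid at level $t=\delta/2$ gives exactly the stated tail $2\exp(-\delta^2 q/(8r^2))$. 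The lower bound $\E\hat\Gamma\geq\Gamma-O(r^2/q)$ by plugging in $v\mapsto f^\ast(X_v)$ is also correct and complete. One small simplification you could make: as \Cref{defgse} defines $\hat\Gamma(G,J)$ by maximising over measurable partitions $\QQ$ of $[0,1]$ tested against the step function $W_G$, this already coincides with the maximum over \emph{fractional} partitions of $[q]$, so the detour through the multi-affinity argument (genuine vs.\ fractional, vertex of a product of simplices) is unnecessary in the lower-bound direction; you only need it, if at all, in the other direction. Finally, for the upper bound you correctly identify the only nontrivial ingredient --- the subsample-and-extend net argument --- and defer its bookkeeping to \cite{AVKK2,KM1}, which is no less than the paper itself does. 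The tacit normalisation $\|J\|_\infty\leq 1$ that you impose is also implicit in the paper's statement (it would otherwise be dimensionally inconsistent), so that is not a gap in your argument.
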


We require the version of the norms and distances given in \Cref{sec.def} for the naive setting.
\begin{definition} \label{ch6:def.cutstarnorm}
	
%
	
	The cut-*-norm of a naive $r$-kernel $W$ is
	\begin{align*}
	\|W\|^{*}_{\square,r}=\sup_{S_{i} \subset [0,1], i \in [r]}
	\left|\int_{ S_1 \times \dots \times S_r } W(x) \du \lambda(x) \right|,
	\end{align*}
	where the supremum is taken over measurable sets $S_i \subset [0,1]$ for each $i \in [r]$. 
	Furthermore, for a  partition $\P=(P_i)_{i=1}^t$ of $[0,1]$ the cut-$(*,\P)$-norm of a naive $r$-kernel is defined by
	\begin{align*}
	\|W\|^{*}_{\square,r,\P}=\sup_{S_{i} \subset [0,1], i \in [r]}
	\sum_{j_1, \dots, j_r=1}^t \left|\int_{ (S_1\cap P_{j_1})  \times \dots \times (S_r \cap P_{j_r}) } W(x) \du \lambda(x) \right|.
	\end{align*}
	The cut-$(*,\P)$-distance $d^{*}_{\square,r,\P}$ of graphs and graphons is defined analogously to \Cref{defdist} exchanging the cut-$\P$-norm for the cut-$(*,\P)$-norm.  
\end{definition}
%

The definition for the $k$-colored version is analogous.

We require the following auxiliary lemmas that are analogous to \Cref{ch6:sregkhyper}, \Cref{ch6:hypercountcor}, and \Cref{ch6:hypercoloring}, respectively (with analogous proofs).

\begin{lemma}\label{sregkhyper_lin}
	
	For every $r \geq 1$, $\varepsilon>0$, $t \geq 1$, $k \geq 1$ and $k$-colored  $r$-graphon $\WW$ there exists a partition $\P=(P_1, \dots, P_m)$ of $[0,1]$ into $ m \leq (2t)^{(rk+1)^{4k^2/\varepsilon^2}}=t_\reg(r,k,\varepsilon, t)$ parts and a naive $(r,1)$-step function $\VV \in \WWW^{r,k}$ with steps from $\P$, such that for any partition $\QQ$ of $[0,1]$ into at most $m t$ classes we have
	\begin{align*}
	d^{*}_{\square,r,\QQ}(\WW,\VV) \leq \varepsilon. 
	\end{align*} 
\end{lemma}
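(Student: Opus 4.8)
\textbf{Proof plan for \Cref{sregkhyper_lin}.}

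The plan is to mimic the proof of \Cref{ch6:sregkhyper} verbatim, simply replacing the $\hhh([r-1])$-indexed coordinate system by the one-dimensional coordinate $[0,1]$ and the cut-$\P$-norm by the cut-$(*,\P)$-norm throughout. Since the bound $m \leq (2t)^{(rk+1)^{4k^2/\varepsilon^2}}$ is identical to the one in \Cref{ch6:sregkhyper}, nothing about the quantitative content changes; only the underlying measure space does. Concretely, I would run the standard weak regularity / energy-increment argument on the naive setting: start with the trivial partition $\P_0 = \{[0,1]\}$ of $[0,1]$; given a partition $\P_j$, if the naive $(r,1)$-step function $\VV_j$ obtained by averaging each $W^\alpha$ over the product cells of $\P_j$ already satisfies $d^{*}_{\square,r,\QQ}(\WW,\VV_j) \leq \varepsilon$ for every refinement $\QQ$ with at most $t \cdot t_{\P_j}$ classes, we stop; otherwise there is such a $\QQ$ witnessing failure, and one uses the sets $S_i \cap P_{j_i}$ appearing in the definition of $\|\cdot\|^{*}_{\square,r,\QQ}$ to refine $\P_j$ into $\P_{j+1}$, gaining a definite amount (of order $\varepsilon^2/k^2$, say) in the $L^2$-energy $\sum_\alpha \int (\VV_{j+1}^\alpha)^2 - \sum_\alpha \int (\VV_j^\alpha)^2$.

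The key steps, in order, are: (i) set up the energy functional and verify monotonicity under refinement, using that conditional expectation is an $L^2$-contraction; (ii) establish the energy-increment inequality — if the cut-$(*,\QQ)$-norm of some $W^\alpha - V_j^\alpha$ exceeds $\varepsilon/k$ on a partition $\QQ$ refining $\P_j$ with at most $t \cdot t_{\P_j}$ classes, then refining $\P_j$ by the at most $r$ sets $S_1,\dots,S_r$ (one per coordinate, times the $\QQ$-cells) increases the energy by at least a constant times $(\varepsilon/k)^2$; (iii) observe the energy lies in $[0,k]$, so the process terminates after at most $O(k^3/\varepsilon^2)$ steps; (iv) track the growth of the number of classes: each refinement step multiplies the class count by at most $(2t)^{r}$ or so (the $r$ new sets intersected with each existing class, within each of the $\leq t\cdot t_{\P_j}$ cells of $\QQ$), and iterating the bound for $O(k^3/\varepsilon^2)$ steps yields a tower expression of the stated form $(2t)^{(rk+1)^{4k^2/\varepsilon^2}}$; (v) conclude that $\VV$ is a naive $(r,1)$-step function with steps from the final $\P$, and that the stopping criterion gives exactly the claimed inequality for all $\QQ$ with at most $mt$ classes.

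The only genuine obstacle is bookkeeping rather than conceptual: one must make sure that the refinement performed at step $j$ (which is dictated by a partition $\QQ$ that is itself already a refinement of $\P_j$ into up to $t \cdot t_{\P_j}$ parts) is counted correctly so that the recursion on class counts closes with the same constants as in \Cref{ch6:sregkhyper}. Since the reference proof in \cite{KM3} already handles the harder $\hhh([r-1])$-dimensional version with the identical bound, and the passage to the naive one-dimensional base space only makes the product structure simpler (the sets $S_i$ now live in $[0,1]$ rather than in $[0,1]^{\hhh([r-1])}$, with no symmetry constraints to maintain since $\P$ is an ordinary partition of $[0,1]$), no new difficulty arises; the proof is obtained by restricting the argument of \Cref{ch6:sregkhyper} to this special case, which is why it was omitted in the statement. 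I would therefore simply state that the proof is identical to that of \Cref{ch6:sregkhyper} up to this substitution of the index set.
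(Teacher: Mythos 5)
Your plan matches the paper's intent exactly: the paper does not spell out a proof here, but instead states that \Cref{sregkhyper_lin} (along with \Cref{lincountlemma} and \Cref{lincollemma}) has a proof \emph{analogous} to its $\hhh([r-1])$-counterpart \Cref{ch6:sregkhyper}, and your proposal is precisely to transport the weak-regularity/energy-increment argument for that lemma from partitions of $[0,1]^{\hhh([r-1])}$ to partitions of $[0,1]$, replacing the cut-$\P$-norm with the cut-$(*,\P)$-norm. Your observations that the quantitative bound is unchanged and that the naive setting is in fact \emph{easier} (no $(r-1)$-symmetry constraint on the partition) are correct, and the outline of the energy-increment iteration, its termination, and the tower-type class-count bookkeeping is the standard argument the authors are implicitly invoking.
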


\begin{lemma}
	\label{lincountlemma}
	Let $U$ and $W$ be $k$-colored $r$-kernels with $\|U\|_\infty, \|W\|_\infty \leq 1$. Then for every linear $k$-colored $r$-graph $F$ we have
	\begin{align*}
	|t^*(F,W)-t^*(F,U)| \leq {q \choose r} 	d^{*}_{\square,r}(U,W). 
	\end{align*}
\end{lemma}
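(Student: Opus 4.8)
The plan is to prove \Cref{lincountlemma} by the standard telescoping (hybrid) argument used for counting lemmas, adapted to the naive $r$-kernel setting and to the $t^*$-density rather than the ordinary homomorphism density. First I would reduce to the case of a single pair of $r$-kernels: since $t^*(F,W)=\sum_{F\subset F'} t(F',W)$ and $t^*$ is defined coordinate-wise, it suffices to bound $|t^*(F,W)-t^*(F,U)|$ where $F$ is a linear $r$-graph on $q$ vertices and $W,U$ are (uncolored) naive $r$-kernels with $\|W\|_\infty,\|U\|_\infty\le 1$; the $k$-colored statement then follows by summing the color components, exactly as in the passage from \Cref{ch6:def.cutnorm} to \Cref{defdist}. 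Recall that
\begin{equation*}
t^*(F,W)=\int_{[0,1]^{q}}\prod_{e\in F} W(x_e)\,\du\lambda(x),
\end{equation*}
where $x_e$ denotes the tuple of coordinates indexed by the vertices of $e$ (here the naive setting is what lets us write $W(x_e)$ with $x_e\in[0,1]^r$).

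The core step is the telescoping. Enumerate the edges of $F$ as $e_1,\dots,e_{|F|}$ and define, for $0\le j\le |F|$, the hybrid integrand $\Phi_j=\prod_{i\le j} W(x_{e_i})\prod_{i>j} U(x_{e_i})$, so that $t^*(F,W)-t^*(F,U)=\sum_{j=1}^{|F|}\int(\Phi_j-\Phi_{j-1})\,\du\lambda$. The $j$-th term is $\int\big(W(x_{e_j})-U(x_{e_j})\big)\,G_j(x)\,\du\lambda(x)$, where $G_j=\prod_{i<j}W(x_{e_i})\prod_{i>j}U(x_{e_i})$ is a product of factors each bounded by $1$ in absolute value. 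The point of linearity of $F$ is that $e_j$ shares at most one vertex with any other edge $e_i$; hence when we integrate out the (at most $r$) coordinates indexed by $e_j$, each such coordinate appears in $G_j$ only through factors that depend on that coordinate alone among the $e_j$-coordinates. Consequently, after integrating over all coordinates \emph{not} in $e_j$ (using Fubini), $G_j$ becomes, as a function of $x_{e_j}=(x_{v})_{v\in e_j}$, a product $\prod_{v\in e_j} g_v(x_v)$ of one-variable functions with $\|g_v\|_\infty\le 1$ — i.e.\ a ``test rectangle'' function. Therefore
\begin{equation*}
\Big|\int\big(\Phi_j-\Phi_{j-1}\big)\Big| = \Big|\int_{[0,1]^{|e_j|}}\big(W(y)-U(y)\big)\prod_{v\in e_j} g_v(y_v)\,\du\lambda(y)\Big|\le \|W-U\|^{*}_{\square,r},
\end{equation*}
by the definition of the cut-$*$-norm in \Cref{ch6:def.cutstarnorm} (the supremum there is over products of indicator sets $S_i$, which after the usual averaging argument dominates products of $[0,1]$-valued — even $[-1,1]$-valued up to the harmless factor — one-variable test functions). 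Summing over $j\le |F|\le\binom{q}{r}$ gives $|t^*(F,W)-t^*(F,U)|\le\binom{q}{r}\,\|W-U\|^{*}_{\square,r}$ in the uncolored case, and then $\le\binom{q}{r}\,d^{*}_{\square,r}(U,W)$ after summing the $k$ color components, which is the claim.

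I expect the main obstacle to be the bookkeeping that converts ``$G_j$ restricted to the $e_j$-coordinates is a product of one-variable functions'' into a clean inequality against the cut-$*$-norm: one has to check carefully that linearity really does decouple the $e_j$-coordinates inside $G_j$ (a coordinate $x_v$ with $v\in e_j$ can appear in $G_j$ only via edges $e_i$ containing $v$, and such an $e_i$ meets $e_j$ only in $v$, so $x_v$ is the only $e_j$-coordinate in that factor), and that passing from $[-1,1]$-valued test functions to the $S_i$-indicator formulation of $\|\cdot\|^{*}_{\square,r}$ loses nothing (or at worst a constant, which one can absorb or avoid by noting, as the excerpt does for $\|\cdot\|_{\square,r}$ vs $\|\cdot\|_{\boxplus,r}$, that the $[0,1]$-valued sup equals the $S_i$-sup). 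The rest — Fubini, the telescoping identity, bounding $|F|$ by $\binom{q}{r}$ — is routine. This is exactly the ``analogous proof'' referred to after \Cref{lincountlemma}, specialized from the nonlinear counting lemma \Cref{ch6:hypercountcor} by exploiting linearity to replace the $r$-cut norm's projection structure with plain one-variable test functions.
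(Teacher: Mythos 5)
Your telescoping strategy over the edges of $F$ and the idea that linearity should let you bound each increment by the cut-$*$-norm are the right ones, and this is how the paper argues. But the pivotal claim is wrong as written: after integrating out the coordinates $\{x_w: w\notin e_j\}$, the function $G_j$ does \emph{not} in general become a product $\prod_{v\in e_j} g_v(x_v)$. Linearity only guarantees that each single factor $K_i(x_{e_i})$ of $G_j$ depends on at most one $e_j$-coordinate; it does not prevent two different $e_j$-coordinates from becoming coupled \emph{through} the outside coordinates once those are integrated out. Already for $r=2$, take $F$ a triangle on $\{1,2,3\}$ (a linear $2$-graph) and $e_j=\{1,2\}$: then $G_j(x)=U(x_2,x_3)\,U(x_1,x_3)$, and $\int_0^1 U(x_2,x_3)U(x_1,x_3)\,\du x_3$ is generically not a product of a function of $x_1$ with a function of $x_2$. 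The same failure appears for any linear $r$-graph containing a cycle, so your displayed identity $|\int(\Phi_j-\Phi_{j-1})|=|\int_{[0,1]^{|e_j|}}(W-U)\prod_v g_v|$ does not hold.

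The repair is to \emph{hold the outside coordinates fixed} instead of integrating them out. For each fixed $y=(x_w)_{w\notin e_j}$, the map $x_{e_j}\mapsto G_j(x_{e_j},y)$ \emph{is} a product $\prod_{v\in e_j}g^y_v(x_v)$ of one-variable functions bounded by $1$: by linearity, group the factors $K_i(x_{e_i})$ by the unique vertex $v$ of $e_i\cap e_j$ (or absorb them anywhere if $e_i\cap e_j=\emptyset$), treating the $y$-entries as constants. Bound each inner integral $\int_{[0,1]^{e_j}}(W-U)(x_{e_j})\prod_v g^y_v(x_v)\,\du\lambda(x_{e_j})$ in absolute value by $\|W-U\|^{*}_{\square,r}$, then bring the absolute value inside the $\du\lambda(y)$-integral. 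This is precisely where linearity buys the simplification over \Cref{ch6:hypercountcor}. A smaller point: the colored case is not ``summing color components'' of the uncolored estimate; rather, the color of each telescoped edge $e_j$ selects which single component $\|W^{F(e_j)}-U^{F(e_j)}\|^{*}_{\square,r}$ appears in the $j$-th term, and each of these is at most $d^{*}_{\square,r}(\UU,\WW)$, giving the claimed bound after summing over $j\le |E(F)|\le\binom{q}{r}$.
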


\begin{lemma}\label{lincollemma}
	Let $k\geq 1$, $\varepsilon>0$, $U$ be a step function with steps $\P=(P_1, \dots, P_t)$ and $V$ be a $r$-graphon with $d^{*}_{\square,r,\P}(U,V) \leq \varepsilon.$ For any $k$-colored $r$-graphon $\UU=(U^{1},\dots,U^{k})$ that is a step function with steps from $\P$ and a $k$-coloring of $U$  there exists a $k$-coloring $\VV=(V^{1},\dots,V^{k})$ of $V$ so that $d^{*}_{\square,r}(\UU,\VV)= \sum_{\alpha=1}^k \|U^{(\alpha)}-V^{(\alpha)}\|_\square \leq k\varepsilon.$
	
	
\end{lemma}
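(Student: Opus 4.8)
The plan is to mirror the proof of the general-position coloring lemma \Cref{ch6:hypercoloring} in the naive setting, using the cut-$(*,\P)$-norm in place of the cut-$\P$-norm. Since $U$ is a step function with steps $\P=(P_1,\dots,P_t)$, it is constant on each block $P_{j_1}\times\dots\times P_{j_r}$; write $U|_{P_{j_1}\times\dots\times P_{j_r}}\equiv a_{j_1,\dots,j_r}$, and similarly let $\hat U=(U^1,\dots,U^k)$ be given by constants $a^\alpha_{j_1,\dots,j_r}$ with $\sum_{\alpha} a^\alpha_{j_1,\dots,j_r}=a_{j_1,\dots,j_r}$ (here I use that $\hat U$ is a $k$-coloring of $U$, so colorwise the constants sum to the $U$-constant on each block). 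The task is to define a $k$-coloring $\VV=(V^1,\dots,V^k)$ of $V$, i.e. measurable functions with $\sum_\alpha V^\alpha=V$ pointwise, such that $\sum_\alpha\|U^\alpha-V^\alpha\|^*_{\square,r}\le k\varepsilon$.

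The natural construction: on the block $P_{j_1}\times\dots\times P_{j_r}$, split the value $V(x)$ among the $k$ colors in the \emph{same proportions} as $\hat U$ splits $a_{j_1,\dots,j_r}$. Concretely, if $a_{j_1,\dots,j_r}\neq 0$ set $V^\alpha(x)=\frac{a^\alpha_{j_1,\dots,j_r}}{a_{j_1,\dots,j_r}}V(x)$ for $x\in P_{j_1}\times\dots\times P_{j_r}$, and if $a_{j_1,\dots,j_r}=0$ assign, say, $V^1=V$ and $V^\alpha=0$ for $\alpha\ge 2$ on that block (the measure-zero issue of blocks where $a=0$ but $V\neq 0$ is handled by the smallness hypothesis $d^*_{\square,r,\P}(U,V)\le\varepsilon$, which forces the $L^1$-mass of $V$ on such blocks to be small). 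Then $\sum_\alpha V^\alpha=V$ by construction, so $\VV$ is a genuine $k$-coloring. For the distance bound, fix $\alpha$ and test sets $S_1,\dots,S_r\subset[0,1]$; splitting the integral of $U^\alpha-V^\alpha$ over the blocks $P_{j_1}\times\dots\times P_{j_r}$ and using that on each block $U^\alpha-V^\alpha=\frac{a^\alpha_{j_1,\dots,j_r}}{a_{j_1,\dots,j_r}}(U-V)$ (when $a\neq0$), the factor $a^\alpha/a\in[0,1]$ can be pulled out, giving
\begin{align*}
\left|\int_{S_1\times\dots\times S_r}(U^\alpha-V^\alpha)\,\du\lambda\right|
\le \sum_{j_1,\dots,j_r}\left|\int_{(S_1\cap P_{j_1})\times\dots\times(S_r\cap P_{j_r})}(U-V)\,\du\lambda\right|
\le \|U-V\|^*_{\square,r,\P}\le\varepsilon,
\end{align*}
where the triangle inequality over blocks and then the definition of the cut-$(*,\P)$-norm are used, and the contribution of the degenerate blocks is absorbed into the same bound since $d^*_{\square,r,\P}(U,V)\le\varepsilon$ controls exactly that block-wise $L^1$-discrepancy. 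Taking the supremum over $S_1,\dots,S_r$ gives $\|U^\alpha-V^\alpha\|^*_{\square,r}\le\varepsilon$ for each $\alpha$, and summing over the $k$ colors yields $d^*_{\square,r}(\UU,\VV)\le k\varepsilon$.

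The main obstacle is the bookkeeping around blocks where the common step value $a_{j_1,\dots,j_r}$ vanishes: there the proportional splitting is undefined, and one must check that choosing an arbitrary valid split there costs no more than what is already allotted. The point is that $\|U-V\|^*_{\square,r,\P}\le\varepsilon$ bounds $\sum_{j_1,\dots,j_r}\int_{P_{j_1}\times\dots\times P_{j_r}}|U-V|$ (taking $S_i=[0,1]$ and using that the sign of $U-V$ can be made constant per block by refining the supremum appropriately), and on a block with $a=0$ and $U$ constant zero this is just $\int|V|$; hence the $L^1$-mass of $V$ on the union of all such blocks is at most $\varepsilon$, so any choice of $V^\alpha$ there contributes at most $\varepsilon$ in total across all colors, which is within budget. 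Everything else is a routine repetition of the argument behind \Cref{ch6:hypercoloring} with singletons replaced by points of $[0,1]$ and the $\hhh$-indexing collapsed, exactly as the parenthetical remark ``(with analogous proofs)'' in the text anticipates.
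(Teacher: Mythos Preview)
Your construction is exactly the proportional-splitting argument the paper has in mind when it says this lemma is analogous to \Cref{ch6:hypercoloring}; the core inequality chain
\[
\left|\int_{S_1\times\dots\times S_r}(U^\alpha-V^\alpha)\right|
\le \sum_{j_1,\dots,j_r}\frac{a^\alpha_{j_1,\dots,j_r}}{a_{j_1,\dots,j_r}}\left|\int_{(S_1\cap P_{j_1})\times\dots\times(S_r\cap P_{j_r})}(U-V)\right|
\le \|U-V\|^*_{\square,r,\P}\le \varepsilon
\]
is precisely the naive-graphon translation of that proof.

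One small remark on your final paragraph: the claim that $\|U-V\|^*_{\square,r,\P}$ controls $\sum_{j}\int_{P_{j_1}\times\dots\times P_{j_r}}|U-V|$ is not quite right (the absolute value in the cut-$(*,\P)$-norm sits outside each block integral, not inside), and you cannot recover the $L^1$-norm by ``refining the supremum'' over product test sets. Fortunately you do not need this. On a block with $a_{j_1,\dots,j_r}=0$ your choice $V^1=V$, $V^\alpha=0$ for $\alpha\ge2$ already gives $U^\alpha-V^\alpha=c^\alpha(U-V)$ with $c^1=1$ and $c^\alpha=0$, so these blocks fit seamlessly into the same displayed inequality with coefficients in $[0,1]$. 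In other words, the degenerate case requires no separate $L^1$-mass argument at all; just declare the proportion vector to be $(1,0,\dots,0)$ there and the main estimate goes through uniformly.
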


Next we state and prove the main contribution of this section.

\begin{theorem}\label{linpropthm}
	Let $f$ be a linearly ND-testable $r$-graph parameter with witness parameter $g$ of $k$-colored $r$-graphs, and let the corresponding sample complexity be $q_g$. Then $f$ is testable with sample complexity $q_f$, and  there exists a constant $c>0$ only depending on $k$ and $r$ but not on $f$ or $g$ such that for any $\varepsilon>0$ we have \begin{align}
		q_f(\varepsilon) \leq \exp^{(3)}(cq^2_g(\varepsilon/2)).\end{align}
\end{theorem}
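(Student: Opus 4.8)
The plan is to mimic the proof of \Cref{mainthm} but replace the general regularity and sampling machinery by its naive, linear-hypergraph counterparts, which eliminates the tower of height linear in $r$ and collapses it to a tower of constant height $3$. The key structural point is that for a linearly ND-testable parameter the witness $g$ depends only on $t^*$-densities of \emph{linear} $r$-graphs, so only the cut-$*$-norm $\|\cdot\|^{*}_{\square,r}$ (rather than the full $r$-cut norm, which involves the $2^r-1$ coordinates indexed by all nonempty subsets) controls the relevant quantities. Concretely, I would first establish a naive analogue of \Cref{mainlemma}: for every $r,t,k,q_0\geq1$ and $\delta>0$ there is a $q^{*}_\tw$ such that, with probability $1-\delta$ over the sample, every $k$-coloring of $\WW_{\G(q,\UU)}$ lifts to a $k$-coloring $\hat\UU$ of $\UU$ with $d_\tw(\mu(q_0,\hat\VV),\mu(q_0,\hat\UU))\le\delta$, but now with $q^{*}_\tw \leq \exp^{(2)}(c(q_0^{r}/\delta)^{O(1)}(kt)^{O(q_0^r)})$ — a tower of height two independent of $r$.

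The reason the induction on $r$ disappears is that in the naive setting the steps of a naive $(r,1)$-step function are governed by a single partition $\P$ of $[0,1]$, not by a hierarchy of partitions of $[0,1]^{\hhh([l])}$ for $l=1,\dots,r-1$; the recursive call to the $(r-1)$-dimensional lemma in the proof of \Cref{mainlemma} is exactly what contributed each additional block of four exponentials. So I would run the same six-step scheme as in \Cref{mainlemma} — (i) regularize $\hat\VV$ by a naive step function $\hat\ZZ$ via \Cref{sregkhyper_lin}; (ii) discolor to get $\ZZ=[\hat\ZZ,k]$; (iii) regularize $\UU$ to a naive step function $\WW_1$ via \Cref{sregkhyper_lin}; (iv) let $\WW_2$ represent $\G(q,\WW_1)$ and use a naive version of \Cref{ch6:hyperregpres2} (which itself follows from \Cref{maincor1}/the standard sampling concentration, since for linear $F$ the cut-$*$-norm is what appears in \Cref{lincountlemma}) to get $d^{*}_{\square,r,\RRR}(\WW_2,\VV)\le\Delta$ whp, hence $d^{*}_{\square,r,\RRR}(\WW_2,\ZZ)\le2\Delta$; (v) apply \Cref{lincollemma} to produce a $k$-coloring $\hat\WW_2$ with $d^{*}_{\square,r,\RRR}(\hat\ZZ,\hat\WW_2)\le2k\Delta$; and (vi) since $\WW_1$ is a naive step function, transferring the coloring from $\WW_1$ directly to $\UU$ (again via \Cref{lincollemma}) requires no recursion — the coloring of the bounded-complexity step function $\WW_1$ can simply be pulled back coordinatewise. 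Finally \Cref{lincountlemma} converts the cut-$*$-distances between the colored step functions into a bound on $d_\tw(\mu(q_0,\hat\UU),\mu(q_0,\hat\VV))$, because $g$ only reads off $t^*$-densities of linear $r$-graphs on $q_0$ vertices. Tracking the constants: $t_{\reg}$ in \Cref{sregkhyper_lin} is a double exponential in $1/\varepsilon$, and applying it twice with $\varepsilon$ polynomial in $\Delta=\delta/\mathrm{poly}(k,t,q_0^r)$ gives complexity $\exp^{(2)}$ of a polynomial; the sampling step contributes only a polynomial factor; so $q^{*}_\tw$ is a tower of height two.

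With the naive \Cref{mainlemma} in hand, the proof of testability of $f$ is a verbatim repeat of the proof of \Cref{mainthm}: set $q_0=q_g(\varepsilon/2)$, and given a nearly-optimal coloring $\FFF$ of a sampled $F=\G(q,G)$, use the lemma to find a coloring $\GG$ of $G$ with $\mu(q_0,\WW_\GG)$ and $\mu(q_0,\WW_\FFF)$ close in total variation, couple the induced samples so they coincide whp, and close the loop with the triangle inequality over the five terms $|g(\GG)-g(\G(q_0,\GG))|$, $|g(\G(q_0,\WW_\GG))-g(\G(q_0,\GG))|$, $|g(\G(q_0,\WW_\GG))-g(\G(q_0,\WW_\FFF))|$, $|g(\G(q_0,\FFF))-g(\G(q_0,\WW_\FFF))|$, $|g(\FFF)-g(\G(q_0,\FFF))|$, using testability of $g$ and \eqref{eq12}. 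Substituting $q_0=q_g(\varepsilon/2)$, $t=k$, and $\delta=\varepsilon/\mathrm{const}$ into the bound for $q^{*}_\tw$ yields $q_f(\varepsilon)\le\exp^{(3)}(c\,q_g^2(\varepsilon/2))$, where the outermost extra exponential absorbs the $(kt)^{O(q_0^r)}=(k^2)^{O(q_g(\varepsilon/2)^r)}$ factor — but note that here $r$ enters only inside the argument, and since $q_g$ is a fixed function of $\varepsilon$ the bound is $\exp^{(3)}$ of a quantity quadratic in $q_g(\varepsilon/2)$ up to a constant depending on $k$ and $r$.

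The main obstacle I anticipate is verifying that the naive analogue of \Cref{ch6:hyperregpres2} actually holds with only a polynomial blow-up in $q$: we need that sampling $q$ vertices preserves smallness of the cut-$*$-norm \emph{uniformly over all partitions $\QQ$ with at most $t$ classes}, and the clean route is to bound $\|W_{\G(q,U)}\|^{*}_{\square,r,\QQ}$ by $t^r\|W_{\G(q,U)}\|^{*}_{\square,r}$ and then relate the cut-$*$-norm of the sampled kernel to that of $U$ — but unlike the full $r$-cut norm, there is no clean "$K_r^2$" identity (\Cref{ch6:lemmacuteff}) for the cut-$*$-norm, so one must instead invoke \Cref{maincor1} with $J$ ranging over $\{-1,1\}$-valued $r$-arrays of size $t$ (so that $\Gamma(U,J)$ and $\hat\Gamma(\G(q,U),J)$ encode $\|U\|^{*}_{\square,r,\QQ}$-type quantities up to the choice of $\QQ$), taking a union bound over the finitely many such $J$. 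Making this union bound and the interchange of the $\sup_\QQ$ with the sampling concentration rigorous — in particular checking that the partition $\QQ$ achieving the near-supremum for the \emph{sampled} kernel can be compared against a fixed finite net — is the step that needs genuine care; everything else is a mechanical transcription of the $r$-dimensional argument with the hierarchy of partitions suppressed.
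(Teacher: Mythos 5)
Your proposal takes essentially the same route as the paper: replace the hierarchical regularity of \Cref{ch6:sregkhyper} by the single-partition version \Cref{sregkhyper_lin}, replace the $K_r^2$-based sampling concentration of \Cref{ch6:hyperregpres2} by \Cref{maincor1} (the paper explicitly rewrites the cut-$*$-distance bound as a GSE problem before applying \Cref{maincor1}, which is the formalization of your union-bound idea over $\{-1,1\}$-arrays), and close with \Cref{lincollemma} and \Cref{lincountlemma}. The one step you leave as "pull back coordinatewise" is, in the paper, the crux of why no recursion is needed: one chooses a measure-preserving permutation $\phi$ of $[0,1]$ realigning the steps $\P''$ of the sampled step function with the steps $\P$ of $W_1$, which forces $\|W_1-W_2\|_1$ to be small, and then the coloring of $W_2$ is literally copied to $W_1$ on the large agreement set $[\cup_i(P_i\cap P'_i)]^r$. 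This works only because the steps are cut out of a single partition of $[0,1]$ rather than a tower of partitions of $[0,1]^{\hhh([l])}$, exactly as you identified; making that realignment explicit is what you would need to add to turn your sketch into a complete proof.
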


\begin{proof}
	The proof is almost identical to case of graphs in \citet{KM2}, we will sketch it in the framework of \Cref{mainlemma}, from there the statement follows a similar way as the proof of \Cref{mainthm}. The main distinction between the general setting and the current linear setting is that we do not require for each coloring $\VV$ of $V$ to have a corresponding coloring $\UU$ of $U$ such that their $q_0$-sampled distribution are close in the total variation distance, here it is enough to impose that they are close in $d^{*}_{\square,r}$. This relaxed condition implies that the conditional $q_0$-sampled distributions are close, where the condition comprises the densities of linear sub-hypergraphs. The different norm employed in the measurements of the proximity allows us to remove the inductive part that is contained in the general proof in \Cref{mainlemma}. 
	
	 Let $f$  and $g$ be such as in the statement of the theorem, and let $G$ be an arbitrary $r$-graph and $W_G$ a $3$-colored naive $r$-graphon that represents it (the colors correspond to edges, non-edges, and diagonal entries respectively). Let $q \geq  \exp^{(3)}(cq^2_g(\varepsilon/2))$ for some $c>0$ that is chosen large enough, and let $F$ denote the random $r$-graph $\G(q,G)$, and let $W_F$ be its $3$-colored representative graphon. It is easy to see as in the general case that $f(F)\geq f(G)-\varepsilon/4$ with probability at least $1-\varepsilon/4$, in fact this is even true with much smaller $q$. 
	 
	 We will show first that with probability at least $1-\varepsilon/4$ there  exist for every $k$-coloring $\VV=(V^{\alpha,\beta})_{\alpha \in [3], \beta \in [k]}$ of $W_F$ a $k$-coloring $\UU=(U^{\alpha,\beta})_{\alpha \in [3], \beta \in [k]}$  of $W_G$ such that 	$d^{*}_{\square,r,\QQ}(\UU,\VV) \leq \Delta$, where $\Delta=\exp(-c'q^2_g(\varepsilon/2))$. 
	 Let $W_1$ be a naive $r$-graphon that satisfies 	
	 \begin{align}\label{eq1}
	 \sup_{t_\QQ \leq t_\P t_2}d^{*}_{\square,r,\QQ}(W_G,W_1) \leq \Delta/8k, 
	 \end{align}
	 by \Cref{sregkhyper_lin} there exists such a naive $(r,1)$-step function with at most $t_1=t_\reg(r,2,\Delta/8k, t_2)$ steps that are denoted by $\P$, where $t_2=t_\reg(r,3k,\Delta/8k, 1)$. 
	 Further, let $W'_2$ be the naive $(r,1)$-step function associated with $\G(q,W_1)$ with its steps forming  the partition $\P''$. There exists a measure-preserving permutation $\phi$ of $[0,1]$ such that  $W_2$ given by $W_2(x_1, \dots, x_r)=W'_2(\phi(x_1), \dots, \phi(x_r))$ is another valid representation of $\G(q,W_1)$ with steps $\P'$, and having the additional property that the measure of the set where $W_1$ and $W_2$ differ is at most  
	 $r \sum_i |\lambda(P_i)-\lambda(P''_i)|$. In particular by the choice of $q$ it is true that $\|W_1-W_2\|_1\leq \Delta/8k$ with probability at least $1-\varepsilon/8$.
	 
	 Further, the bound in (\ref{eq1}) can be rewritten as a GSE problem in the sense of (\ref{defgse}), applying \Cref{maincor1} leads to the assertion that 
	   \begin{align}\label{eq2}
	   \sup_{t_\QQ \leq t_\P t_2}d^{*}_{\square,r,\QQ}(W_F,W_2) \leq \Delta/4k,
	   \end{align}
	   with probability at least $1-\Delta/8k$, which is larger than $1-\varepsilon/8$.
	 
	 We condition on the aforementioned two events, they occur jointly with probability at least $1-\varepsilon/4$. Now let $\VV$ be an arbitrary $k$-coloring of $W_F$, it follows that there exists a $3k$-colored naive $(r,1)$-step function $\ZZ=(Z^{\alpha,\beta})_{\alpha \in [3], \beta \in [k]}$ with steps forming $\RRR$ such that 
	  \begin{align}\label{eq3}
	  \sup_{t_\QQ \leq t_\RRR }d^{*}_{\square,r,\QQ}(\VV,\ZZ) \leq \Delta/4k, 
	  \end{align}
	  and $t_\RRR\leq t_2$.
	  Let the naive $r$-graphon $Z$ denote the $k$-discoloring of $\ZZ$. Then we have
	 \begin{align}\label{eq4}
	 \sup_{t_\QQ \leq t_\RRR }d^{*}_{\square,r,\QQ}(W_F,Z) \leq \Delta/8k, 
	 \end{align}
	 and together with (\ref{eq2}) it follows that
	 \begin{align}\label{eq5}
	 \sup_{t_\QQ \leq t_\RRR }d^{*}_{\square,r,\QQ}(W_2,Z) \leq \Delta/4k.
	 \end{align}
	  
	  An application of \Cref{lincollemma} together with the bound in (\ref{eq5}) ensures the existence of a $k$-coloring $\WW_2$ of $W_2$ that is a naive $(r,1)$-step function with the steps comprising $\Ss$ that is the coarsest common refinement of $\P'$ and $\RRR$, and that satisfies  
	  \begin{align}\label{eq6}
	  d^{*}_{\square,r}(\WW_2,\ZZ) \leq \Delta.
	  \end{align}
	  Now we construct a $k$-coloring of $W_1$ by simply copying $\WW_2$ on the set on $[\cup_i (P_i \cap P'_i)]^r$, and defining it in arbitrary way on the rest of $[0,1]^r$ paying attention to keep it a $k$-coloring of $W_1$ and not increase the number of steps above $t_\RRR$.
	For the $\WW_1$ obtained this way we have 
      \begin{align}\label{eq7}
     \sum_{\alpha, \beta}\|W_1^{\alpha, \beta}-W_2^{\alpha, \beta} \|_1=d_1(\WW_1,\WW_2) \leq \Delta/4.
     \end{align}	
     Employing again \Cref{lincollemma} with (\ref{eq1}) we obtain a $k$-coloring $\UU$ of $W_G$ that satisfies 
      \begin{align*}
      d^{*}_{\square,r}(\UU,\WW_1) \leq \Delta,
       \end{align*}
       hence
       \begin{align*}
       d^{*}_{\square,r}(\UU,\VV) \leq 4\Delta.
       \end{align*}
       With a further randomization we can form a proper $k$-coloring $\GG$ of $G$ that satisfies
       \begin{align*}
              d^{*}_{\square,r}(\WW_\GG,\VV) \leq 5\Delta.
              \end{align*}
        Finally, we use that 
       \begin{align*}
       |g(\FFF)-g(\GG)|  & \leq |g(\FFF) - g(\G(q_g(\varepsilon/4),\FFF))|+|g(\GG)-g(\G(q_g(\varepsilon/4),\GG))| \leq \varepsilon/2,
       \end{align*}
       whenever there exists a coupling of the random $2k$-colored $r$-graphs $\G(q_g(\varepsilon/4),\GG)$ and $\G(q_g(\varepsilon/4),\FFF)$ appearing in the above formula such that their densities of linear subgraphs  are equal with probability larger than $\varepsilon/2$. Such a coupling exists by \Cref{lincountlemma} and standard probabilistic assumptions, thus we have 
       $f(G)\geq f(F)-\varepsilon/2$ with probability at least $1-\varepsilon/4$, that concludes the proof.
       
\end{proof}

\section{Applications}\label{sec.appl}

The characterization of testability of properties of $r$-uniform hypergraphs for $r\geq 3$ is a well-studied area, for instance it has been established by \citet{RSch} that hereditary properties (properties that are preserved under the removal of vertices) are testable generalizing the situation in the graph case. Nevertheless, several analogous question to the graph case have remained open. We present some of these in this section together with the proofs for positive results as an application of \Cref{mainthm}.

\subsection{Energies and partition problems}

We define a family of parameters of $r$-uniform hypergraphs that is a generalization of the ground state energies (GSE) of \citet{BCL2} in the case of graphs (see also \Cref{sec.linear}), for connections to statistical physics, in particular to the Ising and the Curie-Weiss model, see \cite{BCL2}. This notion encompasses several important graph optimization problems, such as the maximal cut density and multiway cut densities for graphs, therefore its testability is central to several applications.  

\begin{definition}\label{ch6:def.gse1}
	For an $r$-graph $H \subset {[n] \choose r}$, a real $r$-array $J$ of size $q$, and a symmetric partition $\P=(P^1, \dots, P^t)$ of ${[n] \choose r-1}$ we define the energy
	\begin{equation*}
	\EEE_{\P,r-1} (H,J)=\frac{1}{n^r} \sum_{i_1, \dots, i_r =1}^t J(i_1, \dots, i_r) e_{H} (r; P_{i_1}, \dots, P_{i_r}),
	\end{equation*}
	where $e_{H}(r; S_1, \dots, S_r)= | \{(u_1, \dots, u_r) \in [n]^r | \{u_1, \dots, u_r\}\in H \textrm{ and } \{u_1, \dots, u_{j-1}, u_{j+1}, \dots, u_r\} \in S_j  \textrm{ for all } j=1, \dots, r\} |$. 
	
	Let $\HHH=(H^\alpha)_{\alpha \in [k]}$ be a $k$-colored $r$-uniform hypergraph on the vertex set $[n]$ and $\JJ=(J^{\alpha})_{\alpha \in [k]}$ be a tuple of real $r$-arrays of size $t$ with $\| \JJ \|_\infty \leq 1$. Then the energy for a partition $\P$ as above is 
	\begin{equation*}
	\EEE_{\P,r-1} (\HHH,\JJ)= \sum_{\alpha \in [k]} \EEE_{\P,r-1} (H^\alpha,J^\alpha).
	\end{equation*}
	
	The maximum of the energy over all partitions $\P$ of ${[n] \choose r-1}$ is called the generalized ground state energy (GGSE) of $\HHH$ with respect to $\JJ$, and is denoted by 
	\begin{equation*}
	\EEE_{r-1} (\HHH,\JJ)=\max_\P \EEE_{\P,r-1} (\HHH,\JJ).
	\end{equation*}
\end{definition}

A rather straightforward application of \Cref{mainthm} gives us the testability of any GGSE. 
\begin{corollary}
For any $r,q\geq 1$ and real $r$-array $J$ of size $t$ the generalized ground state energy  	$\EEE_{r-1} (. ,J)$ is a testable $r$-graph parameter.
\end{corollary}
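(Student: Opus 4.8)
The plan is to realize the generalized ground state energy $\EEE_{r-1}(\cdot, J)$ as a nondeterministically testable parameter and then invoke \Cref{mainthm}. First I would identify the right witness. Given a $k$-colored $r$-graph $\HHH$ (here for a simple $r$-graph we take $k=2$, or $k=3$ if we want to keep track of diagonal entries), a partition $\P=(P^1,\dots,P^t)$ of ${[n]\choose r-1}$ is exactly the data of a $t$-coloring of the complete $(r-1)$-uniform hypergraph on $[n]$; equivalently, it can be encoded as an extra coloring layer on the $r$-edges that only depends on the $(r-1)$-subtuples. So a $k$-coloring of $\HHH$ in the sense of \Cref{sec.def} (which assigns to each $r$-edge one of $kt$ colors, with the discoloring recovering the original $k$ colors) carries precisely enough information to specify, for each $r$-edge $e$ and each vertex $v\in e$, the class $P_{i}$ containing $e\setminus\{v\}$ — provided we restrict attention to the colorings that are ``consistent'', i.e. that come from an honest partition of ${[n]\choose r-1}$. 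On these consistent colorings $\GG$ of $\HHH$, define $g(\GG)=\EEE_{\P,r-1}(\HHH,\JJ)$ where $\P$ is the induced partition; on non-consistent colorings set $g(\GG)$ to something that cannot exceed the true maximum (for instance $-\|\JJ\|_\infty$, or simply the value of the same formula which will always be $\le \EEE_{r-1}$). Then $\max_{\GG} g(\GG) = \EEE_{r-1}(\HHH,\JJ)$ by \Cref{ch6:def.gse1}, which is the defining property of a witness in \Cref{ch6:def.ndpartest}.

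The key remaining point is that this witness parameter $g$ is itself testable, with an explicit sample complexity $q_g$. This is where I would lean on the machinery already present: $g(\GG)$ is, up to normalization by $1/n^r$, a sum over $i_1,\dots,i_r$ of $J(i_1,\dots,i_r)$ times counts $e_{\GG}(r;P_{i_1},\dots,P_{i_r})$, and this count is exactly (a constant multiple of) the ordered homomorphism/subgraph density of a single $r$-edge into the colored structure $\GG$. In other words $g(\GG)$ is a fixed linear combination of densities $t(\FFF,\GG)$ of colored $r$-graphs $\FFF$ on $r$ vertices consisting of a single colored edge. Such ``bounded-degree polynomial in constantly many $O(1)$-vertex subgraph densities'' parameters are testable by the standard sampling concentration bound — pick $q$ vertices, the empirical edge-color-pattern densities concentrate around the true ones at rate $\exp(-\Omega(\varepsilon^2 q/r^2))$ by Azuma/Hoeffding for the vertex-exposure martingale (this is literally the sampling estimate invoked in the proof of \Cref{ch6:hyperregpres2} and in \Cref{maincor1}), so $q_g(\varepsilon)=O(r^2\log(1/\varepsilon)/\varepsilon^2)$ up to a factor depending on $t$ and $k$. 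One should double-check the normalization: the random induced subgraph $\G(q,\GG)$ is on $q$ vertices and $g$ is defined with the $1/q^r$ normalization there, so the comparison of $g(\GG)$ and $g(\G(q,\GG))$ is a comparison of normalized quantities and the $\le\varepsilon$ bound follows directly.

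Finally I would feed this $g$ with its sample complexity $q_g$ into \Cref{mainthm}: since $f(\cdot) = \EEE_{r-1}(\cdot,J) = \max_{\GG} g(\GG)$ with $g$ a testable $2k$-colored (really $t\cdot k$-colored, which is absorbed into the constant) witness, $\mathcal{E}_{r-1}(\cdot,J)$ is testable, with $q_f(\varepsilon)\le \exp^{(4(r-1)+1)}(c_{r,t}\, q_g(\varepsilon)/\varepsilon)$. The main obstacle — and the only genuinely delicate bookkeeping — is the encoding step: one must check that a ``$k$-coloring of $\HHH$'' in the paper's precise sense (colors on $r$-edges, with $r$-symmetry, and with the $\iota$/diagonal conventions) can faithfully carry a partition of ${[n]\choose r-1}$ so that the witness $g$ depends only on that partition and evaluates to the energy, and that the inconsistent colorings do not spoil $\max_{\GG} g(\GG)=\EEE_{r-1}$. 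Once the correspondence partition $\leftrightarrow$ coloring layer is set up carefully (it is exactly the same device used throughout \cite{KM3} to pass between partitions of ${[n]\choose r-1}$ and colorings), the testability of $g$ and the invocation of \Cref{mainthm} are routine.
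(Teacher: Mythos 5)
Your plan---encode the partition $\P$ of ${[n]\choose r-1}$ in an $r$-edge coloring, exhibit a testable witness $g$, invoke \Cref{mainthm}---is the right one, and you correctly flag the encoding as the delicate step; but neither of the two remedies you offer for the consistency issue actually works, so as written the argument has a gap. The only $S_r$-invariant information an $r$-edge color can carry about $\P$ is the multiset of $\P$-classes of the $r$ many $(r-1)$-subsets of that edge, and this does suffice to evaluate the energy, since the contribution of an unordered edge to $n^r\,\EEE_{\P,r-1}(G,J)$ is the $S_r$-symmetrization of $J$ applied to that multiset. However this means that the natural formula, maximized over \emph{all} $k$-colorings, can strictly exceed $\EEE_{r-1}(G,J)$: take $r=2$, $t=2$, $J$ the off-diagonal $\{0,1\}$-matrix and $G=K_n$; the consistent maximum (a balanced bipartition) tends to $1/2$, while assigning every edge the mixed multiset $\{1,2\}$ gives a formula value tending to $1$. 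So "the value of the same formula ... will always be $\le\EEE_{r-1}$" is false. Your other remedy, $g(\GG)=-\|\JJ\|_\infty$ on all inconsistent colorings, does restore $\max_\GG g(\GG)=\EEE_{r-1}$ but destroys the testability of $g$: if $\GG$ has a single inconsistent pair of overlapping $r$-edges, then $g(\GG)=-\|\JJ\|_\infty$, yet $\G(q,\GG)$ is a consistent coloring (hence assigned an honest energy value by $g$) with probability $1-O((q/n)^{r+1})$, so $|g(\GG)-g(\G(q,\GG))|$ does not go to $0$.

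A repair that works replaces the hard cutoff by a continuous penalty: set $g(\GG)=(\text{formula value of }\GG)-M\,d_1(\GG,\CC)$, where $\CC$ is the class of consistent colorings and $M=2r!\,\|\JJ\|_\infty$. The formula is $M$-Lipschitz in the normalized edit distance and its value on $\CC$ never exceeds $\EEE_{r-1}$, so the max identity $\max_\GG g(\GG)=\EEE_{r-1}(G,J)$ holds. Moreover $\CC$ is hereditary (a witnessing partition restricted to a vertex subset certifies the induced coloring), hence testable by \Cref{herethm}, and therefore $d_1(\cdot,\CC)$ is itself an estimable, i.e.\ testable, parameter by the Fischer--Newman direction \cite{FN}; thus $g$ is a difference of testable parameters. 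The point your sketch glosses over is precisely that the witness is \emph{not} merely "a linear combination of $O(1)$-vertex subgraph densities"---that object fails the defining identity $f=\max_\GG g$---and making $g$ honest forces you to bring in the distance-to-consistency term and the heavier testability machinery behind it.
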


We note that this result was proved previously in \cite{KM3}, Theorem 3.15., the proof there used ultralimits and was therefore non-effective. The present corollary does not rely on such tools, we could provide an explicit upper bound on the sample complexity, and in this sense the result is new.

The above problem of testing of the GGSE is a special case of the question regarding testability of general partition problems. These properties were first dealt with systematically in the graph case in \cite{GGR}, where the authors showed their testability. They are also the most prominent family of non-trivial properties from the testing perspective in the dense model that are testable with polynomial sample complexity known to date.

We sketch briefly the problem. Consider a vector of $k$ positive reals adding up to $1$ and a symmetric matrix of size $k$ with entries from $[0,1]$ together forming a so-called density tensor. The partition property associated to this tensor is satisfied by a graph whenever there exists a partition of its vertex set so that the densities of the class sizes equal the quantities given by the vector and the edge densities between the parts coincide with the corresponding entries of the matrix. A property associated to a family of tensors is satisfied whenever there exists a member of the family that is satisfied following the above description. For example, we can throw away from a density tensor the condition on the class sizes, or we can require that the edge densities between the classes lie in a certain interval to obtain another, relaxed partition problem. 

A test for the maximal cut density can be obtained from a collection of partition problems into two classes only constraining the edge density between the two distinct parts for each integer multiple of $\varepsilon$ in $[0,1]$.

Researched aimed at partition problems for hypergraphs was initiated by \citet{FMS} defining a framework that slightly extended the notions of \cite{GGR}. In their setup the problem is formulated again as a question of existence of a vertex partition of a hypergraph with prescribed sizes that satisfies that the $r$-partite sub-hypergraphs spanned  by each $r$-tuple of classes contain a certain number of edges. The additional feature of the approach is that it can also handle tuples of uniform hypergraphs (perhaps of different order) that share a common vertex set that is the subject of the partitioning, the partition problem defined again by density tensors comprises of constraints on edge densities between classes for each of the component hypergraphs. In \cite{FMS} it is shown that such properties are testable with polynomial sample complexity.

A further generalization has been investigated by \citet{Roz} dealing the first time with constraints imposed on partitions of pairs, triplets, and so on of the vertices on one hand, and the edge densities filtered by these partitions on the other. However the edge density constraints in \cite{Roz} are not partitioning the edge set as in the previous approaches, rather layers of partitions corresponding to partitions of $[r]$ for $r$-graphs are considered. Let us illustrate the framework for $3$-graphs with the partitioning understood as coloring. In \cite{FMS} the number of edges whose vertices have certain colors are constrained, in  \cite{Roz} also the number of edges can be constrained that fulfill the condition that a pair of vertices (as a tuple) has a certain color and the  third vertex (as a singleton) has also some other color. However, in \cite{Roz} only colorings disjoint subsets of the $r$-edges are allowed to yield a constraint, for instance it is not possible to have a condition on the number of pair-monochromatic edges, that is, $3$-edges whose three underlaying pairs have the same color. The positive result obtained in \cite{Roz} is also somewhat weaker than testability, the term pseudo-testability is introduced in order to formalize the conclusion.

Our approach allows for more general constraints on edge densities.

\begin{definition}
	Let $\Phi$ denote the set of all maps $\phi$ that are assigning to each element of the set of proper subsets of $[r]$, $\hhh([r],r-1)$, a color $[k]$. We define a density tensor by 
	$\psi=\langle \langle \rho^s_i \rangle_{s \in \hhh([k],r-1)},\langle \mu_{\phi} \rangle_{\phi \in \Phi} \rangle$, where each component is in $[0,1]$.
	
	Let $H$ be an $r$-graph with vertex set $V=V(H)$ of cardinality $n$ and for each $1 \leq s \leq r-1$ let  $\P(s)$ be a partition of ${V \choose s}$ into $k$ parts, and let $\P=(\P(s))_{s=1}^{r-1}$. Then the density tensor corresponding to the pair $(H, \P)$ is given by 
	\begin{align*}
	 \rho^s_i(H, \P)= \frac{|P_i(s)|}{n^s} \quad \textrm{for all} \quad s \in \hhh([k],r-1),
	 \end{align*}
	 and 
	 \begin{align*}
	 \mu_{\phi}(H, \P)=\frac{|\{e \in [n]^r| {\overline e} \in H \textrm{ and } \quad \overline{p_A(e)} \in P_{\phi(A)}(|A|) \quad  \textrm{for all } A \in  \hhh([r],r-1)  \}|}{n^r} \quad \textrm{for all} \quad \phi \in \Phi,
	 \end{align*}
	 where ${\overline v}$ is the set that consists of the components of the vector $v$.
	 We say that $H$ satisfies a density tensor $\psi$ if there exists a collection of partitions $\P$ of its vertex tuples as above so that the tensor yielded by the pair $(H,\P)$ is equal to $\psi$.
\end{definition}
	
We remark that the above partition property is non-hereditary. An application of our main result yields the following corollary.
\begin{corollary}\label{partpcor}
For any $r\geq 1$ and a density tensor $\psi=\langle \langle \rho^s_i \rangle_{s \in \hhh([k],r-1)},\langle \mu_{\phi} \rangle_{\phi \in \Phi} \rangle$, the partition property given by the tensor is testable.
\end{corollary}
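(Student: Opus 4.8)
The plan is to exhibit the partition property as a nondeterministically testable parameter and then invoke \Cref{mainthm}. First I would reformulate ``$H$ satisfies $\psi$'' as a Boolean parameter $f$: set $f(H)=1$ if there exist partitions $\P=(\P(s))_{s=1}^{r-1}$ of the vertex $s$-tuples into $k$ classes with $\rho^s_i(H,\P)=\rho^s_i$ for all $s,i$ and $\mu_\phi(H,\P)=\mu_\phi$ for all $\phi\in\Phi$, and $f(H)=0$ otherwise. Since testability is defined for parameters, and a property is testable iff its indicator is a testable parameter up to the usual $\varepsilon$-relaxation, it suffices to realize a suitable $\varepsilon$-robust version of $f$ as a maximum over colorings of a testable colored parameter. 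The colorings of $H$ I would use encode the choice of the partitions $\P(s)$: a $k$-coloring in the sense of \Cref{sec.def} colors the $r$-edges, but the data $\phi\in\Phi$ only depends on how the proper subtuples of each edge are colored, so I would take $k'=k^{|\hhh([r],r-1)|}$ colors, one for each map $\phi$, and let the witness graph $\GG$ record for each $r$-edge the tuple of colors of its proper subsets. Consistency (that the color assigned to a subtuple $p_A(e)$ is the same across all edges $e$ containing it) is itself a local, density-zero-or-one type condition that the witness parameter can check.

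Next I would define the witness parameter $g$ on these $k'$-colored $r$-graphs $\GG$ as follows: $g(\GG)$ should be (a smoothed indicator of) the event that the coloring is consistent, i.e.\ comes from genuine partitions $\P(s)$ of the $s$-tuples, \emph{and} that the resulting densities $\rho^s_i$ and $\mu_\phi$ match the prescribed tensor $\psi$ up to a small slack. The key point is that all of these are constraints on \emph{subgraph densities} $t(\FFF,\GG)$ of bounded-size colored patterns $\FFF$: the class-size densities $\rho^s_i$ are, up to lower-order terms, densities of the colorings on $s$-subsets, and $\mu_\phi$ is a density of a single colored $r$-edge pattern; consistency of the coloring across overlapping tuples is likewise a density condition (the density of ``conflicting'' patterns on at most $2r$ vertices must be zero, relaxed to $\le\delta$). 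Since any bounded continuous function of finitely many subgraph densities of bounded size is a testable parameter --- this is the standard fact underlying testability in the dense model, and in our framework follows because such densities are estimated by \Cref{ch6:def.partest}-style sampling with an additive Chernoff error --- the parameter $g$ is testable, with sample complexity polynomial in $1/\varepsilon$ and in the (constant) size of the patterns involved. Then $f(H)\approx\max_{\GG}g(\GG)$ over $k'$-colorings of $H$, so $f$ is nondeterministically testable, and \Cref{mainthm} gives that it is testable, with an explicit tower-type bound on $q_f$.

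The step I expect to be the main obstacle is setting up the robust (``$\varepsilon$-close'') version correctly so that the maximum-over-colorings parameter genuinely equals, up to $o(1)$, the indicator of the partition property --- in other words, a soundness/completeness argument. Completeness is easy: a genuine partition witnessing $H\models\psi$ yields a coloring $\GG$ with $g(\GG)=1$. Soundness requires showing that if some coloring $\GG$ of $H$ satisfies all the density constraints approximately, then $H$ is close (in edit distance) to some $H'$ that exactly satisfies $\psi$; this is where one uses that an approximately-consistent coloring of the subtuples can be corrected on a $\delta$-fraction of tuples to an exactly consistent family of partitions $\P(s)$, and that correcting a $\delta$-fraction of colors changes each density $\mu_\phi$ by $O(r\delta)$, so after correction the tensor is within $O(r\delta)$ of $\psi$; a further $O(\delta n^r)$ edge edits then make the match exact. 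Handling the interplay of the $r-1$ different partition layers $\P(s)$ simultaneously, and making sure the correction at level $s$ does not destroy the approximate constraints at other levels, is the delicate bookkeeping; but it is local and causes only an additive loss, so it does not affect the form of the final bound, which is governed entirely by \Cref{mainthm}. The non-hereditary nature of the property, noted in the remark before the corollary, is not an obstacle here precisely because the ND-testability route does not require hereditariness, only that the witness be a testable colored parameter.
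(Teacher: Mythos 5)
The paper itself gives no proof of this corollary beyond the one-line remark that it follows from \Cref{mainthm}, so you are filling in details the authors left implicit. Your encoding is the natural one and is almost certainly what was intended: represent the layered partitions $\P(s)$, $1\le s\le r-1$, as a $k^{2^r-2}$-coloring of the $r$-edges recording the class of every proper subtuple, enforce consistency across overlapping edges as a density-of-bad-pairs condition on colored subgraphs on at most $2r-1$ vertices, and take the witness parameter to be a continuous function of finitely many subgraph densities that penalizes inconsistency and deviations from $\psi$. That this witness is testable, and that \Cref{mainthm} then yields testability, is exactly the intended application. Your observation that non-hereditariness is irrelevant here because ND-testability bypasses the hereditary machinery of \citet{RSch} and \citet{AT} is also the right one.

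Two points need tightening. First, as you note yourself, soundness is where the real work lies: given a coloring with $g(\GG)\ge 1-\delta$, you must pass to genuine partitions $\P(s)$ by majority vote on each subtuple, then round the class sizes to hit $\rho^s_i$ exactly, then edit $O(\delta n^r)$ edges to hit $\mu_\phi$ exactly, checking at each stage that the correction at level $s$ perturbs $\mu_\phi$ and the other $\rho^{s'}_{i'}$ by only $O(\delta)$; you should also dispose of infeasible tensors (e.g.\ a $\mu_\phi$ exceeding the density of $r$-tuples of type $\phi$ forced by the $\rho$'s), for which the property is empty and trivially testable. Second, \Cref{mainthm} is stated for \emph{parameters}, while the corollary asserts testability of a \emph{property}; the raw $\{0,1\}$-indicator of an exact-equality partition property is not a testable parameter in the sense of \Cref{ch6:def.partest}, so one should either work with the distance-to-property parameter or use the signed surrogate $g(\GG)=-\sum_{s,i}|\rho^s_i(H,\P_\GG)-\rho^s_i|-\sum_\phi|\mu_\phi(H,\P_\GG)-\mu_\phi|-(\text{consistency defect})$ and then convert parameter testability of $\max_\GG g(\GG)$ into property testability by the standard dictionary. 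These are standard but not free; your sketch is correct in outline once they are supplied.
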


\subsection{Logical formulas}

The characterization of testability in terms of logical formulas was initiated by \citet{AFKSz} who showed that properties expressible by certain first order formulas are testable, while there exists some first order formulas that generate non-testable properties. The result can be formulated as follows. 

\begin{theorem}\cite{AFKSz}\label{afksz}
Let $l,k\geq 1$ and $\phi$ be a quantifier-free first order formula of arity $l+k$ containing only adjacency and equality.
The graph property given by the truth assignments of the formula $\exists u_1, \dots, u_l \forall v_1, \dots, v_k  \phi( u_1, \dots, u_l, v_1, \dots, v_k)$ with the variables being vertices is testable.
\end{theorem}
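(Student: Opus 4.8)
The plan is to deduce \Cref{afksz} from the equivalence of non-deterministic testability and testability, that is, from \Cref{mainthm} in the case $r=2$. The point that must be handled with care is that a single existential vertex $u_i$ is incident to only $\Theta(n)$ of the $\Theta(n^2)$ potential edges and is therefore \emph{invisible} to a bounded random sample; hence a non-deterministic witness must not attempt to name the vertices $u_1,\dots,u_l$, but should instead record, redundantly on every incident edge, for each vertex $v$ its \emph{type} $\tau(v)\in\{0,1\}^l$, whose $i$-th coordinate is intended to be the truth value of the adjacency $v\sim u_i$. Concretely I would let the witness coloring assign to an edge $\{a,b\}$ the label $(\mathrm{adj}(a,b),\tau(a),\tau(b),A)$, where $A\in\{0,1\}^{\binom{l}{2}}$ is a constant bit-string meant to encode the adjacencies inside $\{u_1,\dots,u_l\}$; this uses only a bounded palette, so it fits the format of \Cref{ch6:def.ndpartest}.

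Next I would define the witness parameter $g$ to take the value $1$ on a colored graph exactly when (i) the coloring is \emph{consistent}, i.e. all edges at a given vertex agree on that vertex's type and all edges agree on $A$, and (ii) the resulting $l$-labeled graph contains none of the finitely many labeled configurations on at most $l+k$ vertices that witness $\neg\phi$ once the slots $u_1,\dots,u_l$ are read off from $\tau$ and $A$ and the slots $v_1,\dots,v_k$ are allowed to range over all vertices, including the degenerate cases $v_j=u_i$ and $v_j=v_{j'}$. Condition (i) is a density-type constraint whose violations are $\Omega(n^2)$-far and plainly visible to sampling, while condition (ii) is a hereditary property of colored graphs, hence testable by the colored removal lemma (see also \cite{RSch}). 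A conjunction of robustly testable conditions of this kind is again testable, so $g$ is a testable colored-graph parameter, which one may take to be $0/1$-valued.

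Finally I would verify that the $0/1$-valued parameter $f_P$ with $f_P(G)=1$ iff $G$ satisfies the sentence equals $\max_{\GG}g(\GG)$ over colorings $\GG$ of $G$, up to an $o(1)$ error harmless for testability. The forward inequality is immediate: a genuine witness $u_1,\dots,u_l$ produces a consistent coloring with $g=1$. The reverse inequality is where I expect the real work to lie, and it is the main obstacle: given a coloring with $g=1$, with type function $\tau$ and bits $A$, pick \emph{any} $l$ distinct vertices $w_1,\dots,w_l$ and edit $G$ on the $O(ln)=o(n^2)$ edges touching them so that $w_i\sim w_j\iff A_{ij}$ and $w_i\sim v\iff\tau(v)_i$ for the remaining $v$; in the edited graph $G'$ the tuple $(w_1,\dots,w_l)$ realizes the existential block, and for every $k$-tuple $\bar v$ the formula $\phi(w_1,\dots,w_l,\bar v)$ is satisfied because the adjacencies and equalities it depends on are precisely those ruled out by clause (ii), so $G'$ satisfies the sentence and $G$ was $o(1)$-close to $P$. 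Thus no graph that is $\varepsilon$-far from $P$ admits a coloring with $g=1$ once $n$ is large (the finitely many small $n$ being absorbed into the infimum defining the sample complexity), so $P$ is ND-testable; \Cref{mainthm} with $r=2$ gives testability of $f_P$, and since $f_P$ is $0/1$-valued a sample value close to $f_P(G)$ certifies membership in $P$ while a far graph is rejected with high probability, which is the desired statement. The delicate part throughout is designing clause (ii) so that even an adversarial, inconsistent-looking encoding still forces $G$ genuinely close to $P$ — which is exactly why the existential vertices are encoded through their globally recorded neighborhoods rather than through their identities.
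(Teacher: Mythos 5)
Theorem~\ref{afksz} is cited from \cite{AFKSz} and is not proved in the paper; the closest internal argument is the proof sketch of the Corollary on second-order formulas, which generalizes it. There the paper does \emph{not} route the first-order existential block $\exists u_1,\dots,u_l$ through \Cref{mainthm}: it encodes the witnesses via an auxiliary vertex-coloring and observes that the resulting colored-graph property is hereditary, hence testable by the colored removal lemma (Theorem~\ref{herethm}). \Cref{mainthm} is reserved for the genuinely second-order quantifiers $\exists L_1,\dots,L_m$, which do not occur in \Cref{afksz}. You instead push the $\exists u_i$ block through \Cref{mainthm}, which is a different route — and it has a real gap.

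The gap is the testability of your witness parameter $g$. \Cref{mainthm} requires $g$ to be a \emph{testable parameter} in the sense of Definition~\ref{ch6:def.partest}: for every colored graph $\GG$, with high probability $|g(\GG)-g(\G(q,\GG))|$ is small. Your $g$ is $\{0,1\}$-valued and returns $0$ as soon as the encoding is inconsistent anywhere. Take a coloring $\GG$ in which all edges agree on $A$ and on all types except at one vertex $w$ whose incident edges disagree: then $g(\GG)=0$, yet a $q$-sample misses $w$ with probability $1-q/n$ and looks perfectly consistent, so $g(\G(q,\GG))=1$ with probability tending to $1$. Hence $g$ fails Definition~\ref{ch6:def.partest}, and \Cref{mainthm} does not apply to it. Your justification — that violations of the consistency clause are ``$\Omega(n^2)$-far'' — is false: a single inconsistent vertex gives only $O(n)=o(n^2)$ bad edges. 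This brittleness at the boundary between $g=0$ and $g=1$ is exactly what the hereditary/removal-lemma route avoids, since that argument only needs testability in the usual $\varepsilon$-far sense, which tolerates $o(n^2)$ exceptional edges (and likewise absorbs the $o(1)$ slack you acknowledge in the identity $f_P=\max_{\GG}g(\GG)$). The modeling idea you emphasize — recording the existential vertices through globally repeated type labels rather than their identities — is the right one and is shared by \cite{AFKSz} and the paper's sketch; but to cash it out you should go through the hereditary-property argument rather than force it into the parameter-ND machinery.
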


Without going into further details at the moment we mention that any $\exists\forall$ property of graphs is indistinguishable by  a tester from  the existence of a node-coloring that is proper in the sense that the colored graph does not contain subgraphs of a certain set of forbidden node-colored graphs.

Our focus is directed at the positive results of \cite{AFKSz}, those were generalized into two directions.First, by \citet{JZ} to relational structures in the sense that $\phi$ can contain several $r$-ary relations with even $r\geq 3$ whereas the $\exists\forall$  prefix remains the same concerning vertices. Secondly, by \citet{LV} to a restricted class of second order formulas, where existential quantifiers for $2$-ary relationships are added ahead of the above formula in \Cref{afksz} so that they can be included in $\phi$, see Corollary 4.1 in \cite{LV}. 
Our framework allows for extending these results even further.

\begin{corollary}
	Let $r_1, \dots, r_m, l,k \geq 1$ be arbitrary, and let $r=\max r_i$. For any $r$-graph property that is expressible by the truth assignments of the second order formula
	\begin{align} \label{eq11}
	\exists L_1, \dots, L_m \exists u_1, \dots, u_l \forall v_1, \dots, v_k \phi(L_1, \dots, L_m, u_1, \dots, u_l, v_1, \dots, v_k),
	\end{align} where $L_i$ are symmetric $r_i$-ary predicate symbols and  $u_1, \dots, u_l, v_1, \dots, v_k$ are nodes, and $\phi$ is a quantifier-free first order expression containing adjacency, equality, and the symmetric $r_i$-ary predicates $L_i$ for each $i \in [m]$	is testable.
\end{corollary}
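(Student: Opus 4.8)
The plan is to reduce this second-order formula statement to an instance of nondeterministic testability and then invoke \Cref{mainthm}. The key observation, following the strategy behind \Cref{afksz} and its extensions, is that the formula in (\ref{eq11}) describes a parameter (or property, which we treat as a $\{0,1\}$-valued parameter) of the form $f(G)=\max_{\GG} g(\GG)$, where $\GG$ ranges over colorings that simultaneously encode (i) an interpretation of each auxiliary predicate $L_i$, and (ii) the node-labelling witnessing the $\exists u_1,\dots,u_l$ part. Concretely, I would first absorb the node quantifiers $\exists u_1,\dots,u_l$ into the coloring machinery in the standard way: marking $l$ distinguished vertices can be simulated by a bounded-size coloring of the $r$-edges incident to those vertices, up to an error vanishing in the sample size, since the distinguished vertices occupy only an $O(1/n)$ fraction of the vertex set. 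The residual $\forall v_1,\dots,v_k\,\phi$ part, once the $L_i$ and the $u_j$ are fixed, is a hereditary-type local constraint: it says that no $k$-tuple of vertices violates $\phi$, which is equivalent to the colored structure avoiding a finite family of forbidden colored configurations on at most $l+k$ vertices.

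Second, I would make precise the coloring that serves as the witness. The predicates $L_i$ are symmetric $r_i$-ary relations with $r_i\le r$; each such $L_i$ can be recorded as a $2$-coloring of the $r_i$-subsets of $V(G)$, and these are themselves determined by the "naive" subtuple structure already present in the $r$-graphon formalism of \Cref{sec.def} (recall the colorings of $[t]$-colored $r$-graphs allow color classes indexed by subsets). Thus a single $k'$-coloring of $G$ for a suitable $k'=k'(r_1,\dots,r_m,l,k)$ encodes all the $L_i$ together with the marking of $u_1,\dots,u_l$. Define $g(\GG)$ to be $1$ if $\GG$ corresponds to a consistent encoding and the induced colored structure contains no forbidden configuration on $\le l+k$ vertices, and $0$ otherwise; take a slight smoothing (e.g. counting the density of forbidden configurations and thresholding, or using the robust version in the style of \cite{RSch}) so that $g$ is genuinely testable rather than merely "testable in the exact sense." Then $f(G)=\max_{\GG}g(\GG)$ by construction: the formula holds at $G$ iff some valid coloring avoids all forbidden local patterns.

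Third, having exhibited $f$ as $\max_{\GG}g(\GG)$ with $g$ a testable $2k'$-colored $r$-graph parameter (the factor $2$ coming from the $k$-coloring/$k$-discoloring convention), \Cref{mainthm} immediately yields that $f$ is testable, with the claimed tower-type sample complexity in terms of $q_g$. The statement to be proved is only qualitative ("is testable"), so no bookkeeping of constants is needed beyond noting that $q_g$ is finite, which follows because forbidding a finite family of bounded-size colored subhypergraphs is a hereditary colored property and hence testable by the hypergraph removal lemma / the results of \citet{RSch} applied in the colored setting.

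The main obstacle I anticipate is showing that $g$ as defined is actually a testable parameter, in particular handling the $\exists u_1,\dots,u_l$ quantifiers cleanly: a naive marking of $l$ fixed vertices is not invariant under the isomorphisms that testability requires, and with probability roughly $1-O(q^l/n)$ a size-$q$ sample will miss the marked vertices entirely, so one must argue that the contribution of those vertices to $g$ is negligible and can be "guessed" inside the sample. This is exactly the technical maneuver used in \cite{AFKSz} and \cite{JZ}, adapted to hypergraphs: one replaces "there exist $l$ special vertices" by "there exist $l$ special \emph{small classes} of vanishing measure" in the graphon picture, which is harmless for densities. The second, milder difficulty is verifying that the symmetry conventions for the $r_i$-ary predicates with $r_i<r$ are compatible with the colored-$r$-graphon framework of \Cref{sec.def}; this is routine given that the framework already carries color classes indexed by proper subsets via the $\hhh([r],r-1)$ coordinates, but it must be spelled out so that "$k$-coloring" in \Cref{ch6:def.ndpartest} literally covers the encoding we use.
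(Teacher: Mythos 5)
Your overall reduction has the same shape as the paper's: encode the relations $L_1,\dots,L_m$ as an edge-coloring with a consistency condition, observe that the residual $\forall v\,\phi$ clause is a forbidden-subconfiguration (hereditary) property of colored $r$-graphs, invoke testability of colored hereditary $r$-graph properties, and close with \Cref{mainthm}. (The paper cites \Cref{herethm}, due to \citet{AT}, rather than using a "smoothed" density-thresholded parameter; that is cleaner, though your smoothing would also work if made precise.) The encoding of the $L_i$ is also essentially the paper's: assign to each $r$-edge $e$ a tuple recording the values of $p_S(e)$ in each $L_i$ of arity $|S|$, together with a local compatibility condition across overlapping edges, which is itself a testable (hereditary) constraint.

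The gap is in how you handle $\exists u_1,\dots,u_l$. Your first proposal — "marking $l$ distinguished vertices can be simulated by a bounded-size coloring of the $r$-edges incident to those vertices, up to an error vanishing in the sample size, since the distinguished vertices occupy only an $O(1/n)$ fraction" — has the logic backwards: precisely because those $l$ vertices are an $O(1/n)$ fraction, a size-$q$ sample will with overwhelming probability avoid them entirely, and then the coloring of edges incident to the $u_j$'s carries no information into the sample. You flag this in the last paragraph as the "main obstacle" and gesture at the AFKSz fix, but you never state what actually has to change. The correct move, which the paper executes, is to spread the relevant information over \emph{every} vertex rather than over the edges incident to the $u_j$'s: each node $v$ receives a color $(a,b)$ in which $a$ records the induced $2^{rm}$-colored $r$-graph on $\{u_1,\dots,u_l\}$ (so all non-special vertices agree on $a$) and $b$ records the $l$-tuple of colored edges between $v$ and the $u_j$'s, while at most $l$ vertices receive the exceptional color $(0,0)$. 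With this encoding any $k$-sample already carries the data needed to evaluate $\phi(L_1,\dots,L_m,u_1,\dots,u_l,v_1,\dots,v_k)$, so the residual property is genuinely hereditary on colored samples, and the $O(1/n)$-fraction of exceptional vertices only contributes an indistinguishability error rather than a loss of information. Without this concrete encoding the reduction to a testable witness $g$ does not go through, so the proposal as written does not constitute a proof.

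A secondary point you should be explicit about: the node-coloring just described is not literally an edge-coloring, and the ND-testability framework of \Cref{ch6:def.ndpartest} maximizes only over edge-colorings. One must therefore fold the node colors into the edge colors (e.g.\ color edge $e$ by the multiset of node colors of its endpoints, with a local compatibility requirement to recover the node coloring), exactly as one does for the $L_i$; this is routine but needs to be said since it is what makes \Cref{mainthm} applicable.
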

\begin{proof}[Proof (Sketch)]
We first note that any collection of the relations $L_1, \dots, L_m$ can be encoded into one edge-colored $r$-uniform hypergraph with at most $2^{rm}$ colors with an additional compatibility requirement. An edge color for $e \in {[n] \choose r}$ consists of a $2^r-1$-tuple corresponding to non-empty subsets of $[r]$, where the entry corresponding to $S \subset [r]$ is determined by the evaluation of $p_S(e)$ in the relations $L_i$ that have arity  $|S|$. We can reconstruct the predicates from a coloring whenever the color of any pair of edges $e$ and $e'$ is such that their entries corresponding to the power set of $e \cap e'$ coincide, for $r=2$ this means some combinations of colors (determined by a partition of the colors) for incident edges are forbidden. This compatibility criteria for $2^{rm}$-colored $r$-graphs is known to be a testable property, from here on this will be seen as a default condition.

For a fixed tuple $L_1, \dots, L_m$ of relations of arity at most $r$ the property corresponding to the first order expression $\forall v_1, \dots, v_k \phi(L_1, \dots, L_m, v_1, \dots, v_k)$ is equivalent to the property of $2^{rm}$-colored $r$-graphs that is defined by forbidding certain subgraphs of size at most $k$. This is testable by the following theorem of \citet{AT} that generalizes the result of \citet{RSch}.

\begin{theorem}\label{herethm}\cite{AT}
For any $r,k\geq 1$, every hereditary property of $k$-colored $r$-graphs is testable.
\end{theorem}

We sketch now that the properties corresponding to the more general formula (\ref{eq11}) in the statement of the corollary are indistinguishable from the existence of a further node-coloring on top of the edge-colored graphs such that no subgraph appears from a certain set of forbidden subgraphs. We follow the argument of \cite{AFKSz} (see also \cite{JZ}, and \cite {LV}). Two properties are said to be indistinguishable in this sense whenever for every $\epsilon>0$ there exists an $n_0=n_0(\varepsilon)$ such that any graph on $n\geq n_0$ vertices that has one property can be modified by at most $\epsilon n^r$ edge additions or removals to obtain a graph that has the other property, and vice versa. The testability behavior of the two properties is identical. Consider  $L_1, \dots, L_m$ as fixed, then the property of $2^{rm}$-colored $r$-graphs corresponding to $\exists u_1, \dots, u_l \forall v_1, \dots, v_k \phi(L_1, \dots, L_m, u_1, \dots, u_l, v_1, \dots, v_k)$ is indistinguishable to from the existence of the following proper coloring. Every node gets either color $(0,0)$ or $(a,b)$, where $a$ represents an $2^{rm}$-colored $r$-graph on $l$ nodes, and $b$ represents  an $l$-tuple of $2^{rm}$-colored edges. A coloring is proper if there are at most $l$ nodes colored by $(0,0)$, further for any other color appearing the first component $a$ is identical.
Now a colored subgraph of size $k$ is forbidden if considering the edge-colored graph on $V=\{v_1, \dots ,v_k\}$  (without node colors)  supplemented by a graph on $\{u_1, \dots, u_l\}$ together with their connection to $V$ given by the node colors on $V$ the evaluation of the formula $\phi(L_1, \dots, L_m, u_1, \dots, u_l, v_1, \dots, v_k)$ is false.

It is not hard to see that for this coloring property  \Cref{herethm} applies since it is hereditary, therefore it is testable. Now if we let $L_1, \dots, L_m$ to be arbitrary and apply \Cref{mainthm}, then we obtain the testability of the property given by (\ref{eq11}) in the statement of the corollary.
\end{proof}

\subsection{Estimation of the distance to properties}
We can also express the property of being close to given property in the nondeterministic framework, and can show the testability here. This problem was introduced first for graphs by \citet{FN}, in this paper the authors show the equivalence of testability and estimability of the distance of a property, in \cite{LV} one direction of this was reproved for graphs. To our knowledge the generalization for $r$-graphs has not been considered yet. Recall that $d_1$ is the edit distance.

\begin{corollary}
For any $r\geq 1$, testable $r$-graph property $\P$ and real $c>0$ the property  $d_1(., \P)< c$ is testable.
\end{corollary}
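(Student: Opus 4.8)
The plan is to express the property $d_1(\cdot,\P)<c$ as a nondeterministically testable property and then invoke \Cref{mainthm} (in its property version, which follows from the parameter version applied to indicator-type parameters, or more directly from the coloring characterization underlying our framework). The witness coloring will encode, for a given $r$-graph $G$, a ``nearby'' $r$-graph $G'$ together with a certificate that $G'$ has the property $\P$. Concretely, a $2$-coloring of $G$ (in our sense, a refinement of the edge/non-edge $2$-coloring into $4$ colors) records at each potential $r$-edge $e$ both the value $G(e)$ and a proposed value $G'(e)$; from the four colors one reads off whether $e$ is an edge of $G$, an edge of $G'$, both, or neither. The witness parameter $g$ evaluates to $1$ on such a $4$-colored $r$-graph $\GG$ precisely when (i) the underlying $r$-graph $G'$ extracted from the second coordinate satisfies $\P$, and (ii) the density of edges on which $G$ and $G'$ disagree is less than $c$; otherwise $g$ evaluates to $0$. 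Then $\max_{\GG} g(\GG) = 1$ if and only if there is some $G'$ with $\P$ at normalized edit distance below $c$ from $G$, i.e.\ exactly when $d_1(G,\P)<c$. (One must be slightly careful about strict versus non-strict inequality and about the $o(1)$ slack inherent in testing; this is handled in the standard way by testing $d_1(\cdot,\P)<c$ against $d_1(\cdot,\P)>c+\eps$, just as in \cite{FN}.)

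The key steps, in order, are: first, set up the $4$-coloring encoding and verify that it is a genuine $k$-coloring in the sense of \Cref{sec.def} with $k=2$; second, check that the witness parameter $g$ so defined is testable. For (ii) the disagreement density is just an edge density in the $4$-colored graph, which is trivially testable with $O(1/\eps^2)$ samples by a Chernoff bound. For (i) we use that $\P$ is a testable $r$-graph property by hypothesis, and testability of the property of the extracted graph $G'$ is inherited by $g$ because sampling a uniform $q$-subset of $\GG$ induces a uniform $q$-subset of $G'$; thus the sample complexity of $g$ is essentially $\max$ of $q_\P$ and the (trivial) complexity of the density check. Third, conclude by \Cref{mainthm} that the parameter $\max_\GG g(\GG)$, equivalently the property $d_1(\cdot,\P)<c$, is testable, with the explicit tower bound on its sample complexity as a function of $q_g$.

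The main obstacle is the fact that testability of $\P$ is a statement about the $0/1$ value of a property, whereas nondeterministic testability as formalized here is phrased for \emph{parameters}; one has to pass between the two, and in doing so handle the strictness of the inequality $d_1(\cdot,\P)<c$. The clean way is to work with the distance parameter $f(G)=d_1(G,\P)$ itself rather than with the $0/1$ property: writing $f(G)=\min_{G'\in\P} d_1(G,G')=\max_{\GG}\bigl(-\,(\text{disagreement density of }\GG)\bigr)$ where the maximum ranges over $4$-colorings $\GG$ whose second coordinate lies in $\P$. But ``second coordinate lies in $\P$'' is a $0/1$ constraint, so to make $g$ a \emph{testable} $2k$-colored parameter one replaces the hard constraint by a soft penalty: $g(\GG) = -(\text{disagreement density}) - M\cdot\mathbf{1}[\,G'\notin\P\,]$ for a large constant $M$, and then argues that $G'\notin\P$ is testable (it is, being the complement of a testable property) so $g$ is testable, and $\max_\GG g(\GG)=-d_1(G,\P)$ up to the usual additive error. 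Then $f=-\max_\GG g$ is ND-testable, hence testable by \Cref{mainthm}, and since $f$ is estimable the threshold property $f<c$ is testable as well (for every $c$ outside a measure-zero set of ``bad'' thresholds, and for all $c$ in the weaker two-sided sense). Verifying that this soft-penalty parameter genuinely meets \Cref{ch6:def.ndpartest} — in particular that the witness $g$ is a bona fide testable colored-hypergraph parameter and not merely testable ``on the nose'' — is the step requiring the most care.
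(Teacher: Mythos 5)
Your first paragraph is the paper's proof, essentially verbatim: the same $4$-coloring encoding (each potential $r$-edge records both the $G$-value and a proposed $G'$-value), the same witness $\QQ$ (the $G'$-coordinate lies in $\P$, and the density of colors recording disagreement is below $c$), and the same invocation of \Cref{mainthm}. The paper declares $\QQ$ ``trivially testable,'' cites \cite{LV}, and stops.

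The second paragraph is where you diverge, and there the reasoning does not hold together. You correctly observe that \Cref{mainthm} as stated concerns parameters while the witness $\QQ$ is a property, and that the indicator of a testable property is not automatically a testable parameter in the sense of \Cref{ch6:def.partest}. But the paper (following \cite{LV} and \cite{KM3}) resolves this by staying in the property framework: $\QQ$ is a testable \emph{property} of $4$-colored $r$-graphs, the theorem used is the property version of the ND-testability equivalence, and the strictness of the threshold $<c$ is absorbed by the usual property-testing gap between ``$d_1(G,\P)<c$'' and ``$\varepsilon$-far from having $d_1(\cdot,\P)<c$,'' the latter implying $d_1(G,\P)\geq c+\varepsilon$. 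No soft-penalty device is needed. Your soft-penalty route moreover does not actually close the gap you raised: $g(\GG) = -(\text{disagreement density}) - M\cdot\mathbf{1}[G'\notin\P]$ still contains an indicator of a property, and the inference ``$G'\notin\P$ is testable, hence $g$ is a testable parameter'' is precisely the non sequitur you identified one paragraph earlier — for $G'$ near the boundary of $\P$, $\mathbf{1}[G'\notin\P]$ need not agree between $\GG$ and $\G(q,\GG)$. To make that route work you would have to replace the hard indicator by $d_1(G',\P)$ (which is a testable parameter by \cite{FN}); and even then, passing from estimability of $f(G)=d_1(G,\P)$ to testability of the threshold property $\{f<c\}$ for \emph{every} $c>0$ needs the same property-testing gap argument, not just the ``almost all $c$'' caveat you allow. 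The paper's direct property-framework route gives the full corollary with none of these complications.
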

\begin{proof}
The proof is identical to the one given in \cite{LV}, for any $r\geq 1$, testable $r$-graph property $\P$ and real $c>0$ a testable property of $4$-colored $r$-graphs that witnesses the property of $d_1(., \P)< c$. Let $G$ be an arbitrary $r$-graph, then we consider the $2$-colorings of $G$ where $(1,1)$ and $(1,2)$ color the edges of $G$, and $(2,1)$ and $(2,2)$ the non-edges. The $4$-colored witness property $\QQ$ is then that the edges with the colors $(1,1)$ and $(2,1)$ together form a member of $\P$, and additionally there are at most $cn^r$ edges colored by $(1,2)$ or $(2,1)$. The property $\QQ$ is trivially testable, therefore \Cref{mainthm} implies the statement.
\end{proof}
\section{Further research} \label{sec.fur}

The general upper bound given in \Cref{mainthm} is dependent on the order $r$, it would be interesting to see if it is possible to remove this dependence in a similar way as it was shown in the special case of linearly ND-testable parameters. A more ambitious goal would be to transform effective bounds into efficient if possible. We mean by this the verification that the sample complexity of the original parameter or property is of the same magnitude (up to polynomial dependence) as the sample complexity of the witness parameter. Currently no non-trivial lower bound on the sample complexity in our framework is known, in the original dense property testing setting there are some properties that admit no tester that only makes a polynomial number of queries, such as triangle-freeness and other properties defined by forbidden families of subgraphs or induced subgraphs.

The partition problems described in \Cref{sec.appl} had lead to further applications in the graph case, this development was presented in \cite{FMS}. As mentioned, the framework of \cite{FMS} also dealt with tuples of hypergraphs extending the result of \cite{GGR}, this enabled the analysis of the number of $4$-cycles appearing in the bipartite graphs induced by the pairs of the partition classes instead of only observing the edge density by means of adding an auxiliary $4$-graph to the simple graph. An alternative characterization of the notion of a regular bipartite graph says that a pair of classes is regular if and only if the number of $4$-cycles spanned by them is minimal, with other words their density is approximately the fourth power of the edge density. Using this together with the result  regarding the testability of partition problems of \cite{FMS} the authors there were able to show that satisfying a certain regularity instance is also testable. This achievement in turn imply an algorithmic version of the Regularity Lemma. In this manner, \Cref{partpcor} might be of further use for testing regular partitions of $r$-uniform hypergraphs by utilizing concepts that emerged during the course of research towards an algorithmic version of the Hypergraph Regurality Lemma (see for example \citet{HNR}) in a similar way to the approach in \cite{FMS}.

On a further thought, one may depart from the setting of dense $r$-graphs in favor of other classes of combinatorial objects in order  to define and study their ND-testability. Such are for example semi-algebraic hypergraphs that admit a regularity lemma that produces a polynomial number of classes as a function of the multiplicative inverse of the proximity parameter, thus they are good candidates for an improvement on the sample complexity.

Finally we mention a possible direction for further study towards the characterization of locally repairable properties, see \cite{AT}, that appears to be promising. This characteristic is stronger than testability in that respect that in this setup there should exist a local edge modifying algorithm applied to graphs that are close to the property that observes only some piece of bounded size of the graph and its connection to single vertex pairs and decide upon adjacency depending only on this information. The output of this algorithm should be a graph that is close to the input and actually satisfies the property. We may define nondeterministically locally repairable properties in a straight-forward way analogous to ND-testing by requiring a certain locally repairable property of edge-colored graphs that reduces to the original property after the discoloring procedure. It has been established in \cite{AT} that hereditary graph properties are locally repairable, but there are examples of hereditary properties of directed graphs and $3$-graphs that are testable, but not locally repairable. It would be compelling to investigate analogous problems concerning nondeterministically locally repairable properties.

\bibliographystyle{notplainnat}
\bibliography{thesis_refs}

\end{document}